\title{Elimination Distances, Blocking Sets, and Kernels for Vertex Cover} 
\author{Eva-Maria C. Hols}{Department of Computer Science, Humboldt-Universit{\"a}t zu Berlin, Germany}{hols@informatik.hu-berlin.de}{https://orcid.org/0000-0002-2832-0722}{Supported by DFG Emmy Noether-grant (KR 4286/1)}
\author{Stefan Kratsch}{Department of Computer Science, Humboldt-Universit{\"a}t zu Berlin, Germany}{kratsch@informatik.hu-berlin.de}{https://orcid.org/0000-0002-0193-7239}{}
\author{Astrid Pieterse}{Eindhoven University of Technology, P.O. Box 513, 5600 MB Eindhoven, The Netherlands}{astridpieterse@outlook.com}{https://orcid.org/0000-0003-3721-6721}{Supported by NWO Gravitation grant ``Networks''}
\authorrunning{E.\,C. Hols, S. Kratsch, and A. Pieterse}
\keywords{Vertex Cover, kernelization, blocking sets, elimination distance, structural parameters}
\theoremstyle{plain}
\newtheorem*{thm:theorem6}{Theorem \ref{thm:intro:lp-kernel}}
\newtheorem*{thm:theorem5}{Theorem \ref{thm:intro:hereditary-kernel}}
\newtheorem*{thm:theorem4}{Theorem \ref{thm:intro:precise-bounds}}
\newtheorem*{thm:theorem3}{Theorem \ref{intro:thm:reduce-cc}}
\newtheorem*{thm:theorem1}{Theorem \ref{theorem:lb:intro}}
\newtheorem*{thm:theorem2}{Theorem \ref{theorem:intro:bounded-bs-not-enough}}
\theoremstyle{definition}
\newtheorem{redrule}{Reduction Rule}
\theoremstyle{remark}
\newtheorem{observation}{Observation}
\newcommand{\prob}[1]{\textsc{#1}}
\newcommand{\Oh}{\mathcal{O}}
\newcommand{\VC}{\prob{Vertex Cover}\xspace}
\renewcommand{\P}{\ensuremath{\mathsf{P}}\xspace}
\newcommand{\NP}{\ensuremath{\mathsf{NP}}\xspace}
\newcommand{\RP}{\ensuremath{\mathsf{RP}}\xspace}
\newcommand{\containment}{\ensuremath{\mathsf{NP\subseteq coNP/poly}}\xspace}
\newcommand{\ncontainment}{\ensuremath{\mathsf{NP\nsubseteq coNP/poly}}\xspace}
\newcommand{\OPT}{\ensuremath{\mathrm{OPT}}\xspace}
\newcommand{\LP}{\ensuremath{\mathrm{LP}}\xspace}
\newcommand{\MM}{\ensuremath{\mathrm{MM}}\xspace}
\newcommand{\C}{\ensuremath{\mathcal{C}}\xspace}
\newcommand{\Q}{\ensuremath{\mathcal{Q}}\xspace}
\newcommand{\X}{\ensuremath{\mathcal{X}}\xspace}
\newcommand{\F}{\ensuremath{\mathcal{F}}\xspace}
\newcommand{\td}[1]{\ensuremath{\mathrm{td}(#1)}}
\newcommand{\ed}[2]{\ensuremath{\mathrm{ed}_{#1}(#2)}}
\newcommand{\bsg}[1]{\ensuremath{\beta(#1)}}
\newcommand{\bsc}[1]{\ensuremath{\beta_{#1}}}
\newcommand{\bsd}[2]{\ensuremath{\beta_{#1}(#2)}}
\begin{document}

\maketitle

\begin{abstract}
The \VC problem plays an essential role in the study of polynomial kernelization in parameterized complexity, i.e., the study of provable and efficient preprocessing for \NP-hard problems. Motivated by the great variety of positive and negative results for kernelization for \VC subject to different parameters and graph classes, we seek to unify and generalize them using so-called blocking sets, which have played implicit and explicit roles in many results.

We show that in the most-studied setting, parameterized by the size of a deletion set to a specified graph class \C, bounded minimal blocking set size is necessary but not sufficient to get a polynomial kernelization. Under mild technical assumptions, bounded minimal blocking set size is showed to allow an essentially tight efficient reduction in the number of connected components.

We then determine the exact maximum size of minimal blocking sets for graphs of bounded elimination distance to any hereditary class \C, including the case of graphs of bounded treedepth. We get similar but not tight bounds for certain non-hereditary classes \C, including the class $\C_{\LP}$ of graphs where integral and fractional vertex cover size coincide. These bounds allow us to derive polynomial kernels for \VC parameterized by the size of a deletion set to graphs of bounded elimination distance to, e.g., forest, bipartite, or $\C_{\LP}$ graphs.
\end{abstract}

\section{Introduction}

In the \VC problem we are given an undirected graph $G=(V,E)$ and an integer~$k$; the question is whether there exists a set $S\subseteq V$ of at most $k$ vertices such that each edge of $G$ is incident with a vertex of $S$, or, in other words, such that $G-S$ is an independent set. Despite \VC being \NP-complete, it is known that there are efficient preprocessing algorithms that reduce any instance $(G,k)$ to an equivalent instance with $\Oh(k^2)$ or even at most $2k$ vertices (and size polynomial in $k$). The existence or non-existence of such preprocessing routines for \NP-hard problems has been studied intensively in the field of parameterized complexity under the term \emph{polynomial kernelization},\footnote{A polynomial kernelization is an efficient algorithm that given any instance with parameter value $\ell$ returns an equivalent instance of size polynomial in $\ell$. In the initial \VC example we have $\ell=k$ but many other parameters will be considered. Full definitions can be found in Section~\ref{section::preliminaries}.} and \VC has turned out to be one of the most fruitful research subjects with a variety of upper and (conditional) lower bounds subject to different parameters (see, e.g.,~\cite{BougeretS17}).

In the present work, we seek to unify and generalize existing results by using so-called \emph{blocking sets}. A blocking set in a graph $G=(V,E)$ is any set $Y\subseteq V$ that is not a subset of any minimum cardinality vertex cover of $G $, e.g., the set $V$ itself is always a blocking set. Of particular interest are \emph{minimal} blocking sets, i.e., those that are minimal under set inclusion. Several graph classes have constant upper bounds on the size of minimal blocking sets, e.g., in any forest (or even in any bipartite graph) every minimal blocking set has size at most two. On the other hand, even restrictive classes like outerplanar graphs have unbounded minimal blocking set size, i.e., for each $d$ there is a graph in the class with a minimal blocking set of size greater than $d$. As a final example, cliques are the unique graphs for which $V$ is the only (minimal) blocking set because all optimal vertex covers have form $V\setminus \{v\}$ for any $v\in V$; in particular, any graph class containing all cliques has unbounded minimal blocking set size.

For ease of reading, the introduction focuses mostly on hereditary graph classes, i.e., those closed under vertex deletion; the later sections give more general results modulo suitable technical conditions, where appropriate. Apart from that, throughout the paper, we restrict study to graph classes $\C$ that are \emph{robust}, that is, they are closed under disjoint union and under deletion of connected components. In other words, a graph $G$ is in $\C$ if and only if all of its connected components belong to $\C$. Most graph classes studied in the context of kernels for \VC are robust, and a large number of them are also hereditary. A particular non-hereditary graph class of interest for us is the class $\C_{\LP}$ of graphs whose minimum vertex cover size equals the minimum size of a fractional vertex cover (denoted $\LP$ as it is also the optimum solution value for the vertex cover LP relaxation).

\subsection{Blocking sets and kernels for Vertex Cover}

Most known polynomial kernelizations for \VC are for parameterization by the vertex deletion distance to some fixed hereditary graph class $\C$ that is also robust, e.g., for $\C$ being the class of forests~\cite{JansenB13}, graphs of maximum degree one or two~\cite{MajumdarRS18}, pseudoforests~\cite{FominS16} (each component has at most one cycle), bipartite graphs~\cite{KratschW12}, $d$-quasi forests/bipartite graphs~\cite{HolsK17} (at most $d$ vertex deletions per component away from being a forest/bipartite), cluster graphs of bounded clique size~\cite{MajumdarRS18}, or graphs of bounded treedepth~\cite{BougeretS17}. Concretely, the input is of form $(G,k,X)$, asking whether $G$ has a vertex cover of size at most $k$, where $X\subseteq V$ such that $G-X\in\C$; the size $\ell=|X|$ of the \emph{modulator} $X$ is the parameter. Blocking sets have been implicitly or explicitly used for most of these results and we point out that all the mentioned classes have bounded minimal blocking set size.

As our first result, we show that this is not a coincidence: If $\C$ is closed under disjoint union (or, more strongly, if $\C$ is robust) then bounded size of minimal blocking sets in graphs of $\C$ is necessary for a polynomial kernel to exist (Section~\ref{section::kernelimpliesbmbss}). Moreover, the maximum size of minimal blocking sets in $\C$ yields a lower bound for the possible kernel size.

\newcommand{\theoremone}{Let $\C$ be a graph class that is closed under disjoint union. If $\C$ contains any graph with a minimal blocking set of size $d$ then \VC parameterized by the size of a modulator $X$ to $\C$ does not have a kernelization of size $\Oh(|X|^{d-\varepsilon})$ for any $\varepsilon>0$ unless \containment and the polynomial hierarchy collapses.}
\begin{theorem}\label{theorem:lb:intro}
\theoremone
\end{theorem}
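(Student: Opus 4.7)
The plan is to apply the weak $d$-cross-composition framework of Dell--van Melkebeek and Hermelin--Wu: it suffices to give a polynomial-time algorithm that, given $t = r^d$ equal-sized instances $x_1,\dots,x_t$ of some \NP-hard source problem $L$, outputs a \VC instance $(G,k,X)$ with $G-X\in\C$ such that $(G,k,X)$ is YES iff at least one $x_i$ is YES, and $|X| = O(r\cdot\mathrm{poly}(n))$ where $n$ bounds the size of each $x_i$. Since $r^d = t$, this implies the claimed $\Omega(|X|^{d-\varepsilon})$ lower bound on kernel size unless \containment.

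The central gadget is extracted from the hypothesized $H\in\C$ with minimal blocking set $B=\{b_1,\dots,b_d\}$. Minimality provides two facts: \emph{(i)} any vertex cover of $H$ containing all of $B$ has size at least $\tau(H)+1$, where $\tau$ denotes the vertex cover number; \emph{(ii)} for every proper subset $B'\subsetneq B$ there is a minimum vertex cover $C_{B'}\supseteq B'$ of $H$. Label source instances by tuples $\mathbf{i}=(i_1,\dots,i_d)\in[r]^d$ and introduce $d$ modulator groups $X_1,\dots,X_d$ of $r$ fresh vertices each, so $|X|=dr$. For each $\mathbf{i}$ take a private copy $H_\mathbf{i}$ of $H$ and add edges $\{x_{j,i_j},b_j^{(\mathbf{i})}\}$ for $j\in[d]$. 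Because $\C$ is closed under disjoint union, the disjoint union of the $H_\mathbf{i}$'s remains in $\C$, so $X$ is a valid modulator.

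The composition now works as follows. A candidate vertex cover that leaves exactly one vertex $x_{j,i_j^*}$ of each $X_j$ outside defines a ``selected'' tuple $\mathbf{i}^*=(i_1^*,\dots,i_d^*)$; in the selected copy $H_{\mathbf{i}^*}$ all edges $\{x_{j,i_j^*},b_j^{(\mathbf{i}^*)}\}$ force $B^{(\mathbf{i}^*)}$ into the cover, at cost at least $\tau(H)+1$, while in every other copy only a proper subset of $B^{(\mathbf{i})}$ is forced and one can use the corresponding $C_{B'}$ at cost $\tau(H)$. Attach to each $H_\mathbf{i}$ an instance gadget $G_\mathbf{i}$ encoding $x_\mathbf{i}$, wired to $B^{(\mathbf{i})}$ so that the $+1$ slack in the selected copy lets $G_{\mathbf{i}^*}$ be covered within budget iff $x_{\mathbf{i}^*}$ is YES, and unselected copies cover their gadgets trivially. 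Setting the global budget $k$ accordingly makes the composed \VC instance equivalent to the OR of the $x_i$, completing the cross-composition.

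The main obstacle is the design of the instance gadgets $G_\mathbf{i}$. The selection mechanism supplies only a one-vertex differential between selected and unselected states, so $L$ must be chosen so that YES/NO is encoded by a $\pm 1$ offset in vertex cover size---padded \NP-hard variants of \VC (in the style of Dell--van Melkebeek) are the natural candidates. A subtler point is that $\C$ is only assumed closed under disjoint union (not hereditary), so gadget vertices cannot simply be grafted onto $H_\mathbf{i}$ without risking $G-X\notin\C$; the safe route is to attach gadget vertices via edges only to $X$, so that in $G-X$ they become isolated vertices which are absorbed into a slightly padded version of $H$ witnessing the same minimal blocking set. Verifying both constraints simultaneously for every $\C$ satisfying the theorem's hypothesis is the technical heart of the argument; the remaining correctness check is a routine budget calculation in both directions.
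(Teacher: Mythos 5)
Your proposal takes a genuinely different route from the paper: you propose building a weak $d$-cross-composition from scratch, whereas the paper gives a linear-parameter transformation from \textsc{$d$-Vertex Cover} on $d$-uniform hypergraphs (whose $\Oh(n^{d-\varepsilon})$ kernel lower bound is already known from Dell--van Melkebeek, \cite[Theorem~2]{DellM14Satisfiability}). The hypergraph reduction is strikingly clean precisely because it has \emph{no instance gadgets at all}: one copy of $H$ per hyperedge, the $q$-th blocking-set vertex of that copy wired to the modulator vertex for the $q$-th endpoint, and $X$ is just one modulator vertex per hypergraph vertex. Your cross-composition has to re-derive the combinatorics that Dell--van Melkebeek already packaged, plus design and verify gadgets.

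There is a concrete gap in the ``safe route'' that you propose for keeping $G-X\in\C$ without assuming heredity. You cannot connect the instance gadgets $G_\mathbf{i}$ to the blocking set $B^{(\mathbf{i})}$ of $H_\mathbf{i}$ (that would alter the base component and $\C$ is not hereditary), so you fall back to connecting gadget vertices only to $X$. But then those vertices become \emph{isolated} in $G-X$, and the hypothesis gives you only that $\C$ is closed under disjoint union and contains \emph{some} graph $H$ with the desired blocking set. Closure under disjoint union does not give you the one-vertex graph, so $H$ together with extra isolated vertices need not lie in $\C$ at all. Worse, once the gadget vertices are severed from $B^{(\mathbf{i})}$, there is no remaining mechanism by which the $+1$ slack in the selected copy could help cover $G_{\mathbf{i}^*}$ — the gadgets and the blocking-set mechanism live in disjoint parts of $G-X$ and only meet at $X$. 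Finally, the design of the gadgets themselves (a padded \NP-hard source problem whose answer is encoded as a $\pm1$ offset in vertex-cover size, attached only via $X$) is left entirely open, and you do not control the configurations where a solution leaves zero or more than one vertex of a group $X_j$ outside the cover, which the paper's backward direction has to argue about carefully. Until those points are resolved, the argument does not go through; by contrast, once you see that the construction in the paper is exactly an implementation of a single hyperedge as a copy of $H$ plus its blocking set, the lower bound inherits from the hypergraph problem with almost no additional work. (The paper also treats $d=1$ separately by a simple self-reduction argument; your sketch does not address $d=1$, where the kernel size regime is sublinear.)
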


To the best of our knowledge, this theorem captures all known kernel lower bounds for \VC parameterized by deletion distance to any union-closed graph class $\C$, e.g., ruling out polynomial kernels for $\C$ being the class of mock forests (each vertex is in at most one cycle)~\cite{FominS16}, outerplanar graphs~\cite{Jansen13Thesis}, or any class containing all cliques~\cite{BodlaenderJK14}; and getting kernel size lower bounds for graphs of bounded treedepth~\cite{BougeretS17} or cluster graphs of bounded clique size~\cite{MajumdarRS18}. To get lower bounds of this type, it now suffices to prove (or observe) that $\C$ has large or even unbounded minimal blocking set size.

It is natural to ask whether the converse holds, i.e., whether a bound on the minimal blocking set size directly implies the existence of a polynomial kernelization. Unfortunately, we show that this does not hold in a strong sense: There is a class $\C$ such that all graphs in $\C$ have minimal blocking sets of size one, but there is no polynomial kernelization (Section~\ref{section::bmbssisnotsufficient}). More strongly, solving \VC on $\C$ is not in $\RP\supseteq\P$ unless $\NP=\RP$.

\newcommand{\theoremtwo}{There exists a graph class \C such that all graphs in $\C$ have minimal blocking set size one and such that \VC on \C is not solvable in polynomial time (in fact, not in \RP), unless $\mathsf{NP = RP}$.}
\begin{theorem}\label{theorem:intro:bounded-bs-not-enough}
\theoremtwo
\end{theorem}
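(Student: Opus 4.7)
The plan is to take $\C$ to be the class of graphs with a unique minimum vertex cover. For the blocking-set property: if $S$ is the unique minimum VC of $G \in \C$, then $B \subseteq V(G)$ is a blocking set iff $B \not\subseteq S$, i.e., iff $B$ meets the nonempty set $V(G) \setminus S$; hence the minimal blocking sets of $G$ are precisely the singletons $\{v\}$ with $v \in V(G) \setminus S$, so the maximum minimal blocking set size is one. (Conversely, any graph with two distinct minimum VCs $S_1, S_2$ has a minimal blocking set of size at least two inside $S_1 \cup S_2$, so $\C$ is in fact exactly the class where the maximum minimal blocking set size is at most one.)

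For the hardness part, I would give a randomized polynomial-time reduction from 3-SAT to $\VC$ on $\C$ via the Mulmuley--Vazirani--Vazirani (MVV) isolation lemma. Given a 3-SAT formula $\phi$, first apply the standard SAT-to-VC reduction to obtain $(H, k)$ with $|V(H)| = n$. Sample independent uniform weights $w_v \in \{1, \ldots, 2n\}$ for $v \in V(H)$, put $M := 2n^2$, and construct $G'$ by replacing each $v \in V(H)$ with $M + w_v$ pairwise non-adjacent twin copies sharing $v$'s external neighborhood, and replacing each edge $uv \in E(H)$ with the complete bipartite graph between their copies; set $k' := Mk + 2n^2$. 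A direct analysis shows that minimum VCs of $G'$ correspond bijectively to minimum-weight VCs of $H$ under weights $W_v := M + w_v$; since $M$ dominates, these are exactly the minimum-size VCs of $H$ with ties broken by $\sum_{v \in S} w_v$, and by MVV this tie-breaking yields a unique minimizer with probability at least $1/2$. Thus $G' \in \C$ with probability at least $1/2$, and the choice of $k'$ guarantees that $H$ has a VC of size at most $k$ if and only if $G'$ has a VC of size at most $k'$.

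Viewing $\VC$ on $\C$ as the language $L_\C = \{(G, k) : G \in \C \text{ and } G \text{ has a VC of size} \leq k\}$, any hypothetical RP algorithm $A$ for $L_\C$ then yields an RP algorithm for 3-SAT by running the reduction above and returning $A(G', k')$. On a satisfiable $\phi$, with probability at least $1/2$ we have $G' \in \C$ and $G'$ has a VC of size at most $k'$, so $(G', k') \in L_\C$ and $A$ accepts with probability at least $1/2$, for total acceptance probability at least $1/4$ (easily amplified). On an unsatisfiable $\phi$, $G'$ always has minimum VC size greater than $k'$, so $(G', k') \notin L_\C$ regardless of whether $G' \in \C$, and RP soundness forces $A$ to reject with probability~$1$. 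Hence $\NP \subseteq \RP$.

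The main obstacle is reconciling the two-sided probability in the isolation step with the one-sided error required of an RP algorithm: a false positive could in principle leak from an unsatisfiable $\phi$ through a $G' \notin \C$, on which an RP algorithm for a promise version of $\VC$ on $\C$ would be unconstrained. Treating $\VC$ on $\C$ as the language $L_\C$ (rather than a promise problem) circumvents this, since $(G', k') \notin L_\C$ whenever $G' \notin \C$ and so $A$ is contractually required to reject; the remaining ingredients---MVV isolation applied to the family of minimum-size VCs of $H$, and the weighted-to-unweighted VC blow-up---are standard.
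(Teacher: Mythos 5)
Your choice of $\C$ (all graphs with a unique minimum vertex cover) is legitimate, your blocking-set analysis for it is correct, and your $L_\C\in\RP\Rightarrow\NP\subseteq\RP$ argument is internally sound. The genuine gap is the one you yourself flag, and your attempted fix does not close it. The theorem has to be read as a statement about the \emph{promise problem}: given $(G,k)$ with $G\in\C$, decide whether $\OPT(G)\le k$. That is the reading the paper needs, because the theorem is used to rule out an FPT algorithm for \VC parameterized by the size of a modulator to $\C$, and such an algorithm, run at parameter value $|X|=0$, yields exactly a polynomial-time algorithm for the promise problem; it makes no commitment whatsoever on inputs $(G,k)$ with $G\notin\C$. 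Your MVV-based reduction from \SAT only places $G'$ inside $\C$ with probability at least $1/2$, so on an unsatisfiable $\phi$ the sampled $G'$ may lie outside $\C$, where a promise algorithm is free to accept. This is two-sided error, unrepairable within the one-sided $\RP$ framework. Redefining ``\VC on $\C$'' as the language $L_\C=\{(G,k):G\in\C \text{ and } \OPT(G)\le k\}$ proves a different statement: an FPT algorithm for the parameterized problem does \emph{not} give an algorithm for $L_\C$, since it need not recognize inputs with $G\notin\C$, so the kernelization lower bound would not follow from your version of the theorem.

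The paper's proof avoids the issue by arranging a \emph{deterministic} reduction whose output is always in $\C$. It starts from \textsc{Unique-SAT} (itself a promise problem, handled by Valiant--Vazirani), applies a parsimonious reduction to \textsc{Clique} so a unique satisfying assignment yields a unique maximum clique, and then adds a two-copy-plus-$K_{2\ell-1}$ gadget ensuring a unique maximum clique (hence a unique minimum vertex cover in the complement) \emph{in both the satisfiable and the unsatisfiable case}. That unconditional uniqueness is precisely what the promise interpretation requires, and it is exactly what a randomized isolation step cannot provide.
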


In light of this result, one could ask what further assumptions on \C, apart from the necessity of bounded minimal blocking set size, are required to allow for polynomial kernels. Clearly, polynomial-time solvability of \VC on the class \C is necessary and (as we implicitly showed) not implied by \C having bounded blocking set size. If, slightly stronger, we require that blocking sets in graphs of $\C$ can be efficiently recognized\footnote{This condition clearly holds for all hereditary classes \C on which \VC can be solved in polynomial time: Given $G=(V,E)$ and $Y\subseteq V$ it suffices to compute solutions for $G$ and $G-Y$. Clearly, the set $Y$ is a blocking set if and only if $\OPT(G)<\OPT(G-Y)+|Y|$.} then we show that there is an efficient algorithm that reduces the number of components of $G-X$ for any instance $(G,k,X)$ of \VC parameterized by deletion distance to \C to $\Oh(|X|^d)$ (Section~\ref{subsec:reducing-num-components}).
This is a standard opening step for kernelization and can be followed up by shrinking and bounding the size of those components (unless bounds follow directly from \C such as for cluster graphs with bounded clique size). Note that this requires that deletion of any component yields a graph in \C (e.g., implied by \C being robust), which here is covered already by \C being hereditary.

\newcommand{\theoremthree}{Let \C be any hereditary graph class with minimal blocking set size $d$ on which \VC can be solved in polynomial time. There is an efficient algorithm that given $(G,k,X)$ such that $G-X\in \C$ returns an equivalent instance $(G',k',X)$ such that $G'-X\in\C$ has at most $\Oh(|X|^d)$ connected components.}
\begin{theorem}\label{intro:thm:reduce-cc}
\theoremthree
\end{theorem}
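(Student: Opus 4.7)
My plan is to mark a subset $\mathcal{M}$ of the components of $G-X$ of size $\Oh(|X|^d)$ and then delete the rest, adjusting $k$ accordingly. For each subset $Y\subseteq X$ with $|Y|\le d$, I will iterate over the components $C$ of $G-X$ and test whether $N_C(Y)$ is a blocking set of $C$; since $\C$ is hereditary and \VC is polynomial on $\C$, this test reduces to computing $\OPT(C)$ and $\OPT(C-N_C(Y))$ and checking whether $|N_C(Y)|+\OPT(C-N_C(Y))>\OPT(C)$. Whenever this test succeeds for some $C$, I mark one such witness component; this yields $|\mathcal{M}|\le\sum_{i=0}^{d}\binom{|X|}{i}\in\Oh(|X|^d)$. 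I then define $G':=G-\bigcup_{C\notin\mathcal{M}}V(C)$ and $k':=k-\sum_{C\notin\mathcal{M}}\OPT(C)$; by hereditariness $G'-X\in\C$ and it has at most $\Oh(|X|^d)$ connected components.

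The forward direction of the equivalence $(G,k)\equiv(G',k')$ is immediate since any vertex cover of $G$ contributes at least $\OPT(C)$ on each component $C$. For the backward direction I would extend a vertex cover $S'$ of $G'$ of size at most $k'$ into one of $G$ by adding, for each unmarked $C$, a minimum vertex cover of $C$ containing $N_C(X\setminus S')$, so as to cover the edges from $V(C)$ to $X\setminus S'$. Such an extension reaches size $k$ exactly when $N_C(X\setminus S')$ is not a blocking set of $C$ for every unmarked $C$. If the condition fails for some unmarked $C$, a minimal blocking set $B\subseteq N_C(X\setminus S')$ with $|B|\le d$, together with a choice of $X$-neighbor in $X\setminus S'$ per vertex of $B$, produces a set $Y\subseteq X\setminus S'$ with $|Y|\le d$ and $N_C(Y)\supseteq B$ blocking in $C$. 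By the marking rule, some $C^{*}\in\mathcal{M}$ also has $N_{C^{*}}(Y)$ blocking, which forces $|S'\cap V(C^{*})|\ge\OPT(C^{*})+1$ because $Y\subseteq X\setminus S'$.

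The heart of the proof is to convert this slack at $C^{*}$ into a valid size-$k$ extension to $G$. The plan is to use the identity $\OPT(G)=|X|+\sum_C\OPT(C)-\max_{Y}\bigl(|Y|-\sum_C\delta_C(Y)\bigr)$, where $Y$ ranges over independent sets in $G[X]$ and $\delta_C(Y):=\tau_C(Y)-\OPT(C)\ge 0$ is the blocking penalty at $Y$, together with the analogous identity for $G'$ in which the inner sum is restricted to $\mathcal{M}$. Equivalence then reduces to showing that the two maxima coincide: one direction is trivial since dropping the unmarked penalties can only increase the objective, and the other direction will be proved by an exchange argument driven by the marking. Starting from any optimizer $Y$ on the $G'$-side, whenever $Y$ triggers an unmarked penalty $\delta_C(Y)>0$ via a witness $Y_0\subseteq Y$ of size at most $d$, the marking supplies a $C^{*}\in\mathcal{M}$ whose penalty is activated by the same $Y_0$, and I would trade the unmarked penalty against the marked one. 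The main obstacle is that several unmarked components may share one marked twin; the plan to resolve this is either to iterate the exchange while tracking the net gain, or to strengthen the marking rule to keep a constant number of witnesses per $Y$ (still $\Oh(|X|^d)$ in total) so that the marked slack always dominates the unmarked penalty.
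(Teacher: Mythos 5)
Your high-level plan (enumerate chunks $Y\subseteq X$ of size at most $d$, keep a polynomial set of witness components, delete the rest and decrease $k$ by $\sum\OPT(C)$) is the same as the paper's, and your forward direction and penalty identity $\OPT(G)=|X|+\sum_C\OPT(C)-\max_Y(|Y|-\sum_C\delta_C(Y))$ are correct. The gap is exactly where you flag it, and neither of your proposed patches closes it. With the rule "mark one witness per chunk," many deleted components can share a single marked twin $C^*$. A vertex cover $S'$ of $G'$ that misses a chunk $Y$ forces only $|S'\cap V(C^*)|\ge\OPT(C^*)+1$, i.e.\ slack $1$, no matter how many deleted components $Y$ triggers; so the "trade" in your exchange underpays. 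Keeping a constant number of witnesses per chunk cannot help since the fan-out (number of components blocked by a fixed chunk) is unbounded, and "iterate the exchange while tracking net gain" has no progress measure because after the first trade the marked side has no further slack to offer.

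The paper repairs this with a bipartite matching / crown structure rather than a greedy mark. It builds the bipartite graph $G_B$ on chunks $\X$ versus components $\F$ with an edge $\{Z,H\}$ iff $N_G(Z)\cap V(H)$ is a blocking set in $H$, and invokes Theorem~\ref{theorem::matching} (Hopcroft--Karp) to obtain either a matching $M$ saturating $\X$, or a deficient set $\X'$ with $|N_{G_B}(\X')|<|\X'|$ plus a matching $M$ of $G_B-N_{G_B}[\X']$ saturating $\hat{\X}=\X\setminus\X'$. It keeps $N_{G_B}(\X')\cup V(M)$, still $\Oh(|X|^{\bsc{\C}})$ components, but now every chunk in $\hat{\X}$ owns a \emph{distinct} kept witness. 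This bijection is precisely what makes Lemma~\ref{lemma::reduce_cc} go through: one vertex is added to the vertex cover for each missed chunk of $\hat{\X}$, and at least one vertex is recovered from each of the (pairwise distinct) matched witnesses, a size-neutral exchange; meanwhile every deleted component is only $G_B$-adjacent to chunks of $\hat{\X}$, so once those are hit the deleted component can be covered with exactly $\OPT(C)$ vertices (Claim~\ref{claim::reduce_cc_sol_size}). So the missing idea in your proposal is the crown/matching decomposition, which replaces your greedy marking by one that guarantees the one-to-one correspondence your exchange argument needs.
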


We point out that the number $\Oh(|X|^d)$ of components is essentially tight (assuming \ncontainment) because the lower bound underlying Theorem~\ref{theorem:lb:intro} creates instances where components have a constant $c=c(d)$ many vertices. Reducing to $\Oh(|X|^{d-\varepsilon})$ components, for any $\varepsilon>0$, would violate the kernel size lower bound.

\subsection{Minimal blocking set size relative to elimination distances}

Recently, Bougeret and Sau~\cite{BougeretS17} presented a polynomial kernelization for \VC parameterized by the size of a modulator $X$ such that $G-X$ has treedepth at most $d$; here $d$ is a fixed constant and the degree of the polynomial in the kernel size depends exponentially on $d$. To get the kernelization, they prove (in different but equivalent terms) that the size of any minimal blocking set in a graph of treedepth $d$ is at most $2^d$, and they give a lower bound of $2^{d-3}$. As our first result here, we determine the exact maximum size of minimal blocking sets in graphs of treedepth $d$ (see below, and see Section~\ref{section:bounded-blocking-sets} for all these results).

Bulian and Dawar~\cite{BulianD16} introduced the notion of elimination distance to a class \C, generalizing treedepth, which corresponds to elimination to the empty graph (see Section~\ref{section::preliminaries} for a formal definition). This is defined in the same way as treedepth except that all graphs from \C get value $0$ rather than just the empty graph. (Note that here it is convenient that \C is robust because the definition assigns value $0$ to disjoint unions of graphs from $\C$.) Intuitively, elimination distance to \C can be pictured as having a tree-like deletion of vertices (as for treedepth) but being allowed to stop when the remaining connected components belong to \C (rather than continuing to the empty graph). For hereditary $\C$, we determine the exact maximum size of minimal blocking sets in graphs of elimination distance at most $d$ to $\C$, denoted $\bsd{\C}{d}$, depending on the maximum minimal blocking set size $\bsc{\C}$ in the class \C.

\newcommand{\theoremfour}{Let $\C$ be a robust hereditary graph class where \bsc{\C} is bounded.
    For every integer $d\geq 1$ it holds that
    \begin{align*}
        \bsd{\C}{d} = \begin{cases}
                            2^{d-1}+1 & \text{, if } \bsc{\C} = 1 \text{,} \\
                            (\bsc{\C}-1) 2^d+1 & \text{, if } \bsc{\C} \geq 2.
                        \end{cases}
    \end{align*}}
\begin{theorem}\label{thm:intro:precise-bounds}
    \theoremfour
\end{theorem}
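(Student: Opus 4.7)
The plan is to prove the claimed equality by matching upper and lower bounds, both by induction on $d$. A standard reduction: if $Y$ is a minimal blocking set in a disjoint union $G_1\sqcup G_2$, then $Y$ must lie entirely in a single component (blocking of the union decomposes as blocking of at least one component, and minimality collapses the other part to emptiness), so we may assume $G$ is connected and hence there is an ``apex'' vertex $v$ with $\ed{\C}{G-v}\leq d-1$. The induction is organised around the relationship of $Y$ to $v$.

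For the upper bound, I case on whether $v\in Y$. \emph{If $v\in Y$:} minimality implies $\{v\}\subsetneq Y$ is not blocking, so some min VC misses $v$; for each $y\in Y\setminus\{v\}$, the witness min VC $S_y$ (containing $Y\setminus\{y\}\ni v$) must contain $v$, so $S_y\setminus\{v\}$ is a VC of $G-v$ of size $\OPT(G)-1$, forcing $\OPT(G-v)=\OPT(G)-1$. Then $Y\setminus\{v\}$ is a minimal blocking set in $G-v$: it is blocking because a min VC $T$ of $G-v$ containing it would give $T\cup\{v\}$ as a min VC of $G$ containing $Y$; and any strict blocking subset $Z\subsetneq Y\setminus\{v\}$ of $G-v$ would give $Z\cup\{v\}\subsetneq Y$ blocking in $G$, contradicting minimality. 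By induction $|Y|\leq\bsd{\C}{d-1}+1$. \emph{If $v\notin Y$:} split further into three sub-cases. (i) Every min VC of $G$ contains $v$: the map $S\mapsto S\setminus\{v\}$ is a bijection between min VCs of $G$ and of $G-v$, and $Y$ transports directly to a minimal blocking set of $G-v$, giving $|Y|\leq\bsd{\C}{d-1}$. (ii) No min VC of $G$ contains $v$: then $N(v)$ lies in every min VC and, by minimality of $Y$, $Y\cap N(v)=\emptyset$ (else the witness for $Y\setminus\{y\}$ non-blocking with $y\in Y\cap N(v)$ would itself contain $y$, contradicting $Y$ blocking); hence $Y$ transports to a minimal blocking set of $G-N[v]$, whose elimination distance is $\leq d-1$ as $\C$ is hereditary, giving $|Y|\leq\bsd{\C}{d-1}$. (iii) Both kinds of min VC exist: the delicate case, where a careful analysis of the blocking set $N(v)\cup Y$ in $G-v$ combined with minimality of $Y$ in $G$ yields $|Y|\leq 2\bsd{\C}{d-1}-1$. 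Combining, $\bsd{\C}{d}\leq\max(\bsd{\C}{d-1}+1,\,2\bsd{\C}{d-1}-1)$, which matches the claimed formula (the ``$+1$'' bound dominates only when $\bsd{\C}{d-1}=1$, i.e., at the boundary $\bsc{\C}=1,d=1$).

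For the lower bound, construct witness graphs $G_d$ recursively. Take two disjoint copies $G_{d-1}^{(1)},G_{d-1}^{(2)}$ of $G_{d-1}$ plus an apex $v$ joined to specific vertices in each copy, so $\ed{\C}{G_d}\leq d$ is immediate (remove $v$). The witnessing minimal blocking set has the form $(Y_{d-1}^{(1)}\cup Y_{d-1}^{(2)}\setminus\{x_1,x_2\})\cup\{v\}$, of size $2\bsd{\C}{d-1}-1$; verification requires showing no min VC of $G_d$ contains this set (whether the min VC includes $v$ or not), while every proper subset extends to a min VC. The $\bsc{\C}=1,d=1$ base uses $K_2$, whose two endpoints form a minimal blocking set of size $2$; the $\bsc{\C}\geq 2$ base uses a graph in $\C$ achieving $\bsc{\C}$.

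The main obstacle is sub-case (iii) of the upper bound, where $v$ is ``optional'' in min VCs of $G$: bounding $|Y|$ by $2\bsd{\C}{d-1}-1$ requires tracking how minimality witnesses ``with $v$'' and ``without $v$'' combine with $N(v)$ to form nested blocking structures in $G-v$ that each contribute at most $\bsd{\C}{d-1}$ vertices but share a single ``apex-absorbed'' vertex, explaining the $-1$. Matching this tightly in the lower bound requires choosing $v$'s neighbours in both copies carefully so that $v$ genuinely absorbs one vertex of slack from each copy.
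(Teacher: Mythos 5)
Your overall plan---matching upper and lower bounds by induction on $d$, rooting the induction at the apex $v$ of an elimination tree of a connected witness, and casing on whether $v\in Y$ and whether $v$ is in all, none, or only some optimum vertex covers---is exactly the structure of the paper's proof (Theorems~\ref{theorem::lb} and~\ref{theorem::ub_hereditary}), and your treatment of the case $v\in Y$ and of subcases~(i) and~(ii) of $v\notin Y$ is correct (with a cleaner route in~(ii) via a bijection to $G-N[v]$ than the paper's detour through $G-r$). However, the critical subcase~(iii), where $v$ is optional---the \emph{only} place the recurrence $\bsd{\C}{d}\leq 2\bsd{\C}{d-1}-1$ can come from---is not actually proved, and the intuition you give for the ``$-1$'' is wrong. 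You describe two nested blocking structures (one in $G-v$, one in $G-N[v]$) that ``share a single apex-absorbed vertex.'' That would require the minimal blocking sets $Y'\subseteq Y$ of $G-v$ and $\tilde Y\subseteq Y\setminus N[v]$ of $G-N[v]$ to intersect, but they need not: in the paper's own lower-bound witness (Lemma~\ref{lemma::lb_combine}) one gets $Y'=Y^{(1)}$ lying entirely in one copy and $\tilde Y\subseteq Y^{(2)}\setminus\{y_2\}$ entirely in the other, disjoint. When they are disjoint, the union bound gives only $|Y|\leq 2\bsd{\C}{d-1}$, which is one too large and does not close the induction. The paper's mechanism (Lemma~\ref{lemma::ub_split_bs}, Case~3) is different: writing $Y=Y'\,\dot\cup\,\hat Y$ with $Y'$ minimal blocking in $G-v$, one shows $\hat Y$ is \emph{not} by itself a blocking set of $G-Y'-v$ but becomes minimal blocking only after adding a \emph{nonempty} $Z\subseteq N(v)\setminus Y'$; nonemptiness of $Z$ is what forces $|\hat Y|\leq\bsd{\C}{d-1}-1$. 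This is the heart of the theorem and your sketch does not supply a correct route to it.

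Your lower-bound sketch has a related problem. You claim a witness minimal blocking set of the form $(Y^{(1)}\cup Y^{(2)}\setminus\{x_1,x_2\})\cup\{v\}$, i.e., containing the apex, with $v$ ``absorbing one vertex of slack from each copy.'' If $v$ is attached to one vertex (or, more generally, to a non-blocking subset) in each copy, then $\OPT(G_d)=2\,\OPT(G_{d-1})$ and no optimum vertex cover can afford to include $v$, so $\{v\}$ is already a minimal blocking set and your larger set containing $v$ is not minimal. For $v$ to be ``optional'' one must attach it to a full minimal blocking set of one copy (plus a vertex of the other), which is exactly the composition in Lemmas~\ref{lemma::lb_add_vertex} and~\ref{lemma::lb_union}; but then the resulting minimal blocking set is $Y^{(1)}\cup(Y^{(2)}\setminus\{y_2\})$ and does \emph{not} contain the apex. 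So either the adjacency pattern you intend is unspecified and inconsistent with the shape of your witness set, or the construction as literally described fails. The induction step for the lower bound needs to be pinned down and verified, not just asserted.
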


The lower bound holds as well for any non-hereditary class \C but we only get a slightly weaker upper bound for such classes (and also require a further technical condition called $f$-robustness). In particular, we get such an upper bound for the class $\C_{\LP}$ mentioned above. Note that if \C has unbounded minimal blocking set size then the same is true for graphs of any bounded elimination distance to \C (irrespective of \C being hereditary or not).

The bound for graphs of treedepth at most $d$ is included in the theorem by using that having treedepth at most $d$ is equivalent to having elimination distance at most $d-1$ to the class of independent sets (i.e., graphs of treedepth one), for which all minimal blocking sets have size $1$. Concretely, for treedepth $d$ we get $\beta(d)=1$ for $d=1$ and $\beta(d)=2^{d-2}+1$ for $d\geq 2$.

\subsection{Some consequences for kernels for Vertex Cover}

Our bounds for the minimal blocking set size relative to elimination distances allow us to generalize and combine previous polynomial kernelization results for \VC. We state this explicitly for elimination distances to hereditary graph classes.

\newcommand{\theoremfive}{Let $\C$ be a hereditary and robust graph class for which $\bsc{\C}$ is bounded, such that \VC has a (randomized) polynomial kernelization parameterized by the size of a modulator to $\C$. Then \VC also has a (randomized) polynomial kernelization parameterized by the size of a modulator to graphs of bounded elimination distance to $\C$.}
\begin{theorem}\label{thm:intro:hereditary-kernel}
\theoremfive
\end{theorem}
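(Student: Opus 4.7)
The plan is induction on the elimination-distance bound $d$. Writing $\C_d$ for the class of graphs of elimination distance at most $d$ to $\C$, the case $d=0$ is the hypothesis since $\C_0 = \C$; I assume inductively that a (randomized) polynomial kernel is available for \VC parameterized by the size of a modulator to $\C_{d-1}$, and construct one for modulators to $\C_d$.

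Two structural inputs drive the construction. First, $\C_d$ is itself hereditary and robust, and by Theorem~\ref{thm:intro:precise-bounds} its minimal blocking sets have size at most a constant $\bsc{\C_d}$ depending only on $\bsc{\C}$ and $d$. Second, \VC is polynomial-time solvable on $\C_d$: feeding an instance over $\C$ into the hypothesized kernel with empty modulator and brute-forcing the constant-size output yields a polynomial-time algorithm on $\C$; then for $G \in \C_d$ I compute an elimination-to-$\C$ decomposition of depth $d$ (for fixed $d$, by recursive search that tries every vertex at each level) and perform a standard dynamic programming along it, invoking the algorithm on $\C$ at the leaves. Both ingredients satisfy the premises of Theorem~\ref{intro:thm:reduce-cc}, which applied to $\C_d$ produces from any input $(G,k,X)$ an equivalent instance in which $G - X$ still lies in $\C_d$ but has only $\Oh(|X|^{\bsc{\C_d}})$ connected components.

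For each surviving component $C_i$ of $G - X$ that is not already in $\C$, the decomposition supplies a vertex $v_i$ with $C_i - v_i \in \C_{d-1}$. Setting $X' := X \cup \{v_i\}_i$, the new modulator satisfies $|X'| = \Oh(|X|^{\bsc{\C_d}})$ and $G - X' \in \C_{d-1}$, so invoking the inductive kernelization on $(G,k,X')$ returns an equivalent instance of size polynomial in $|X'|$, and hence polynomial in $|X|$. Randomness enters only at the deepest invocation of the base kernel, so the resulting algorithm inherits exactly the error profile (deterministic or randomized) of the hypothesized kernel for $\C$.

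The main obstacle I expect is the algorithmic sub-claim in the second paragraph, namely bootstrapping the hypothesized kernel for $\C$ into a genuine polynomial-time \VC solver and elimination-to-$\C$ decomposition routine for $\C_d$. For fixed $d$ this is within reach via straightforward recursive search plus leaf-level calls to the poly-time solver on $\C$, but some care is needed so that the polynomial exponents stay controlled and so that the DP handles the edges between elimination roots and the $\C$-leaves correctly. Everything else is a clean combination of Theorem~\ref{thm:intro:precise-bounds} for the blocking-set bound and Theorem~\ref{intro:thm:reduce-cc} for the component reduction, organized into the induction above.
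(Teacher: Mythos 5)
Your proposal is correct and follows essentially the same inductive strategy as the paper's proof via Lemma~\ref{lemma::kernel_ed}: reduce the number of components outside the modulator using the blocking-set bound for graphs of elimination distance at most $d$ to $\C$ (Theorem~\ref{thm:intro:precise-bounds} together with Reduction Rule~\ref{rule::delete_cc} and the polynomial-time applicability established in Lemma~\ref{lemma::ed_hereditary_red_rule} and Corollary~\ref{cor:rrule-poly-time-ed-to-hereditary}), absorb the elimination roots of the surviving components into the modulator to drop the elimination depth by one, and iterate down to depth zero where the hypothesized base kernel applies, with your derivation of a polynomial-time \VC solver on $\C$ from the kernel matching the paper's own remark following Lemma~\ref{lemma::kernel_ed}. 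The one small imprecision is the claim that randomness enters only at the deepest invocation: if the hypothesized kernel is randomized, then the derived \VC solver on $\C$ is randomized and is used inside the component-reduction step at every level of the recursion, so randomness propagates throughout, though this does not affect the validity of the overall argument.
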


As an example, this combines known polynomial kernels relative (to the size of) modulators to a forest~\cite{JansenB13} resp.\ to graphs of bounded treedepth~\cite{BougeretS17} to polynomial kernels relative to a modulator to graphs of bounded forest elimination distance. Similarly, the randomized polynomial kernel for \VC parameterized by a modulator to bipartite graphs is generalized to a modulator to graphs of bounded bipartite elimination distance. The approach to this result (also in the non-hereditary case) uses our bounds for minimal blocking set size relative to elimination distances and, apart from that, is inspired by the result of Bougeret and Sau~\cite{BougeretS17}. Intuitively, these kernels are obtained by suitable reductions to the known kernelizable cases, and thus carry over their properties (e.g., being deterministic or randomized).

As an explicit example for the non-hereditary case, we state a new kernelization result relative to the size of a modulator to the class of graphs of bounded elimination distance to $\C_{\LP}$, i.e., bounded elimination distance to graphs where optimum vertex cover size equals optimum fractional vertex cover size.

\newcommand{\theoremsix}{\VC admits a randomized polynomial kernel parameterized by the size of a modulator to graphs that have bounded elimination distance to $\C_{\LP}$.}
\begin{theorem}\label{thm:intro:lp-kernel}
\theoremsix
\end{theorem}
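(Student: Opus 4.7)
The plan is to reduce Theorem \ref{thm:intro:lp-kernel} to the non-hereditary analog of Theorem \ref{thm:intro:hereditary-kernel} applied with $\C = \C_{\LP}$. The template is the same as for hereditary classes: reduce the number of components of $G-X$ via the blocking-set-based argument of Theorem \ref{intro:thm:reduce-cc}, then peel off the top level of the elimination decomposition of each component and recurse, invoking a kernel for a modulator to $\C_{\LP}$ at the base of the recursion. What has to be checked for $\C_{\LP}$ beyond what is automatic for hereditary classes are (i)~a base kernel for a modulator to $\C_{\LP}$; (ii)~polynomial-time recognition of blocking sets; (iii)~a bound on $\bsc{\C_{\LP}}$ together with the appropriate non-hereditary form of Theorem \ref{thm:intro:precise-bounds}; and (iv)~the $f$-robustness technical condition needed for the non-hereditary lifting.

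For (i) we invoke the known randomized polynomial kernelization for \VC parameterized by the size of a modulator to $\C_{\LP}$. For (ii), \VC is in \P on $\C_{\LP}$ since the half-integral vertex cover LP yields an integral optimum, so a set $Y$ is a blocking set of $G \in \C_{\LP}$ iff $\OPT(G) < \OPT(G-Y) + |Y|$, which is testable in polynomial time. For (iii), the paper's earlier bound on $\bsc{\C_{\LP}}$ combines with the non-hereditary variant of Theorem \ref{thm:intro:precise-bounds} to give $\bsd{\C_{\LP}}{d} = \Oh(\bsc{\C_{\LP}} \cdot 2^d)$, so for fixed $d$ the minimal blocking set size relative to elimination distance $d$ to $\C_{\LP}$ is a constant. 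Plugging this into the non-hereditary, robust-$\C$ version of Theorem \ref{intro:thm:reduce-cc} reduces the number of components of $G-X$ to $\Oh(|X|^{\bsd{\C_{\LP}}{d}})$, which is polynomial in $|X|$. For (iv), $\C_{\LP}$ is robust because both the LP and the integral optimum decompose additively over connected components; the $f$-robustness needed to apply the non-hereditary lifting follows from the same additive structure together with the observation that attaching a bounded-size gadget whose induced vertex cover behaviour matches the LP bound preserves LP-tightness.

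With all four hypotheses in place, the lifting machinery from the hereditary case carries over. The main obstacle is the interaction between the elimination recursion and the non-hereditary nature of $\C_{\LP}$: stripping a top vertex from an elimination decomposition yields subcomponents of elimination distance at most $d-1$ to $\C_{\LP}$, but these subcomponents need not themselves belong to $\C_{\LP}$, so the base kernel can only be invoked once the recursion bottoms out at depth $0$. One has to verify carefully that the blocking-set-based component reduction at each level, combined with the LP-preserving gadgets used when the top vertices are absorbed into the modulator, composes across the $d$ recursive layers without blowing up the modulator size beyond a polynomial in $|X|$; this is precisely where $f$-robustness is used in a non-trivial way. Once this is settled, the kernel bounds multiply across the finitely many layers and yield a randomized polynomial kernel whose exponent depends on $d$ and $\bsc{\C_{\LP}}$, as claimed.
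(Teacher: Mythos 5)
Your high-level plan is the right one and matches the paper: reduce to the non-hereditary analogue of Theorem~\ref{thm:intro:hereditary-kernel} with $\C=\C_{\LP}$, using the non-hereditary blocking-set bound and the component-reduction machinery. However, two of your four verification steps have genuine gaps. For~(ii), you claim blocking-set membership is testable by computing $\OPT(G-Y)$ directly, but this is exactly where non-hereditariness bites: for $G\in\C_{\LP}$ the graph $G-Y$ need not lie in $\C_{\LP}$, and $|Y|=|N_G(Z)\cap V(H)|$ has no bound (one only controls $|Z|\leq\bsc{\C}$), so you cannot invoke the FPT algorithm for $\OPT-\LP$ either. The paper avoids this by a single-vertex gadget (Lemma~\ref{lemma::non_hereditary_verify_bs}): add a new vertex adjacent to all of $Y$, which lands in $\C+1$, and compare $\OPT(G)$ with $\OPT(\hat G)$; this needs only poly-time solvability of \VC on $\C_{\LP}+c$ for constant $c$, which follows since $\OPT-\LP\leq c$ there.

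For~(iv), your argument for $f$-robustness (``attaching a bounded-size gadget \dots preserves LP-tightness'') is not a proof and does not even say what $f$ is. The content you are missing is the Hols--Kratsch bound that $\bsg{G}\leq 2(\OPT(G)-\LP(G))+2$; combined with the easy observation that every $G\in\C_{\LP}+c$ satisfies $\OPT(G)-\LP(G)\leq c$, this gives $f$-robustness with $f(c)=2c+2$, which is what feeds Theorem~\ref{theorem::ub_nonhereditary}. Relatedly, your claimed bound $\bsd{\C_{\LP}}{d}=\Oh(\bsc{\C_{\LP}}\cdot 2^d)$ is the \emph{hereditary} form of the bound; the paper's non-hereditary theorem only yields $\bsd{\C_{\LP}}{d}\leq(d+1)2^d+1$ (Corollary~\ref{corollary::bound_C_LP}), with an extra factor of $d$. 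This does not affect polynomiality for fixed $d$, but you should not quote the tight hereditary formula here. Finally, the ``known'' base kernel for a $\C_{\LP}$-modulator is not something you can just cite off the shelf: the paper derives it by showing (Lemma~\ref{lemma::bound_modulator_OPT_LP}, via Nemhauser--Trotter) that a $\C_{\LP}$-modulator of size $2(\OPT-\LP)$ always exists, and then applying the Kratsch--Wahlstr\"om randomized kernel for the above-LP parameterization; that bridging lemma is part of the proof and should appear.
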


This result subsumes several polynomial kernelizations for \VC (except for their size bounds).
\section{Preliminaries}\label{section::preliminaries}
For $n \in \mathbb{N}$ we use $[n]$ to denote $\{1,\ldots,n\}$.
\subparagraph*{Graphs}
A graph $G$ consists of a vertex set $V(G)$ and an edge set $E(G)$. All graphs considered in this paper are simple and undirected. For a vertex set $S \subseteq V(G)$ we use $G[S]$ to denote the subgraph of $G$ induced by $S$ and $G - S$ to denote the graph $G[V(G)\setminus S]$. For $v \in V(G)$, we use $G-v$ as shorthand for $G-\{v\}$. We define $\bar{G}$ as the complement of $G$, that is, $V(\bar{G}) := V(G)$ and $E(\bar{G}) := \{\{u,v\} \mid \{u,v\} \notin E(G) \wedge u \neq v \wedge u,v\in V(G)\}$.

A set $S \subseteq V(G)$ is a \emph{vertex cover} of a graph $G$, if for each edge $e \in E(G)$, at least one of its endpoints is contained in $S$. We will use $\OPT(G)$ to denote the size of a minimum vertex cover of $G$.
A \emph{matching} in $G$ is a set $F \subseteq E(G)$ such that no two edges in $F$ share an endpoint. We say a vertex of $G$ is \emph{matched} by $F$ if it is incident to an edge in $F$. We use $\MM(G)$ to denote the size of a maximum matching in $G$.

\begin{definition}[{$\LP(G)$}]
Let $G$ be a graph. The linear program relaxation for \VC for $G$ is defined as
\[\LP(G) = \min\left\{\sum_{v \in V(G)} x_v \mid \forall \{u,v\} \in E(G) : x_u + x_v \geq 1 \wedge \forall v \in V(G) : 0 \leq x_v \leq 1\right\}.\]
A \emph{feasible solution} is a setting to the variables $x_v$ for all $v\in V(G)$ that satisfies the conditions of the linear program above. It is well known that for vertex cover, there is an optimal feasible solution such that $x_v \in \{0,\frac{1}{2},1\}$ for all $v\in V(G)$. We call a solution for which this holds a \emph{half-integral} solution.
\end{definition}
\begin{definition}\label{def:elimination-dist-to-C}
Let $\mathcal{C}$ be a graph class and let $G$ be a graph. We define the \emph{elimination distance to \C} as
\begin{align*}
     \ed{\C}{G} := \begin{cases}
                        0 & \text{, if }G \in \C \\
                        \min_{v \in V(G)}\ed{\C}{G-\{v\}}+1 & \text{, if } G \notin \C \text{ and $G$ is connected}\\
                        \max_{i \in [t]} \ed{\C}{G_i} & \hspace{-1.1cm}\text{, if $G$ consists of connected components $G_1,\ldots,G_t$. }
                    \end{cases}
\end{align*}
By this definition, the \emph{treedepth} of a graph $G$, denoted by $\td{G}$, corresponds to its elimination distance to the empty graph.

We can define an \emph{elimination forest} corresponding to the definition above. For a graph $G$, its elimination forest (w.r.t. \C) is a rooted forest $T$ (i.e. every connected component of $T$ has a root) of height $\ed{\C}{G}$ in which every vertex $x\in T$ has a bag $B_x\subseteq V(G)$ such that
\begin{itemize}
\item For every non-leaf $x \in V(T)$, $|B_x|=1$. In this case we may identify $x$ with the only element in $B_x$.
\item For every leaf $x \in V(T)$, $G[B_x] \in \C$ (we call these the \emph{base components} of $T$).
\item Every vertex $v \in G$ is contained in exactly one bag in $T$.
\item If $\{u,v\} \in E(G)$ and $u \in B_x$, $v \in B_y$ then either $x=y$, $y$ is an ancestor of $x$, or $x$ is an ancestor of $y$ in $T$.
\end{itemize}
Observe, if $G$ is connected then $T$ is a tree and has exactly one root.
\end{definition}

Let $\C$ be a graph class, let $G$ be a graph and let $X \subseteq V(G)$. We say $X$ is a \emph{\C-modulator} if $G-X \in \C$. We say $X$ is a \emph{$(\C,d)$-modulator} if $\ed{\C}{G-X} \leq d$. When considering \VC parameterized by the size of a \C-modulator or a $(\C,d)$-modulator, we will assume that this modulator is given on input. As such, inputs to the problem are triplets $(G,k,X)$ such that $X$ is a modulator in $G$ and the problem is to decide whether $G$ has a vertex cover of size $k$.

For a graph class \C,  let $\C + c$ be the graph class consisting of all graphs that have a \C-modulator of size at most $c$, such that $\C+c:=\{G \mid \exists X \subseteq V(G), |X|\leq c \colon G-X \in \C\}$ where $c$ is a positive integer.

\begin{definition}[Blocking set]
Let $G$ be a graph and let $Y \subseteq V(G)$ be a subset of its vertices. We say that $Y$ is a \emph{blocking set} in $G$ if there exists no vertex cover $S$ of $G$ such that $Y \subseteq S$ and $|S| = \OPT(G)$. In other words, there is no optimal vertex cover of $G$ that contains $Y$. A blocking set $Y$ is \emph{minimal} if no strict subset  of $Y$ is also a blocking set.
\end{definition}

Let $G$ be a graph, we use $\bsg{G}$ to denote the size of the largest minimal blocking set in $G$. For a graph class $\C$, let $\bsc{\C}:= \max_{G \in \C} \bsg{G}$, let $\bsc{\C} := \infty$ if the minimal blocking set size of graphs in this graph class can be arbitrarily large. Define $\bsd{\C}{d} := \max\{\bsg{G} \mid \ed{\C}{G} \leq d\}$.

\subparagraph*{Robustness} Throughout the paper we will assume all graph classes \C are \emph{robust}, meaning that they are closed under disjoint union and under removing connected components. Observe that this does not influence the blocking set size; it is easy to see that if a graph $G$ has connected components $G_1$ and $G_2$, then $\bsg{G} = \max\{\bsg{G_1},\bsg{G_2}\}$.

\subparagraph{Parameterized complexity}
A \emph{parameterized problem} \Q is a subset of $\Sigma^* \times \mathbb{N}$, where $\Sigma$ is some finite alphabet. While it is customary to denote the parameter value by $k$, in this paper we will generally use $\ell$ for the parameter value and $k$ for the (desired) solution size.

\begin{definition}\label{def:kernel}
Let \Q be a parameterized problem and let $f\colon \mathbb{N}\to \mathbb{N}$ be a computable function. A \emph{kernel for \Q of size $f(\ell)$} is an algorithm that, on input $(x,\ell) \in \Sigma^* \times \mathbb{N}$, takes time polynomial in $|x|+\ell$ and outputs an instance $(x',\ell')$ such that:
\begin{enumerate}
  \item $|x'|$ and $\ell'$ are bounded by $f(\ell)$, and
  \item $(x',\ell') \in \Q$ if and only if $(x,\ell) \in \Q$.
\end{enumerate}
The algorithm is a \emph{polynomial kernel} if $f(\ell)$ is a polynomial.
\end{definition}

We will often use that if there is a polynomial-time linear-parameter transformation from a parameterized problem \Q to a parameterized problem $\Q'$, then kernelization lower bounds established for \Q also hold for $\Q'$ \cite{BodlaenderJK14,BodlaenderTY11}.

We say a parameterized problem \Q is \emph{fixed-parameter tractable (FPT)}  if there is an algorithm solving it in running time $f(\ell)\cdot \text{poly}(n)$, where $n$ is the input size and $\ell$ is the parameter. It is well-known \cite{CyganFKLMPPS15} that \Q is FPT if and only if \Q admits a (not necessarily polynomial) kernel.

\subparagraph{Complexity-theoretic assumptions}
We say a language $L$ is in $\mathsf{RP}$ if there is a randomized polynomial-time Turing machine T such that (i) for all $x\notin L$, $T$ rejects $x$, and (ii) for all $x\in L$, $T$ accepts $x$ with probability at least $\frac{1}{2}$. By this definition, it follows straightforwardly that $\mathsf{P \subseteq RP\subseteq NP}$ and $\mathsf{RP \subseteq BPP}$~\cite{KO198239}.

In this paper we will obtain results under the assumption that $\mathsf{NP \neq RP}$. Clearly, if the two often believed conjectures $\mathsf{P = BPP}$ and $\mathsf{P \neq NP}$ hold, we must have that $\mathsf{P = RP}$ and $\mathsf{NP \neq RP}$. Furthermore, this assumption is weaker than the standard assumption \ncontainment under which kernelization lower bounds are obtained. Since it is known that $\mathsf{RP \subseteq P/poly}$ \cite[Theorem I]{Adleman78}, it follows that $\mathsf{RP = NP}$ would imply $\mathsf{NP \subseteq P/poly}$, implying \containment.

\section{Relation between minimal blocking sets and polynomial kernels}\label{section::relation}

In this section, we show relations between the size of minimal blocking sets in \C and kernelization bounds for \VC parameterized by a \C-modulator. We start by proving that when \C contains graphs with large minimal blocking set size, then this transfers to kernelization lower bounds for \VC parameterized by a \C-modulator. This shows that bounded minimal blocking set size is necessary for having a polynomial kernel. We then show, however, that bounded minimal blocking set size is not \emph{sufficient} for getting a polynomial kernel parameterized by a modulator to $\C$. Concretely, we exhibit a graph class \C with blocking set size $1$, for which \VC parameterized by a \C-modulator has no polynomial kernel unless $\RP=\NP$. We end with a positive result: under some additional conditions on \C, if \C has bounded minimal blocking set size this implies that we can efficiently reduce the number of connected components in $G-X$, if $G$ is a graph and $X$ is a \C-modulator.

\subsection{Polynomial kernel implies a bound on the minimal blocking set size}\label{section::kernelimpliesbmbss}
\newcommand{\vcopt}{\ensuremath{\mathrm{OPT}}\xspace}
\newcommand{\dVC}{$d$-\textsc{Vertex Cover}\xspace}
\newcommand{\dprimeVC}{$d$-\textsc{Vertex Cover}\xspace}

In this section we prove Theorem~\ref{theorem:lb:intro}, showing that if $\C$ is a graph class where minimal blocking sets can have size $d$, then this gives a kernelization lower bound for \VC when parameterized by the size of a \C-modulator. Thus, under the assumption that \ncontainment, the  theorem shows that having bounded blocking set size is necessary to obtain a polynomial kernel in the following sense. For a graph class \C closed under disjoint union, for which \VC parameterized by a modulator to \C admits a polynomial kernel of size $\Oh(k^d)$, it must hold that $\bsc{\C} \leq d$.

\begin{thm:theorem1}
\theoremone
\end{thm:theorem1}
\begin{proof}
For $d = 1$,  observe that a kernel of size $\Oh(\ell^{1-\varepsilon})$ can be ruled out by the following argument. Suppose such a kernel exists for $\varepsilon > 0$. Let $(G,k)$ be an input instance. Let $H$ be an arbitrary constant-size graph in $\C$. Let $G'$ be the disjoint union of $H$ and $G$, implying that $G'$ has a modulator to \C of size $|V(G)|$. Let $k' := k + \vcopt(H)$. It is easy to observe that $G'$ has a vertex cover of size $k'$ if and only if $G$ has a vertex cover of size $k$. However, using the hypothetical kernel we can solve $(G',k')$ in polynomial time, by repeatedly applying the kernelization algorithm until we obtain a constant-size instance. This would imply that $\mathsf{P = NP}$, implying \containment.

For $d\geq 2$, the lower bound is obtained by a linear-parameter transformation from \dprimeVC. An input to this problem consists of a hypergraph $G$ and integer $k$ where every hyperedge consists of exactly $d$ vertices. The problem is to decide whether $G$ has a vertex cover of size at most $k$. A vertex cover of a hypergraph is a set $X \subseteq V(G)$ such that for every edge $e \in E(G)$ we have $e \cap X \neq \emptyset$.

The lower bound will then follow from the fact that for $d\geq 2$, \dprimeVC parameterized by the number of vertices $n$ does not have a kernel of size $\Oh(n^{d-\varepsilon})$ for any $\varepsilon > 0$ unless \containment \cite[Theorem~2]{DellM14Satisfiability}.

Suppose we are given an instance $(G,k)$ for \dprimeVC with $V(G) = \{x_1,\ldots,x_n\}$ and $E(G) = \{e_1,\ldots,e_m\}$, we show how to construct an instance $(G',k')$ for \VC. Refer to Figure~\ref{fig:lpt} for a sketch of $G'$.

\begin{figure}
\centering
\includegraphics{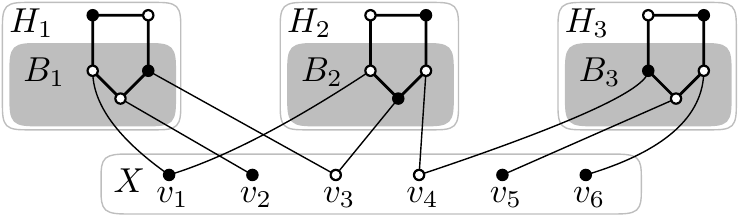}
\caption{The \VC instance $G'$ obtained in the proof of Theorem~\ref{theorem:lb:intro}, corresponding to instance $(G,2)$ with $V(G) = \{x_i \mid i \in [6]\}$, and $E(G)= \{\{x_1,x_2,x_3\},\{x_1,x_3,x_4\},\{x_4,x_5,x_6\}\}$. Since $G$ has a vertex cover of size $2$, $G'$ has a vertex cover of size $2 + 3\cdot3$,  indicated in white.}
\label{fig:lpt}
\end{figure}

Start by adding a vertex $v_i$ for all $i \in [n]$. Let $X := \{v_i\mid i\in[n]\}$, the set $X$ will be a modulator to $\mathcal{C}$ in $G$.
Let $H \in \C$ be a graph with a minimal blocking set of size $d$. For all $j \in [m]$, create a new copy of $H$ called $H_j$ and add it to $G'$. Choose a minimal blocking set $B_j$ of size $d$ in $H_j$ and enumerate the vertices as $B_j := \{b^j_1,\ldots,b^j_{d}\}$.

We now connect the vertices of graphs $H_j$ to vertices in $X$, depending on the vertices contained in edge $e_j$.  For each $j \in [m]$, for each $q \in [d]$, if the $q$'th vertex in edge $e_j$ equals $x_i$, connect vertex $b^j_q$ to vertex $v_i$. This concludes the construction of $G'$.

Observe that $X$ is a modulator to $\C$ in $G'$ since $G' - X$ consists of disjoint copies of $H \in \C$, and \C is closed under disjoint union.
Furthermore, $|X| = n = \Oh(n)$ which is appropriately bounded for a linear-parameter transformation. Let $k' := m\cdot \vcopt(H) + k$, we show that $G'$ has a vertex cover of size $k'$, if and only if $G$ has a vertex cover of size $k$.

$(\Rightarrow)$ Suppose $G$ has a vertex cover $S$ of size $k$, we show how to construct a vertex cover $S'$ of size $k'$ in $G'$. For $i \in [n]$, if $x_i \in S$ then let $v_i \in S'$. This completely defines $X\cap S'$, we show how to extend $S'$ to a vertex cover of the entire graph, using at most $\vcopt(H)$ vertices from each copy of $H$.

For all $j \in [m]$, let $B_j':=\{b \in B_j \mid \{b,x\} \in E(G') \wedge x \in X \setminus S'\}$. Observe that $B_j' \subsetneq B_j$, since there is at least one vertex $b \in B_j$ such that its  neighbor in $X$ is contained in $S'$, as $S$ was a vertex cover of $G$.
Add a minimum vertex cover of $H_j$ that contains $B_j'$ to $S'$. Since $B_j'\subsetneq B_j$ and $B_j$ is a minimal blocking set for $H_j$, it follows that this vertex cover has size $\vcopt(H)$.
Hereby, $S'$ has size $k'$ and it is easy to verify that $S'$ is indeed a vertex cover of~$G'$.

$(\Leftarrow)$ Suppose $G'$ has a vertex cover $S'$ of size $k'=k+m\cdot \vcopt(H)$. We start by showing that we can modify $S'$ such that $|S' \cap H_j|=\vcopt(H)$ for all $j\in[m]$. Suppose for some $j\in[m]$, $|S' \cap V(H_j)|>\vcopt(H)$. Choose an arbitrary $b \in B_j$ and add $N(b) \cap X$ to $S'$, note that $|N(b) \cap X|=1$. Replace $S'\cap V(H_j)$ by a vertex cover of $H_j$ of size $\vcopt(H)$ that contains all vertices in $B_j\setminus \{b\}$, observe that such a vertex cover exists since $B_j$ is a minimal blocking set. As such, we from now on assume that $|S' \cap V(H_j)|=\vcopt(H)$ for all $j \in [m]$.

Define $S:= \{x_i \mid v_i \in S'\cap X, i\in[n]\}$ and observe that $|S| = k$. We show that $S$ is a vertex cover of $G$. Suppose there is an edge $e_j \in E(G)$ such that $e_j \cap S = \emptyset$. But then, for all $b \in B_j$ we have that there exists a vertex $x \in X$ such that $\{b,x\} \in E(G')$ and $x \notin S'$. Since $S'$ is a vertex cover of $G'$ that implies $S' \cap V(H_j)$ is a vertex cover of $H_j$ containing all vertices from $B_j$. This however contradicts the fact that $|S'\cap V(H_j)| = \vcopt(H)$ and $B_j$ is a blocking set.
\end{proof}

The above theorem shows that the existence of large minimal blocking sets allows us to prove kernelization lower bounds for \VC. It can be seen that many of the existing lower bound results for \VC when parameterized by a modulator to~\C for some graph class \C, relied on this same idea. Jansen \cite[Theorem~5.3]{Jansen13Thesis} showed that \VC parameterized by a modulator to outerplanar graphs has no polynomial kernel, by a polynomial parameter transformation starting from \textsc{$d$-cnf-sat}. The proof relies on the construction of an outerplanar clause gadget with blocking set size $d$, for any $d$. The same construction was used to obtain a lower bound for \VC parameterized by a modulator to constant treedepth \cite[Theorem~2]{JansenP18Polynomial}. Furthermore, we the mention the proof that \VC parameterized by the size of a modulator to a mock forest (a graph where no two cycles share a vertex) is unlikely to have a polynomial kernel \cite[Theorem~2]{FominS16}. Again, here it is used that a mock forest has unbounded minimal blocking set size.

When choosing \C as the graph class consisting of all cluster graphs in which each clique has size at most $d$, a lower bound was obtained by Majumdar et al.~\cite{MajumdarRS18}. They show the problem is unlikely to have a kernel of size $\Oh(n^{d-\varepsilon})$, again  relying on the fact that a size-$d$ clique has minimal blocking set size $d$.

As such, Theorem~\ref{theorem:lb:intro} generalizes most existing lower bounds for \VC when parameterized by the size of a modulator to \C for some graph class \C. We however mention that \VC parameterized by the size of a modulator to a clique has also been shown to not admit a polynomial kernel \cite{BodlaenderJK14}, unless \containment. This result does not fit our framework. While cliques have unbounded blocking set size, the class of cliques is not closed under disjoint union. This introduces additional difficulties when constructing the type of transformation we give above. Indeed, the lower bound for \VC parameterized by the size of a modulator to a clique is obtained by a cross-composition.

\subsection{Bounded minimal blocking set size is not sufficient}\label{section::bmbssisnotsufficient}
\newcommand{\USAT}{\textsc{Unique-SAT}\xspace}
\newcommand{\UTSAT}{\textsc{Unique-$3$-SAT}\xspace}
Now that it is clear that, proving that a graph class has bounded blocking set size is essential towards obtaining a polynomial kernel for \VC parameterized by the size of a modulator to this graph class, one may wonder whether this condition is  also sufficient. It turns out that it is \emph{not}. Even worse, there exists a graph class \C for which all minimal blocking sets have size $1$, for which \VC is unlikely to be solvable in polynomial time. As such, \VC parameterized by the size of a modulator to \C is unlikely to be FPT, as \VC is not likely to be solvable in polynomial time when the parameter value is zero. Since a problem that is not FPT does not admit any kernel at all, this proves our result. More precisely, we obtain the following theorem. (The graph class \C obtained in the proof could be made robust by taking its closure under disjoint union and deletion of components without breaking the fact that a polynomial-time algorithm for \VC is unlikely.)

\begin{thm:theorem2}
\theoremtwo
\end{thm:theorem2}
\begin{proof}
It is known that the \USAT problem cannot be solved in polynomial-time unless $\mathsf{NP = RP}$ \cite[Corollary 1.2]{DBLP:journals/tcs/ValiantV86}. An input to \USAT is a CNF-formula $\mathcal{F}$ that has either exactly one satisfying solution or is unsatisfiable. The problem is to decide whether $\mathcal{F}$ is satisfiable. It can be shown that the same result holds for \UTSAT \cite[Example 26.7]{DBLP:books/daglib/0071219}, where the input formula is further restricted to be in $3$-CNF.

We show that the following polynomial-time reduction from \UTSAT to \VC exists.
\begin{claim}
There is a polynomial-time reduction from \UTSAT to \VC, that given a formula $\mathcal{F}$, outputs an instance $(G,k)$ for \VC such that:
\begin{itemize}
\item If $\mathcal{F}$ has exactly one satisfying assignment, then $G$ has a unique minimum vertex cover of size $k$.
\item If $\mathcal{F}$ is unsatisfiable, then $G$ has a unique minimum vertex cover of size $k + 1$.
\end{itemize}
\end{claim}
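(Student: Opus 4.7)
My plan is to modify the classical Karp reduction from $3$-SAT to \VC in two steps in order to achieve both uniqueness of the minimum vertex cover and the correct $k$-versus-$(k+1)$ size split. First, I would parsimoniously preprocess $\mathcal{F}$ to a formula $\mathcal{F}'$ in which every satisfying assignment makes exactly one literal true per clause; a standard parsimonious reduction from $3$-SAT to a variant of $1$-in-$3$-SAT suffices, and because it preserves the number of satisfying assignments the \UTSAT promise transfers to $\mathcal{F}'$. Then I would apply Karp's construction to $\mathcal{F}'$: a single edge $\{v_i,\bar v_i\}$ per variable, a triangle $T_j=\{a_j,b_j,c_j\}$ per clause, and literal-to-triangle edges, with threshold $k := n' + 2m'$. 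The exactly-one-true-literal property determines, for each clause triangle, which vertex must be excluded from the cover, so together with the uniqueness of the satisfying assignment this yields a unique minimum vertex cover of size $k$ in the satisfiable case and no vertex cover of size $\le k$ in the unsatisfiable case.

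Second, I would attach a \emph{slack gadget} so that the graph always admits a canonical vertex cover of size $k+1$ that is uniquely optimal precisely when $\mathcal{F}$ is unsatisfiable. Concretely, I would introduce a single auxiliary vertex $z$ together with a small auxiliary structure, with $z$ adjacent to one well-chosen vertex in each clause gadget and linked into the variable gadgets so that (i) in the satisfiable case the assignment-induced cover of size $k$ already covers every $z$-incident edge and remains the unique optimum, and (ii) in the unsatisfiable case the unique minimum vertex cover consists of $z$ together with a canonical ``nearly-satisfying'' cover of the Karp graph, of total size exactly $k+1$.

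The main obstacle is arranging the slack gadget so that the unsatisfiability cost is capped at exactly one extra vertex, since in the raw Karp construction this cost scales additively with the number of unsatisfied clauses. The cleanest route I foresee is to precede the Karp step with a further parsimonious normalisation making every unsatisfiable instance admit a near-satisfying assignment that fails exactly one clause; at that point a single pendant edge already realises the required $k$ versus $k+1$ gap. Establishing uniqueness of the size-$(k+1)$ cover will then reduce to a local combinatorial analysis of how the pendant edge interacts with the triangle and variable gadgets, and in particular ruling out alternative ``near-satisfying'' covers of the same size.
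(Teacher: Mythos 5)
Your high-level goal (unique optimum of size $k$ when satisfiable, unique optimum of size $k+1$ when not) is the right target, but the route you sketch has a gap at its central step that I don't think can be repaired as stated.

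The crux of your plan is the ``further parsimonious normalisation making every unsatisfiable instance admit a near-satisfying assignment that fails exactly one clause.'' This is not a standard reduction, and it is asking for a lot: you need a polynomial-time transformation that (a) preserves the number of satisfying assignments exactly (so the \UTSAT promise survives), and (b) forces the MaxSAT value of every \emph{unsatisfiable} output to be exactly $m'-1$. Property (b) is a strong global structural condition that parsimonious reductions do not control; in the raw Karp construction the vertex-cover deficit really does scale with the number of unsatisfied clauses of the best assignment, and nothing in the 1-in-3 normalisation caps that at one. Worse, even if you had such a reduction, uniqueness of the size-$(k+1)$ cover would further require the near-satisfying assignment (together with the choice of which clause it violates) to be \emph{unique}, which is yet another promise you would have to manufacture. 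Finally, the ``slack gadget'' step has a concrete bug: you cannot make $z$ adjacent to ``one well-chosen vertex in each clause gadget'' so that the assignment-induced cover already covers $z$'s edges, because which two triangle vertices (and which variable endpoint) end up in the cover depends on the satisfying assignment, which the reduction cannot know in advance. There is no vertex per gadget that is in \emph{every} size-$k$ cover.

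The paper avoids all of this by working with \textsc{Clique} rather than \VC directly. It takes a parsimonious reduction from \UTSAT to \textsc{Clique} (yielding a graph $G$ with parameter $\ell$), then builds $G'$ from \emph{two} coupled copies $G_1,G_2$ of $G$ (so that $v_i^1$ and $v_i^2$ have the same closed neighbourhood, forcing any $\ell$-clique in one copy to pair with its mirror in the other), plus a disjoint fallback clique $Z$ of size $2\ell-1$. With $\ell':=2\ell$: if $\mathcal{F}$ has a unique satisfying assignment, the unique $\ell$-clique in $G_1$ and its mirror in $G_2$ give the unique $2\ell$-clique and $Z$ cannot compete; if $\mathcal{F}$ is unsatisfiable, $G_1\cup G_2$ has no clique of size $2\ell-1$, so $Z$ is the unique maximum clique of size $2\ell-1=\ell'-1$. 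Complementing translates ``unique maximum clique of size $\ell'$ vs.\ $\ell'-1$'' into ``unique minimum vertex cover of size $k$ vs.\ $k+1$.'' The doubling-plus-fallback trick is what caps the gap at exactly one, cleanly and independently of how badly $\mathcal{F}$ fails; that is the idea missing from your proposal.
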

\begin{claimproof}
Instead of showing the above result for \VC, we will for simplicity reduce from \UTSAT to \textsc{Clique}. We give a reduction such that when $\mathcal{F}$ has a unique satisfying assignment, $G$ has a unique clique of size $\ell$, and otherwise $G$ has a unique clique of size $\ell-1$. Since $S \subseteq V(G)$ is a clique in $G$ if and only if $V(G)\setminus S$ is a vertex cover in $\bar{G}$, the desired reduction for \VC follows from taking the complement of $G$ and letting $k := |V(G)|-\ell$.

Let a formula $\mathcal{F}$ be given. There is a parsimonious reduction from \UTSAT to \textsc{Clique} (see for example \cite[Example 26.8]{DBLP:books/daglib/0071219}), meaning that there is a polynomial-time many-one reduction from \UTSAT to \textsc{Clique} such that the number of size-$\ell$ cliques in $G$ corresponds to the number of satisfying assignments of $\mathcal{F}$.  Use this reduction to obtain an instance $(G,\ell)$ for \textsc{Clique}. Clearly, $G$ has the property that if $\mathcal{F}$ has a unique satisfying assignment, then~$G$ has a unique maximum clique. However, we also want to guarantee that~$G$ has a unique (albeit smaller) maximum clique when $\mathcal{F}$ is unsatisfiable.

 We now show how to obtain an instance $G'$ satisfying both these requirements. Initialize~$G'$ as two copies of graph $G$, say $G_1$ and $G_2$. Label the vertices of $G_1$ and $G_2$ as $v_1^1,\ldots,v_n^1$ and $v_1^2,\ldots,v_n^2$ such that $G_1$ and $G_2$ are isomorphic by the function mapping $v_i^1$ to $v_i^2$ for all $i \in [n]$.

For all $i \in [n]$, add the edge $\{v_i^1,v_i^2\}$ to $G'$. Furthermore,
 add the edges  $\{\{v_i^1,v_j^2\} \mid \{v_i^1,v_j^1\}\in E(G_1)\}$ to $G'$. In this way, vertices $v_i^1$ and $v_i^2$ have the same closed neighborhood in $G'$, for any $i \in [n]$.

 Conclude the construction of $G'$ by adding a new set of vertices $Z$ of size $2\ell-1$ to the graph, and letting $Z$ be a clique. We show that the conditions of the lemma statement hold for $(G',\ell')$ where we let $\ell' := 2\ell$.

 Suppose $\mathcal{F}$ has exactly one satisfying assignment. By this definition, there is a size-$\ell$ clique in $G_1$, let this clique be $K_1$. Define $K_2 := \{v_i^2\mid v_i^1 \in K_1, i \in [n]\}$. Clearly, $K_2$ is a clique in $G_2$, we show that $K_1 \cup K_2$ is a clique in $G'$. Let $u,v \in K_1\cup K_2$ with $u \neq v$. If $u,v \in K_1$ or $u,v\in K_2$ it follows $\{u,v\} \in E(G')$, so assume without loss of generality that $u = v_i^1$ and $v = v_j^2$ for $i,j\in[n]$. If $i=j$, $\{v_i^1,v_i^2\} \in E(G')$ by definition. If $i \neq j$, then $v_j^1 \in K_1$. As such, $\{v_i^1,v_j^1\} \in E(G_1)$ since $K_1$ is a clique, and thereby we again have that $\{v_i^1,v_j^2\} \in E(G')$, as desired.
 Observe that this clique is unique. No size-$\ell'$ clique in $G'$ can contain vertices from $Z$. Furthermore, $G_1$ and $G_2$ have no cliques of size larger than $\ell$. As such, any size-$\ell'$ clique $K$ in $G'$ has the property that $K \cap V(G_1) = \ell$ and $K \cap V(G_2) = \ell$. If $\mathcal{F}$ has exactly one satisfying assignment, then size-$\ell$ cliques are unique in $G_1$ and $G_2$.

 Suppose $\mathcal{F}$ is unsatisfiable. Thereby, $G_1$ and $G_2$ do not have a clique of size $\ell$. It is easy to see that the subgraph of $G'$ induced by the vertices of $G_1$ and $G_2$ thereby has no clique of size $\ell'-1$. Thereby, $Z$ is a unique size-$(\ell'-1)$ clique in $G'$, concluding the proof.
\end{claimproof}

To conclude the proof, let $\C$ be the graph class consisting of all graphs that are obtained via the reduction given in the claim above, when starting from a formula $\mathcal{F}$ that has zero or one satisfying assignments. As such, solving \VC on \C in polynomial time corresponds to solving \UTSAT in polynomial time, implying $\mathsf{NP = RP}$. Since any graph in \C has exactly one minimum vertex cover, we obtain that indeed $\bsc{\C} = 1$, as any vertex that is not part of the minimum vertex cover forms a (minimal) blocking set. \end{proof}

Observe that graphs in the graph class \C constructed in the proof of Theorem~\ref{theorem:intro:bounded-bs-not-enough} are always connected, since they are the complement of a disconnected graph. As such, \C is closed under removing connected components. However, \C is not robust because it is not closed under disjoint union. We can however define $\C'$ to contain all graphs for which all connected components lie in \C. Observe that $\C'$ is robust, but that $\bsc{\C'} = 1$ and \VC is not solvable in polynomial time on $\C'\supseteq \C$ unless $\mathsf{RP = NP}$.
\subsection{Reducing the number of components outside the modulator}
\label{subsec:reducing-num-components}
As mentioned in the previous subsections, bounded blocking set size is necessary to obtain polynomial kernels for \VC. Many papers that give polynomial kernels for \VC parameterized by the size of a \C-modulator showed that their graph class \C has bounded blocking set size, see for example \cite{BougeretS17,FominS16,HolsK17,JansenB13,MajumdarRS18}. Some of them used the blocking set size of class \C to bound the number of connected components. More precisely, given an instance $(G,k,X)$ of \VC with $G-X \in \C$ they showed that one can reduce the number of connected components of $G-X$ to $\Oh(|X|^{\bsc{\C}+1})$. We will show that one can reduce the number of connected components of $G-X$ to $|X|^{\bsc{\C}}$, as a first step towards proving Theorem~\ref{intro:thm:reduce-cc}. Here we assume that the class \C is robust in order to guarantee that deletion of connected components of $G-X$ again results in a graph of \C. In Section~\ref{subsubsec:poly-time-rrule-application} we discuss suitable conditions so that this component reduction can be done efficiently.

Let $(G,k,X)$ be an instance of \VC parameterized by the size of a \C-modulator.
First, we define the set $\X=\{Z \subseteq X \mid Z \text{ is an independent set in } G \text{ and } 1\leq|Z|\leq \bsc{\C}\}$ as the collection of \emph{chunks} of $X$. The intuition of defining the set $\X$ of chunks is to find sets in the modulator $X$ for which at least one vertex must be contained in any optimum vertex cover of $G$. The concept of chunks was first introduced by Jansen and Bodlaender \cite{JansenB13}.

To reduce the number of connected components of $G-X$, we use the following result due to Hopcroft and Karp \cite{HopcroftK73} which computes a certain crown-like structure in a bipartite graph. The second part of the theorem is not standard (but well known).

\begin{theorem}[\cite{HopcroftK73}] \label{theorem::matching}
    Let $G$ be an undirected bipartite graph with bipartition $V_1$ and $V_2$, on $n$ vertices and $m$ edges. Then we can find a maximum matching of $G$ in time $\Oh(m \sqrt{n})$. Furthermore, in time $\Oh(m \sqrt{n})$ we can find either a maximum matching that saturates $V_1$ or a set $Z \subseteq V_1$ such that $|N_G(Z)|<|Z|$ and such that there exists a maximum matching $M$ of $G-N_G[Z]$ that saturates $V_1 \setminus Z$.
\end{theorem}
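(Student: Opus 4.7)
The plan is to invoke the Hopcroft--Karp algorithm for the first half, and to use a standard König/Hall-style alternating-paths construction on its output to produce the set $Z$ for the second half.

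First, I would invoke the Hopcroft--Karp algorithm directly to obtain a maximum matching $M$ of $G$ in time $\Oh(m\sqrt{n})$; this gives the first sentence. For the second sentence, I would check in time $\Oh(n)$ whether $M$ saturates $V_1$. If it does, return $M$ and we are done. Otherwise, let $U\subseteq V_1$ be the set of $M$-unsaturated vertices, which is non-empty. Define
\[
    Z := \{v \in V_1 \mid v \text{ is reachable from } U \text{ by an } M\text{-alternating path in } G\},
\]
and let $W\subseteq V_2$ be the analogous set of $V_2$-vertices reachable from $U$ by $M$-alternating paths starting with a non-matching edge. Both $Z$ and $W$ can be computed by a single BFS that alternates edge types, in time $\Oh(m+n)$, so the total running time remains $\Oh(m\sqrt{n})$.

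Next I would verify three properties. (i) $N_G(Z) = W$: any neighbor of $Z$ in $V_2$ is reachable by extending an alternating path with a non-matching edge, and conversely every $w \in W$ is reached via a non-matching edge from some $v \in Z$; (ii) $|W| = |Z|-|U|$: since $M$ is maximum, there is no $M$-augmenting path, hence every $w \in W$ is $M$-matched, and the matching edge goes to a vertex of $Z \setminus U$ (its own match must lie on the alternating path, hence in $Z$), giving a bijection between $W$ and $Z\setminus U$. Combining (i)--(ii), $|N_G(Z)| = |Z|-|U| < |Z|$ as required. (iii) The restriction $M' := M \cap E(G-N_G[Z])$ saturates $V_1\setminus Z$: every $v \in V_1\setminus Z$ is $M$-matched (otherwise $v\in U\subseteq Z$), and its match partner $w$ cannot lie in $W=N_G(Z)$, for then either the alternating path to $w$ could be extended through the matching edge $\{v,w\}$ to place $v$ in $Z$, or $w$ would be reached via a matching edge, which would again force $v\in Z$. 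Hence $w\in V_2\setminus N_G(Z)$, so the edge $\{v,w\}$ lies in $G-N_G[Z]$ and $v$ is saturated by $M'$.

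Finally, I would observe that $M'$ is automatically a maximum matching of $G-N_G[Z]$: every matching in the bipartite graph $G-N_G[Z]$ has size at most $|V_1\setminus Z|$, and $M'$ already achieves this bound. This completes the second statement. The main conceptual step is the alternating-paths argument establishing (iii); everything else is routine bookkeeping or a direct appeal to Hopcroft--Karp~\cite{HopcroftK73}. No calculations beyond simple counting arguments are needed.
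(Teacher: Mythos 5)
The paper does not actually prove this statement; it is cited to Hopcroft and Karp, with the remark that the second part is ``not standard (but well known).'' Your proposal supplies the missing argument, and it is correct: you run Hopcroft--Karp, and if $V_1$ is not saturated you take $Z$ to be the $V_1$-side of the set of vertices reachable from the unsaturated $V_1$-vertices $U$ by $M$-alternating paths (the standard K\H{o}nig/Hall witness). The three verifications are sound: $N_G(Z)=W$ by definition of alternating reachability; every $w\in W$ is matched into $Z\setminus U$ because $M$ is maximum (no augmenting path), giving $|N_G(Z)|=|Z|-|U|<|Z|$; and each $v\in V_1\setminus Z$ is matched to a partner outside $W$ (else extending the alternating path through the matching edge $\{v,w\}$ would put $v$ into $Z$), so $M\cap E(G-N_G[Z])$ saturates $V_1\setminus Z$ and is therefore a maximum matching of $G-N_G[Z]$. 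One small wording wobble: your clause ``or $w$ would be reached via a matching edge'' is vacuous, since alternating paths from $U$ always enter $V_2$-vertices via non-matching edges; it does no harm but could be dropped. Since the paper gives no proof to compare against, the right way to summarize is simply that you have given the standard alternating-BFS proof of this folklore strengthening of Hopcroft--Karp, and it is correct.
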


We construct a bipartite graph $G_B$ to which we will apply Theorem \ref{theorem::matching} to find a set of connected components in $G-X$ that can be safely removed from $G$.
We denote the set of connected components in $G-X$ by $\F$. The two parts of the bipartite graph $G_B$ are the set $\X$ of chunks and the set $\F$ of connected components in $G-X$. More precisely, for every chunk $Z \in \X$ and for every connected component $H \in \F$ we add a vertex to the bipartite graph. To simplify notation we denote the vertex of $G_B$ that corresponds to a connected component $H \in \F$ resp.\ a chunk $Z \in \X$ by $H$ resp.\ $Z$. We add an edge between a vertex $H \in \F$ and a vertex $Z \in \X$ when $N_G(Z) \cap V(H)$ is a blocking set in $H$, i.e., when $\OPT(H-N_G(Z))+|N_G(Z) \cap V(H)| > \OPT(H)$.

It follows from Theorem \ref{theorem::matching} that there exists either a maximum matching $M$ of $G_B$ that saturates $\X$ or a set $\X' \subseteq \X$ such that $|N_{G_B}(\X')| <|\X'|$ and such that there exists a maximum matching $M$ of $G_B-N_{G_B}[\X']$ that saturates $\hat{\X}=\X \setminus \X'$.
If there exists a maximum matching $M$ of $G_B$ that saturates $\X$ then let $\X'=\emptyset$ and let $\hat{\X}=\X$. Let $\F_D = \F \setminus (N_{G_B}(\X') \cup V(M))$ be the set of connected components in $\F$ that are neither in the neighborhood of $\X'$ nor endpoint of a matching edge of $M$.

\begin{redrule} \label{rule::delete_cc}
    Delete all connected components in $\F_D$ from $G$ and decrease the size of $k$ by $\OPT(\F_D)$ the size of an optimum vertex cover in $\F_D$.
\end{redrule}

Observe that Reduction Rule \ref{rule::delete_cc} deletes also all connected components $H \in \F$ which have the property that for all sets $Z \in \X$ it holds that $N(Z) \cap V(H)$ is not a blocking set of $H$ because these connected components correspond to isolated vertices in the bipartite graph $G_B$.
Before we show the correctness of Reduction Rule \ref{rule::delete_cc} we show the following lemma which guarantees us the existence of certain optimum vertex covers of $G$.

\begin{lemma} \label{lemma::reduce_cc}
    There exists an optimum vertex cover $S$ of $G$ with $S \cap Z \neq \emptyset$ for all $Z \in \hat{\X}$.
\end{lemma}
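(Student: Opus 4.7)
The plan is to prove the lemma by contradiction via a size-preserving swap. Take an optimum vertex cover $S$ of $G$ that minimises the number of bad chunks $|\{Z \in \hat{\X} : S \cap Z = \emptyset\}|$; the goal is to show this number is $0$. Suppose otherwise and fix a bad $Z \in \hat{\X}$ together with its matched partner $H_Z$ given by $(Z,H_Z) \in M$. Every edge incident to $Z$ must be covered by $S$ and $S \cap Z = \emptyset$, so $N_G(Z) \subseteq S$ and in particular $N_Z := N_G(Z) \cap V(H_Z) \subseteq S \cap V(H_Z)$. The edge in $G_B$ tells us $N_Z$ is a blocking set of $H_Z$, so no optimum vertex cover of $H_Z$ contains $N_Z$; combined with the optimality of $S$ this forces $|S \cap V(H_Z)| \geq \OPT(H_Z) + 1$, giving one unit of slack inside $H_Z$ that I will trade for inserting a vertex of $Z$ into $S$.

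Concretely the swap will be $S' := (S \setminus V(H_Z)) \cup O \cup \{z^*\}$ for a vertex $z^* \in Z$ and a vertex cover $O$ of $H_Z$ still to be chosen. The size requirement $|S'| \leq |S|$ boils down to $|O| \leq |S \cap V(H_Z)| - 1$, so it suffices to pick $O$ of size $\OPT(H_Z)$. For $S'$ to remain a vertex cover of $G$, the only edges that could become uncovered are those from $V(H_Z)$ to $X \setminus (S \cup \{z^*\})$, and these are covered exactly when $O$ contains the forcing set $F_{z^*} := N_G((X \setminus S) \setminus \{z^*\}) \cap V(H_Z)$. Hence the construction succeeds as soon as I can choose $z^* \in Z$ for which $F_{z^*}$ is not a blocking set of $H_Z$, because then an optimum vertex cover of $H_Z$ containing $F_{z^*}$ exists.

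Producing such a $z^*$ is the main obstacle I foresee. The natural approach is to fix a minimal blocking set $M_Z \subseteq N_Z$ of $H_Z$; by minimality, for every $m \in M_Z$ there is an optimum vertex cover of $H_Z$ that contains $M_Z \setminus \{m\}$ but misses $m$. Since $m \in N_Z$ has some neighbour $z_m \in Z$, the intended choice is $z^* := z_m$, which removes the forcing contribution from $z^*$ to $m$; one then has to verify by a careful case analysis that $F_{z^*}$ is indeed non-blocking, ruling out that other vertices of $X \setminus S$ re-introduce a blocking structure. I expect the Hall-type condition satisfied by $\hat{\X}$ inside $G_B - N_{G_B}[\X']$ to be used here to exclude pathological interactions.

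Finally, once $z^*$ and $O$ are chosen, the symmetric difference $S \triangle S'$ is contained in $V(H_Z) \cup \{z^*\}$, which meets no other $Z' \in \hat{\X}$ except possibly through $z^* \in Z'$ (which only helps). Hence every $Z' \in \hat{\X}$ satisfied by $S$ is still satisfied by $S'$, while $Z$ is no longer bad, so $S'$ has strictly fewer bad chunks, contradicting the choice of $S$.
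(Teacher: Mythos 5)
Your high-level plan mirrors the paper's: use the matching in $G_B$ to identify a component $H_Z$ matched to a bad chunk $Z$, observe that $S$ pays $\OPT(H_Z)+1$ inside $H_Z$ because $N_G(Z)\cap V(H_Z)$ is a blocking set, and trade that surplus unit for inserting a vertex of $Z$ into the solution. However, your local one-chunk-at-a-time swap has a genuine gap at exactly the step you flag as ``the main obstacle,'' and that gap is not fixable by a smarter choice of $z^*$.

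Concretely: consider two bad singleton chunks $Z=\{a\}$ and $Z'=\{b\}$ in $\widetilde{\X}$, where $Z$ is matched to $H_Z$ by $M$ and $Z'$ is matched to some other component, but $Z'$ is also adjacent to $H_Z$ in $G_B$ (so $N_G(b)\cap V(H_Z)$ is by itself a blocking set of $H_Z$). With $Z$ fixed, your only choice is $z^*=a$, and then $F_{z^*}=N_G((X\setminus S)\setminus\{a\})\cap V(H_Z)\supseteq N_G(b)\cap V(H_Z)$ is still a blocking set; no optimum vertex cover $O$ of $H_Z$ can contain it. The attempt to pick $z^*$ via a minimal blocking subset $M_Z\subseteq N_Z$ does not help, because the blocking structure obstructing you need not come from $N_Z$ at all --- it can be introduced entirely by \emph{other} bad chunks adjacent to $H_Z$ in $G_B$. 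Since the swap cannot be performed, the minimal-counterexample/counting contradiction never gets off the ground.

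The paper avoids this by performing a \emph{simultaneous} repair: it adds one vertex from every $Z\in\widetilde{\X}$ at once, producing $\hat S$ which now intersects \emph{every} chunk of $\hat{\X}$ (the ones that were already hit by $S$, and the formerly bad ones too). Only after this global insertion does it argue, for each affected component $H$, that $Y_H := N_G(X\setminus\hat S)\cap V(H)$ cannot be a blocking set --- because if it were, its minimal blocking subset would correspond to a chunk of $\hat{\X}$ disjoint from $\hat S$, which is impossible by construction. In your formulation $F_{z^*}$ plays the role of $Y_H$, but with only $z^*$ added to $S$ the corresponding chunk could be another member of $\widetilde{\X}$, so no contradiction arises. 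To repair your proof you would need to move from the single-component exchange to the simultaneous exchange over all of $\widetilde{\F}$, at which point you essentially recover the paper's argument; in particular, the Hall-type condition you invoke is not what makes this work --- what makes it work is that $\hat S$ covers all of $\hat{\X}$.
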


\begin{proof}
    Let $S$ be an optimum vertex cover of $G$. If $S \cap Z \neq \emptyset$ for all $Z \in \hat{\X}$ then we are done, since $S$ fulfills the requirements of the lemma.
    Thus, assume that there exists at least one set $Z \in \hat{\X}$ such that $S \cap Z = \emptyset$.
    Let $\widetilde{\X}=\{Z \in \hat{\X} \mid S \cap Z = \emptyset\}$ be the set of sets in $\hat{\X}$ that have an empty intersection with the optimum vertex cover $S$, and let $\widetilde{\F} = \{H \in \F \mid \exists Z \in \widetilde{\X} \colon \{Z,H\} \in M\}$ be the set of connected component in $\F$ that are matched to a vertex in $\widetilde{\X}$ via the matching $M$ that saturates $\hat{\X}$; thus, $|\widetilde{\X}|=|\widetilde{\F}|$.
    \begin{claim} \label{claim::reduce_cc_not_opt}
        It holds for all connected components $H \in \widetilde{\F}$ that $|V(H) \cap S| > \OPT(H)$.
    \end{claim}
    \begin{claimproof}
        For every connected component $H \in \widetilde{\F}$ let $Z_H \in \widetilde{\X}$ be the chunk with $\{H,Z_H\} \in M$. Thus, the set $Y_H = N_G(Z_H) \cap V(H)$ is a blocking set in $H$. Since $Z_H \cap S = \emptyset$ it holds that the vertex cover $S\cap V(H)$ of $H$ must contain the blocking set $Y_H$; hence $|V(H) \cap S| > \OPT(H)$.
    \end{claimproof}
    Now, we construct an optimum vertex cover $S'$ of $G$ which fulfills the properties of the lemma. First, we add from each chunk $Z \in \widetilde{\X}$ one arbitrary vertex to the set $S$. We denote the resulting set by $\hat{S}$. It holds that $|\hat{S}| \leq |S| + |\widetilde{\X}|$.
    \begin{claim} \label{claim::reduce_cc_sol_size}
        For every connected component $H \in \widetilde{\F}$ there exists an optimum vertex cover $S_H$ of $H$ which contains the set $Y_H = N_G(X\setminus \hat{S}) \cap V(H)$.
    \end{claim}
    \begin{claimproof}
        Assume for contradiction that the claim does not hold. This implies that the set $Y_H$ is a blocking set of $H$. Let $Y'_H \subseteq Y_H$ be a minimal blocking set of $H$. Since $H$ is a connected component of a graph of the graph class $\C$ it holds that $|Y'_H| \leq \bsc{\C}$.
        For every vertex $y \in Y'_H$ choose an arbitrary vertex $x \in N_G(y) \cap (X\setminus \hat{S})$ and denote the resulting set by $Z_H$. The set $Z_H$ has size at most $\bsc{\C}$ and is an independent set because it is a subset of the independent set $X \setminus \hat{S}$. Therefore, $Z_H$ is a chunk in $\X$. By construction, the neighborhood of $Z_H$ in $H$ is a superset of $Y'_H$ and thus a blocking set in $H$. It follows that $Z_H$ is a chunk in $\hat{\X}$ because $\{Z_H,H\} \in E(G_B)$, and because $H$ is a vertex in $G_B-N_{G_B}[\X']$. But, by construction, every chunk $Z$ in $\hat{\X}$ has a nonempty intersection with $\hat{S}$. This is a contradiction to the assumption that $Y_H$ is a blocking set of $H$ and proves the claim.
    \end{claimproof}
    In a next step, we replace for every connected component $H \in \widetilde{\F}$ the set $S\cap V(H)$ in $\hat{S}$ by an optimum vertex cover $S_H$ in $H$ that contains the set $N_G(X\setminus \hat{S})$. The existence of such an optimum vertex cover $S_H$ follows from Claim \ref{claim::reduce_cc_sol_size}. We denote the resulting set by $S'$. It follows from Claim \ref{claim::reduce_cc_not_opt} that the set $S'$ has size at most $|S|$ because we replace for each connected component $H \in \widetilde{\F}$ the non-optimum vertex cover $S \cap V(H)$ by the optimum vertex cover $S_H$, more precisely, $|S'|\leq |\hat{S}| - |\widetilde{\F}| \leq |S| + |\widetilde{\X}| - |\widetilde{\F}| = |S|$.

    It remains to show that $S'$ is a vertex cover of $G$. We only add vertices of the modulator $X$ to the vertex cover and we change the vertex cover of the connected components in $\widetilde{\F}$. Thus, any edge that is possibly not covered by $S'$ must either be contained in one of the connected components of $\widetilde{\F}$ or between such a connected component and the modulator $X$. Both is not possible, because we add for each connected component $H \in \widetilde{\F}$ a vertex cover $S_H$ to the set $\hat{S}$ which contains the neighborhood of all vertices of $X$ that are not in the vertex cover $\hat{S}$. This concludes the proof.
\end{proof}

Now, we show the correctness of Reduction Rule \ref{rule::delete_cc} using Lemma \ref{lemma::reduce_cc}. Let $(\widetilde{G},\widetilde{k},X)$ be the reduced instance, i.e., $\widetilde{G}=G-\F_D$ and $\widetilde{k}=k-\OPT(\F_D)$. Obviously, if $(G,k,X)$ is a yes-instance then $(\widetilde{G},\widetilde{k},X)$ is a yes-instance. For the other direction, assume that $(\widetilde{G},\widetilde{k},X)$ is a yes-instance. Observe that $M$ is also a matching in $\widetilde{G}_B$ that saturates $\hat{\X}$ because we delete no connected component that is an endpoint of a matching edge. Furthermore, it holds that either $\hat{\X}=\X$ or $|N_{\widetilde{G}_B}(\X')|<|\X'|$ because we delete no connected component that corresponds to a vertex in $|N_{\widetilde{G}_B}(\X')|$. Thus, it follows from Lemma \ref{lemma::reduce_cc} that there exists an optimum vertex cover $\widetilde{S}$ of $\widetilde{G}$ with $\widetilde{S} \cap Z \neq \emptyset$ for all sets $Z \in \hat{\X}$.
Note that every connected component $H \in \F_D$ is only adjacent to vertices in $\hat{\X}$ in $G_B$. Since every set $Z \in \hat{\X}$ has a non-empty intersection with the set $\widetilde{S}$, it holds that there exists an optimum vertex cover $S_H$ of $H$ which contains the set $N_G(X\setminus \widetilde{S}) \cap V(H)$ (similar to Claim \ref{claim::reduce_cc_sol_size}).
Let $S$ be the set that results from adding for each connected component $H \in \F_D$ the optimum vertex cover $S_H$ to the set $\widetilde{S}$. By construction, it holds that $S$ is a vertex cover of $G$ of size $|\hat{S}|+\OPT(\F_D) \leq \hat{k} + \OPT(\F_D) =k$. This proves that $(G,k,X)$ is also a yes-instance. Overall, we showed that Reduction Rule \ref{rule::delete_cc} is safe.

\begin{theorem} \label{theorem::bound_numb_cc}
    Let $(G,k,X)$ be an instance of \VC parameterized by the size of a $\C$-modulator that is reduced under Reduction Rule \ref{rule::delete_cc}. The graph $G-X$ has at most $|X|^{\bsc{\C}}$ connected components.
\end{theorem}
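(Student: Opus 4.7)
The plan is to bound the number of surviving components of $G-X$ in two stages: first, show that after applying Reduction Rule~\ref{rule::delete_cc} the number of components is at most $|\X|$; second, show $|\X| \leq |X|^{\bsc{\C}}$ by a direct counting argument on the chunks.

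For the first step, the surviving components lie in $\F \setminus \F_D = \F \cap (N_{G_B}(\X') \cup V(M))$, and I would split on the two outcomes guaranteed by Theorem~\ref{theorem::matching}. If $M$ saturates all of $\X$ (so $\X' = \emptyset$), then the surviving components are exactly $V(M) \cap \F$, which has size $|M| = |\X|$ since each matching edge has one endpoint on the $\F$-side. Otherwise $|N_{G_B}(\X')| < |\X'|$ and $M$ is a maximum matching of $G_B - N_{G_B}[\X']$ that saturates $\hat{\X} = \X \setminus \X'$; so $V(M) \cap \F$ is disjoint from $N_{G_B}(\X')$, and the surviving components number $|N_{G_B}(\X')| + |\hat{\X}| < |\X'| + |\hat{\X}| = |\X|$. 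In either case the count is at most $|\X|$.

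For the second step, each chunk is a non-empty subset of $X$ of size at most $\bsc{\C}$, so $|\X| \leq \sum_{i=1}^{\bsc{\C}} \binom{|X|}{i}$. To conclude $|\X| \leq |X|^{\bsc{\C}}$ I would fix an arbitrary ordering on $X$ and exhibit an injection from non-empty subsets of size at most $\bsc{\C}$ into $X^{\bsc{\C}}$: send $Z = \{z_1 < \cdots < z_{|Z|}\}$ to the padded tuple $(z_1, \ldots, z_{|Z|}, z_{|Z|}, \ldots, z_{|Z|}) \in X^{\bsc{\C}}$. The original set $Z$ is recoverable as the longest strictly increasing prefix of the image, so the map is injective, giving the desired bound. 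Combined with the first step this yields $|\X| \leq |X|^{\bsc{\C}}$ components.

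The proof is essentially bookkeeping, so I do not expect a serious obstacle. The only mild subtlety is verifying in the second case that $N_{G_B}(\X')$ and $V(M) \cap \F$ are disjoint, which follows from the fact that $M$ is taken in the graph $G_B - N_{G_B}[\X']$; this disjointness is what makes the two contributions add rather than overcount, and is essential for obtaining the strict inequality that matches the desired bound.
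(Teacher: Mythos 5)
Your proposal is correct and follows essentially the same argument as the paper: after the reduction rule, surviving components are contained in $N_{G_B}(\X') \cup V(M)$, so their number is bounded by $|N_{G_B}(\X')| + |\hat{\X}| \leq |\X'| + |\hat{\X}| = |\X|$, and then $|\X| \leq |X|^{\bsc{\C}}$ by counting the chunks. The only cosmetic differences are that you split the first step into the two cases of Theorem~\ref{theorem::matching} (which the paper handles uniformly via $|N_{G_B}(\X')| \leq |\X'|$) and that you supply an explicit injection for the bound $\sum_{i=1}^{\bsc{\C}} \binom{|X|}{i} \leq |X|^{\bsc{\C}}$, which the paper simply asserts.
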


\begin{proof}
    Since instance $(G,k,X)$ is reduced under Reduction Rule \ref{rule::delete_cc} every connected component of $G-X$ is either in $N_{G_B}(\X')$ or an endpoint of a matching edge in $M$. Recall that $M$ is a matching in $G-N_{G_B}[\X']$ that saturates all vertices in $\hat{\X}$. Thus, the number of connected components in $G-X$ is bounded by $|N_{G_B}(\X')| + |\hat{\X}| \leq |\X'|+ |\hat{\X}| = |\X|$. The set $\X$ of chunks contains at most $\sum_{i=1}^{\bsc{\C}} \binom{|X|}{i} \leq |X|^{\bsc{\C}}$ many sets; hence $G-X$ has at most $|X|^{\bsc{\C}}$ many connected components.
\end{proof}

\subsubsection{Applying Reduction Rule \ref{rule::delete_cc} in polynomial time}
\label{subsubsec:poly-time-rrule-application}

To use Theorem~\ref{theorem::bound_numb_cc} to prove that we can efficiently reduce the number of connected components in $G-X$ when $X$ is a \C-modulator, we need to show that, under certain assumptions, Reduction Rule~\ref{rule::delete_cc} can be applied in polynomial time. We start by providing two sufficient conditions in the next lemma.

\begin{lemma} \label{lemma::red_rule_poly_time}
    If \bsc{\C} is bounded, if \VC is solvable in polynomial time on graphs of class $\C$ and if we can verify in polynomial time whether a given set $Y$ is a blocking set in a graph of class \C then we can apply Reduction Rule \ref{rule::delete_cc} in polynomial time.
\end{lemma}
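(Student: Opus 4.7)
The plan is to verify that every step in the construction preceding Reduction Rule~\ref{rule::delete_cc}, together with the rule itself, can be carried out in polynomial time under the three stated hypotheses. Write $d:=\bsc{\C}$, which is a constant by the first hypothesis.

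First I would enumerate the chunks $\X$. Since every chunk has size between $1$ and $d$, there are at most $\sum_{i=1}^{d}\binom{|X|}{i}=\Oh(|X|^{d})$ candidate subsets of $X$, and each can be generated and tested for independence in $\Oh(d^{2})$ time. The connected components $\F$ of $G-X$ are then identified by a linear-time graph traversal; because $\C$ is robust (closed under deletion of connected components), every $H\in\F$ itself lies in $\C$, which will be needed to invoke the other two hypotheses.

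Next I would build the bipartite graph $G_B$ with parts $\X$ and $\F$. For each of the $\Oh(|X|^{d}\cdot|\F|)$ candidate pairs $(Z,H)$, the algorithm forms $Y:=N_{G}(Z)\cap V(H)$ and decides whether $Y$ is a blocking set in $H$. Since $H\in\C$, the third hypothesis gives a polynomial-time test, so $G_B$ is assembled in polynomial time. Applying Theorem~\ref{theorem::matching} to $G_B$ yields in polynomial time either a matching saturating $\X$ (in which case we set $\X'=\emptyset$ and $\hat{\X}=\X$) or a set $\X'\subseteq\X$ with $|N_{G_B}(\X')|<|\X'|$ together with a matching $M$ of $G_B-N_{G_B}[\X']$ saturating $\hat{\X}:=\X\setminus\X'$. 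The set $\F_D=\F\setminus(N_{G_B}(\X')\cup V(M))$ is then obtained by simple set operations.

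To apply the rule we must also compute $\OPT(\F_D)=\sum_{H\in\F_D}\OPT(H)$ and subtract it from $k$. Because every $H\in\F_D$ lies in $\C$, the second hypothesis supplies each $\OPT(H)$ in polynomial time, and summing is trivial. All remaining steps (set manipulation and vertex deletion) are clearly polynomial, so Reduction Rule~\ref{rule::delete_cc} can be applied in polynomial time. I do not foresee a genuine obstacle; the proof amounts to checking that the three hypotheses precisely cover the three nontrivial polynomial-time subroutines of the procedure, namely enumerating chunks (bounded $\bsc{\C}$), recognising blocking sets while building $G_B$ (blocking-set recognition on $\C$), and evaluating $\OPT$ on the discarded components (\VC on $\C$).
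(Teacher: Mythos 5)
Your proof is correct and follows essentially the same approach as the paper: enumerate the polynomially many chunks, build $G_B$ using the blocking-set test, invoke Theorem~\ref{theorem::matching}, and delete $\F_D$. You are in fact slightly more careful than the paper's own proof, which omits the observation that computing $\OPT(\F_D)$ requires the hypothesis that \VC is polynomial-time solvable on $\C$; your explicit treatment of that step (and of why each $H\in\F$ lies in $\C$ via robustness) is a welcome addition rather than a divergence.
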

\begin{proof}
    To apply Reduction Rule \ref{rule::delete_cc} to an instance $(G,k,X)$ of \VC parameterized by the size of a $\C$-modulator, we have to construct the bipartite graph $G_B$. Therefore, we have to figure out for each set $Z \in \X$ and for every connected component $H$ of $G-X$ whether $N_G(Z) \cap V(H)$ is a blocking set in $H$. We can do this in polynomial time when $\bsc{\C}$ is constant and when we can verify in polynomial time whether $N_G(Z) \cap V(H)$ is a blocking set in $H$. Thus, under the assumptions of the lemma we can construct $G_B$ in polynomial time. Using Theorem \ref{theorem::matching} we can compute $\X'$, $\hat{X}$ and $\F_D$ in polynomial time. Hence, Reduction Rule \ref{rule::delete_cc} is applicable in polynomial time.
\end{proof}

We continue by providing two cases that satisfy the preconditions for the lemma above, such that Reduction Rule~\ref{rule::delete_cc} can be applied in polynomial time on these graph classes.

First of all, we consider the case that graph class \C is hereditary. In this case, being solvable in polynomial time on the class \C is sufficient to also be able to verify whether a given subset of the vertices is a blocking set, thus allowing us to apply Reduction Rule~\ref{rule::delete_cc} in polynomial time.
 As mentioned in Subsection \ref{subsec:reducing-num-components} we also need that \bsc{\C} is bounded.
Overall, we assume that \C is a hereditary graph class on which \VC is polynomial-time solvable and where \bsc{\C} is bounded.

\begin{lemma}\label{lem:rrule-1-in-poly-time-hereditary}
Let \C be any hereditary graph class on which \VC can be solved in polynomial time and where \bsc{\C} is bounded. Then Reduction Rule~\ref{rule::delete_cc} can be applied in polynomial time.
\end{lemma}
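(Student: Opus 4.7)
The plan is to reduce directly to Lemma~\ref{lemma::red_rule_poly_time}, which already states that Reduction Rule~\ref{rule::delete_cc} can be applied in polynomial time provided (i)~$\bsc{\C}$ is bounded, (ii)~\VC can be solved in polynomial time on $\C$, and (iii)~for any graph $H\in\C$ and any $Y\subseteq V(H)$, one can decide in polynomial time whether $Y$ is a blocking set of $H$. Conditions (i) and (ii) are assumed explicitly in the statement of the current lemma, so the only work is to verify (iii) using the hereditariness of $\C$.

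To check (iii), I would first recall the standard characterization of blocking sets: a set $Y\subseteq V(H)$ is a blocking set of $H$ if and only if no minimum vertex cover of $H$ contains $Y$, which is equivalent to saying that every vertex cover of $H$ that contains $Y$ has size strictly greater than $\OPT(H)$. Since the minimum size of a vertex cover of $H$ that contains $Y$ equals $|Y|+\OPT(H-Y)$ (take $Y$ together with any minimum vertex cover of $H-Y$, noting that all edges incident to $Y$ are covered by $Y$), the set $Y$ is a blocking set of $H$ if and only if
\[
    \OPT(H-Y)+|Y| \;>\; \OPT(H).
\]

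Now I would use the hypothesis that $\C$ is hereditary. The relevant graphs on which we want to run the blocking set test are the connected components $H$ of $G-X$; since $G-X\in\C$ and $\C$ is robust, each such $H$ lies in $\C$, and hereditariness then gives $H-Y\in\C$ as well for every $Y\subseteq V(H)$. By assumption (ii), we can therefore compute both $\OPT(H)$ and $\OPT(H-Y)$ in polynomial time, and then compare them as above. This yields a polynomial-time test for whether $Y$ is a blocking set of $H$, establishing condition~(iii).

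With all three preconditions of Lemma~\ref{lemma::red_rule_poly_time} satisfied, that lemma immediately implies that Reduction Rule~\ref{rule::delete_cc} can be applied in polynomial time, concluding the proof. There is no real obstacle here — the argument is essentially bookkeeping — and the point is that hereditariness converts a polynomial-time \VC algorithm on $\C$ into a polynomial-time blocking set oracle on $\C$ for free.
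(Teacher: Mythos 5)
Your proposal is correct and follows essentially the same route as the paper's proof: reduce to Lemma~\ref{lemma::red_rule_poly_time}, then use hereditariness to get $H-Y\in\C$ so that the blocking-set test $\OPT(H-Y)+|Y|>\OPT(H)$ can be evaluated via the polynomial-time \VC algorithm on $\C$.
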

\begin{proof}
    We only have to prove that graph class \C fulfills the requirements of Lemma \ref{lemma::red_rule_poly_time}. Since we assumed that \bsc{\C} is bounded and that \VC is polynomial-time solvable on graphs of graph class \C, it remains to show that we can verify whether a given vertex set $Y$ is a blocking set of a graph $G$ of graph class \C. For this purpose, we compute the size of an optimum vertex cover of $G$ and an optimum vertex cover of $G-Y$. Both is possible in polynomial time because $G$ and $G-Y$ are graphs of graph class \C due to the fact that \C is hereditary. Now, if $\OPT(G)=\OPT(G-Y)+|Y|$ then $Y$ is not a blocking set of $G$, and if $\OPT(G)<\OPT(G-Y)+|Y|$ then $Y$ is a blocking set of $G$. Thus, \C fulfills the requirements of Lemma \ref{lemma::red_rule_poly_time} which proves that we can apply Reduction Rule \ref{rule::delete_cc} in polynomial time.
\end{proof}

Theorem~\ref{intro:thm:reduce-cc} (restated below) now follows directly from Theorem~\ref{theorem::bound_numb_cc} and Lemmas~\ref{lemma::red_rule_poly_time} and~\ref{lem:rrule-1-in-poly-time-hereditary}.
\begin{thm:theorem3}
\theoremthree
\end{thm:theorem3}

We can actually further generalize Theorem~\ref{intro:thm:reduce-cc} to some non-hereditary graph classes. However, we have more problems to show that Lemma \ref{lemma::red_rule_poly_time} holds for non-hereditary graph classes, because after deleting vertices from a graph $G$ that is contained in a non-hereditary graph class \C we do not know whether the resulting graph still belongs to the graph class \C. As such we need the additional assumption that \VC is also polynomial-time solvable on graph class $\C+1$.

This additional assumption is not unreasonable, when our goal is to obtain a kernelization algorithm for \VC. In fact, in order to obtain any kernel for \VC parameterized by the size of a modulator to \C it is necessary to assume that the problem is FPT. From this, it immediately follows that we can solve  \VC  in polynomial time on $\C + 1$.

\begin{lemma} \label{lemma::non_hereditary_verify_bs}
    Let $\C$ be a (possibly non-hereditary) graph class. If we can solve \VC in polynomial time on graphs of graph class \C and $\C+1$ then we can verify in polynomial time whether a set $Y \subseteq V(G)$ is a blocking set of $G$.
\end{lemma}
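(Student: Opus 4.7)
The plan is to reduce a blocking-set check on $G\in\C$ to two optimum vertex cover computations: one on $G$ itself and one on a single-vertex extension of $G$ that lies in $\C+1$. Concretely, given $G\in\C$ and $Y\subseteq V(G)$ (which is the case of interest, since Reduction Rule~\ref{rule::delete_cc} tests blocking sets inside components of $G-X$, and those components belong to $\C$ by robustness), I would construct a graph $G^{*}$ by adjoining a fresh vertex $v^{*}$ to $G$ together with all edges $\{v^{*},y\}$ for $y\in Y$. Since deleting $v^{*}$ from $G^{*}$ returns $G\in\C$, the singleton $\{v^{*}\}$ is a $\C$-modulator of size one in $G^{*}$, so $G^{*}\in\C+1$. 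Under the lemma's hypothesis we can therefore compute both $\OPT(G)$ and $\OPT(G^{*})$ in polynomial time.

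The core claim I then need is that $\OPT(G^{*})=\OPT(G)+1$ if $Y$ is a blocking set of $G$ and $\OPT(G^{*})=\OPT(G)$ otherwise; given this, the algorithm is simply to output ``blocking set'' iff $\OPT(G^{*})>\OPT(G)$. The bracketing $\OPT(G)\leq \OPT(G^{*})\leq \OPT(G)+1$ is immediate: restricting any vertex cover of $G^{*}$ to $V(G)$ yields a vertex cover of $G$, and extending any minimum vertex cover of $G$ by $v^{*}$ gives a vertex cover of $G^{*}$. For the equivalence, I would split by whether $v^{*}$ lies in a given vertex cover $S^{*}$ of $G^{*}$: if $v^{*}\in S^{*}$ then $S^{*}\setminus\{v^{*}\}$ covers $G$, so $|S^{*}|\geq \OPT(G)+1$; if $v^{*}\notin S^{*}$ then every edge $\{v^{*},y\}$ is forced to be covered by $y$, hence $Y\subseteq S^{*}$ and $S^{*}$ is a vertex cover of $G$ containing $Y$. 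Taking the minimum over the two cases yields $\OPT(G^{*})=\min(\OPT(G)+1,\,|Y|+\OPT(G-Y))$, which by the definition of blocking set equals $\OPT(G)$ exactly when $Y$ is not a blocking set and $\OPT(G)+1$ otherwise.

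I do not anticipate any serious obstacle here; the only design choice worth highlighting is the use of a \emph{single} forcing vertex $v^{*}$ adjacent precisely to $Y$. This is exactly what keeps the auxiliary graph inside $\C+1$ rather than in $\C+|Y|$ or some class still further from $\C$, and is the reason the hypothesis ``\VC solvable on $\C$ and on $\C+1$'' is sufficient.
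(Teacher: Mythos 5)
Your proof is correct and follows essentially the same route as the paper: both construct the auxiliary graph by attaching a single fresh vertex adjacent precisely to $Y$, observe it lies in $\C+1$, perform the same case analysis on whether that vertex is in an optimum cover, and conclude that $Y$ is a blocking set iff $\OPT$ of the auxiliary graph strictly exceeds $\OPT(G)$. Your packaging of the case analysis into the identity $\OPT(G^{*})=\min\bigl(\OPT(G)+1,\,|Y|+\OPT(G-Y)\bigr)$ is a slightly cleaner way to phrase the same argument, but it is not a different approach.
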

\begin{proof}
    Let $G$ be a graph of graph class \C and let $Y \subseteq V(G)$ be any vertex set. We construct the graph $\hat{G}$ by adding one vertex $v_Y$ to $G$ and by connecting vertex $v_Y$ to every vertex in $Y$. If $Y$ is not a minimal blocking set of $G$ then $\OPT(\hat{G})=\OPT(G)$ because there exists an optimum vertex cover of $G$ that contains $Y$, which is also a vertex cover of $\hat{G}$ as it contains all neighbors of $v_Y$, and because every vertex cover of $\hat{G}$ restricted to $V(G)$ is a vertex cover of $G$. Now, assume that $Y$ is a blocking set of $G$. This implies that $\OPT(G)<\OPT(G-Y)+|Y|$. We will show that $\OPT(\hat{G})>\OPT(G)$. Let $\hat{S}$ be an optimum vertex cover of $\hat{G}$. If $v_Y \in \hat{S}$ then $\hat{S} \setminus \{v_Y\}$ is a vertex cover of $G$; thus $\OPT(G) \leq \OPT(\hat{G})-1$. If $v_Y \notin \hat{S}$ then we know that $Y \subseteq \hat{S}$. This implies that $|\hat{S}| \geq |Y| + \OPT(G-Y) > \OPT(G)$. Thus, we showed that $\OPT(G)<\OPT(\hat{G})$ when $Y$ is a blocking set of $G$.

    Overall, to verify whether $Y$ is a blocking set of $G$ we have to compare the size of an optimum vertex cover in $G$ and $\hat{G}$. Since $G$ is a graph of graph class \C and $\hat{G}$ is a graph of graph class $\C+1$ we can compute these optimum vertex covers in polynomial time. This concludes the proof.
\end{proof}

\begin{theorem}
    Let \C be any robust graph class with minimal blocking set size $d$ on which \VC can be solved in polynomial time. Furthermore, assume that \VC can be solved in polynomial time on graphs of graph class $\C+1$. There is an efficient algorithm that given $(G,k,X)$ such that $G-X\in \C$ returns an equivalent instance $(G',k',X)$ such that $G'-X\in\C$ has at most $\Oh(|X|^d)$ connected components.
\end{theorem}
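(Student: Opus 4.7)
The plan is to follow the same strategy as for Theorem~\ref{intro:thm:reduce-cc}, namely apply Reduction Rule~\ref{rule::delete_cc} exhaustively and invoke Theorem~\ref{theorem::bound_numb_cc} to conclude that the resulting graph $G'-X$ has at most $|X|^d$ connected components. The safety of the rule (shown after Lemma~\ref{lemma::reduce_cc}) only uses that $\C$ is robust and that $\bsc{\C}\le d$, so it is independent of whether $\C$ is hereditary; hence the equivalence of $(G,k,X)$ and $(G',k',X)$ is immediate. The only point where the proof of Theorem~\ref{intro:thm:reduce-cc} used heredity was in verifying polynomial-time applicability of the rule through Lemma~\ref{lem:rrule-1-in-poly-time-hereditary}, so this is where new work is needed.

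To discharge the hypotheses of Lemma~\ref{lemma::red_rule_poly_time} in the non-hereditary setting, first observe that because $\C$ is robust and $G-X\in\C$, every connected component $H$ of $G-X$ is itself a graph of $\C$ (closure under deletion of connected components). So \VC on each $H$ can be solved in polynomial time by hypothesis, which gives us $\OPT(H)$ whenever needed. To decide whether a candidate set $N_G(Z)\cap V(H)$ is a blocking set in $H$, we cannot simply compute $\OPT(H-Y)$ since $H-Y$ need not lie in $\C$; instead we invoke Lemma~\ref{lemma::non_hereditary_verify_bs}, which exactly uses the extra assumption that \VC is polynomial-time solvable on $\C+1$ to test the blocking-set property by comparing $\OPT(H)$ and $\OPT(\hat{H})$ for the one-vertex extension $\hat{H}\in\C+1$.

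With these two ingredients in place, constructing the bipartite auxiliary graph $G_B$ is polynomial: the number of chunks is at most $\sum_{i=1}^{d}\binom{|X|}{i}=\Oh(|X|^{d})$, the number of components of $G-X$ is at most $|V(G)|$, and for each pair $(Z,H)$ we perform one polynomial-time blocking-set test. Theorem~\ref{theorem::matching} then computes $\X'$, $\hat{\X}$, and the matching $M$ in polynomial time, and the optimum vertex cover value $\OPT(\F_D)$ required to update $k$ is obtained by summing polynomial-time computations of $\OPT(H)$ over the deleted components $H\in\F_D$ (each of which lies in $\C$ by robustness). Thus Reduction Rule~\ref{rule::delete_cc} is applicable in polynomial time; applying it once already suffices to obtain the desired bound via Theorem~\ref{theorem::bound_numb_cc}.

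The main obstacle is precisely the blocking-set test on a potentially non-hereditary class, but Lemma~\ref{lemma::non_hereditary_verify_bs} is designed to handle this, and the additional assumption on $\C+1$ is mild (as noted after that lemma, it is implied by \VC being FPT parameterized by the \C-modulator size, which is necessary for any kernel anyway). No further structural property of $\C$ is required, so the proof reduces to a short chain of references: robustness to place components in $\C$, Lemma~\ref{lemma::non_hereditary_verify_bs} to satisfy the third precondition of Lemma~\ref{lemma::red_rule_poly_time}, Lemma~\ref{lemma::red_rule_poly_time} to apply the reduction efficiently, and Theorem~\ref{theorem::bound_numb_cc} to bound the number of remaining components by $\Oh(|X|^d)$.
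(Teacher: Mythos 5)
Your proof is correct and follows exactly the approach the paper takes: the paper's own proof is the one-line chain citing Theorem~\ref{theorem::bound_numb_cc}, Lemma~\ref{lemma::red_rule_poly_time}, and Lemma~\ref{lemma::non_hereditary_verify_bs}, and you have correctly reconstructed and fleshed out that chain, including the key observation that Lemma~\ref{lemma::non_hereditary_verify_bs} (using polynomial-time solvability on $\C+1$) replaces the heredity-based blocking-set test from Lemma~\ref{lem:rrule-1-in-poly-time-hereditary}.
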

\begin{proof}
    This result follows directly from Theorem~\ref{theorem::bound_numb_cc}, Lemma \ref{lemma::red_rule_poly_time} and Lemma \ref{lemma::non_hereditary_verify_bs}.
\end{proof}

\section{Minimal blocking sets in graphs of bounded elimination distance}
\label{section:bounded-blocking-sets}
As seen in the previous section, minimal blocking sets play an important role for \VC kernelization. In this section we try to combine different structural parameters by considering the minimal blocking set size of graphs that have elimination distance $d$ to some graph class \C that has bounded minimal blocking set size.
First of all, we show some basic facts about minimal blocking sets, as well as some useful connections between minimal blocking sets of a graph $G$ and certain subgraphs of $G$.

\begin{proposition} \label{proposition::bs_basics}
    Let $G$ be a graph and let $Y$ be a minimal blocking set of $G$.
    \begin{enumerate}[(i)]
        \item The set $Y$ contains no vertex that is contained in every optimum vertex cover of $G$. \label{enum::bs_basics_all}
        \item If $Y$ contains a vertex that is not contained in any optimum vertex cover of $G$ then $|Y|=1$. More precisely, the minimal blocking set $Y$ contains exactly one of these vertices. \label{enum::bs_basics_none}
        \item The set $Y$ is contained in only one connected component of $G$. \label{enum::bs_basics_one_cc}
        \item $\OPT(G-Y) +|Y| = \OPT(G) + 1$ \label{enum::bs_basics_sol_size}
    \end{enumerate}
\end{proposition}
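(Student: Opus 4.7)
The common tool for all four parts will be the following \emph{witness} construction: by minimality of $Y$, for every $v \in Y$ the set $Y \setminus \{v\}$ is not a blocking set, so there exists an optimum vertex cover $S_v$ of $G$ with $Y \setminus \{v\} \subseteq S_v$; moreover, since $Y$ itself is a blocking set, $S_v$ cannot contain all of $Y$, forcing $v \notin S_v$. Part \eqref{enum::bs_basics_all} then falls out immediately: if some $v \in Y$ lay in every optimum vertex cover, then $v \in S_v$, a contradiction. Part \eqref{enum::bs_basics_none} is almost as quick: if $v \in Y$ lies in no optimum vertex cover but $|Y| \geq 2$, pick any $u \in Y \setminus \{v\}$; the witness $S_u$ is an optimum vertex cover containing $Y \setminus \{u\}$, hence containing $v$, again a contradiction. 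So $|Y|=1$, and the ``exactly one'' clause is automatic.

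For \eqref{enum::bs_basics_one_cc}, I would argue by contradiction: suppose $Y$ meets several components, and enumerate the components of $G$ intersecting $Y$ as $C_1, \ldots, C_t$ with $t \geq 2$, picking some $v_i \in Y \cap V(C_i)$ for each $i$. Because vertex cover sizes add over connected components, each witness $S_{v_i}$ has the property that $S_{v_i} \cap V(C_j)$ is an optimum vertex cover of $C_j$ for every $j$; for $j \neq i$, the inclusion $Y \setminus \{v_i\} \subseteq S_{v_i}$ further guarantees $Y \cap V(C_j) \subseteq S_{v_i} \cap V(C_j)$. Using $t \geq 2$, one can assemble a new set $S^*$ by taking, on each $C_i$, the slice $S_{v_{\sigma(i)}} \cap V(C_i)$ for some $\sigma(i) \neq i$, and on the remaining components any optimum vertex cover. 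Since there are no cross-component edges, $S^*$ is an optimum vertex cover of $G$; by construction it contains all of $Y$, contradicting that $Y$ is a blocking set.

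For \eqref{enum::bs_basics_sol_size}, the bound $\OPT(G-Y) + |Y| \geq \OPT(G) + 1$ is immediate: taking a minimum vertex cover of $G-Y$ and adding $Y$ yields a vertex cover of $G$ containing $Y$, which by the blocking-set property cannot be optimum. For the matching upper bound, fix any $v \in Y$ and consider the witness $S_v$; since $v \notin S_v$ and $Y \setminus \{v\} \subseteq S_v$, the set $S_v \setminus Y$ has size $\OPT(G) - (|Y|-1)$, and it covers every edge of $G-Y$ because such edges have both endpoints outside $Y$ and are already covered by $S_v$. This gives $\OPT(G-Y) \leq \OPT(G) - |Y| + 1$, which combines with the lower bound to yield the claimed equality. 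The main subtlety lies in the gluing argument for \eqref{enum::bs_basics_one_cc}---verifying that each chosen slice is simultaneously optimum on its component and contains the correct piece of $Y$---but this is bought cheaply by additivity of $\OPT$ over connected components; everything else is a direct application of the witness construction.
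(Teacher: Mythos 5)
Your proposal is correct and takes essentially the same route as the paper's proof: the central tool in both is that minimality of $Y$ yields, for each $v\in Y$, an optimum vertex cover $S_v$ containing $Y\setminus\{v\}$ but not $v$, and parts (i), (ii), and (iv) follow directly from this, while part (iii) is handled by gluing per-component restrictions of such witnesses (the paper picks one component and one vertex outside it; you make the selection $\sigma(i)\neq i$ explicit, which is a cosmetic difference). The only other minor divergence is in (ii), where the paper simply observes that $\{v\}$ is itself a blocking set and invokes minimality, whereas you reach the same conclusion via the witness $S_u$; both are fine.
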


\begin{proof}
    To prove item (\ref{enum::bs_basics_all}) we assume for contradiction that there exists a vertex $y \in Y$ that is contained in every optimum vertex cover of $G$. Since $Y$ is a minimal blocking set of $G$ it holds that there exists an optimum vertex cover $X$ of $G$ that contains the set $Y \setminus \{y\}$. But, the vertex $y$ is contained in every optimum vertex cover of $G$, hence $Y \subseteq X$. This contradicts the assumption that $Y$ is a minimal blocking set of $G$ and proves that $Y$ contains no vertex that is contained in every optimum vertex cover of $G$.

    Let $v$ be a vertex that is not contained in any optimum vertex over of $G$. This implies that $\{v\}$ is a blocking set of $G$ (by definition). Thus, item (\ref{enum::bs_basics_none}) holds.

    Next, we prove item (\ref{enum::bs_basics_one_cc}). Assume for contradiction that the minimal blocking set $Y$ is contained in at least two connected components of $G$. Let $G'$ be a connected component of $G$ that contains at least one vertex of $Y$, and let $Y'$ be the set of vertices in $Y$ that are also contained in $G'$, i.e., $Y' = Y \cap V(G') \neq \emptyset$. Let $y \in Y'$ be an arbitrary vertex. Since $Y$ is a minimal blocking set of $G$ it holds that there exists an optimum vertex cover $X$ of $G$ with $Y \setminus \{y\} \subseteq X$. Thus, for every connected component $\hat{G}$ of $G$, except $G'$, there exists an optimum vertex cover $\hat{X}$ that contains the vertex set $Y \cap V(\hat{G})$. Observe that this holds also for the connected component $G'$ by choosing a vertex $\hat{y} \in Y \setminus Y'$ and considering an optimum vertex cover of $G$ that contains $Y \setminus \{\hat{y}\}$. But, this implies that there exists an optimum vertex cover of $G$ that contains $Y$ which contradicts the assumption that $Y$ is a blocking set of $G$. Hence, the minimal blocking set $Y$ is contained in at most one connected component of $G$.

    Now, we prove item (\ref{enum::bs_basics_sol_size}). Since $Y$ is a blocking set of $G$ it holds by definition that $\OPT(G) < \OPT(G-Y)+|Y|$. For the other direction, let $y \in Y$ be an arbitrary vertex of $Y$. There exists an optimum vertex cover $X$ of $G$ that contains the set $Y \setminus \{y\}$ because $Y$ is a minimal blocking set of $G$. It holds that $X \setminus Y$ is a vertex cover of $G-Y$. Hence, $\OPT(G-Y) \leq |X\setminus Y| = |X|-|Y\setminus \{y\}| = \OPT(G)+1 -|Y|$ which implies that $\OPT(G-Y)+|Y|=\OPT(G)+1$.
\end{proof}

The following lemma shows that deleting a vertex set $Z$ that is part of an optimum solution to a given graph $G$ cannot increase the size of minimal blocking sets. Furthermore, it points out that there is a strong relation between minimal blocking sets in $G$ and $G-Z$.

\begin{lemma} \label{lemma::delete_non_bs}
    Let $G$ be a graph and let $Z \subseteq V(G)$ be a set of vertices such that there exists an optimum vertex cover $X$ of $G$ with $Z \subseteq X$.
    \begin{enumerate}[(i)]
        \item $\OPT(G-Z) +|Z| = \OPT(G)$ \label{enum::delete_non_bs_eq}
        \item If $Y$ is a blocking set of $G$ then $Y\setminus Z$ is a blocking set of $G-Z$. \label{enum::delete_non_bs_dZ}
        \item It holds that $\bsg{G-Z} \leq \bsg{G}$, more precisely, if $Y'$ is a minimal blocking set of $G-Z$ then there exists a (possibly empty) set $Z' \subseteq Z$ such that $Z' \cup Y'$ is a minimal blocking set of $G$. \label{enum::delete_non_bs_aZ}
    \end{enumerate}
\end{lemma}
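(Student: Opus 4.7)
For part (i), my plan is a direct two-sided inequality argument. Since $X$ is an optimum vertex cover of $G$ with $Z \subseteq X$, the set $X \setminus Z$ is a vertex cover of $G-Z$ of size $\OPT(G)-|Z|$, giving $\OPT(G-Z) \leq \OPT(G)-|Z|$. Conversely, if $X'$ is any optimum vertex cover of $G-Z$, then $X' \cup Z$ covers every edge of $G$ (edges incident to $Z$ are covered by $Z$, the rest lie in $G-Z$ and are covered by $X'$), so $\OPT(G) \leq \OPT(G-Z)+|Z|$.

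For part (ii), I will argue by contradiction: suppose $Y\setminus Z$ is \emph{not} a blocking set of $G-Z$, so there is an optimum vertex cover $X'$ of $G-Z$ with $Y\setminus Z \subseteq X'$. Then $X' \cup Z$ is a vertex cover of $G$ of size $\OPT(G-Z)+|Z| = \OPT(G)$ by part (i), hence an optimum vertex cover of $G$ containing $(Y\setminus Z)\cup Z \supseteq Y$, contradicting $Y$ being a blocking set of $G$.

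For part (iii), the main step is to show first that $Z \cup Y'$ is a blocking set of $G$ and then extract a minimal one containing $Y'$. Suppose for contradiction $Z\cup Y'$ is not a blocking set of $G$, so some optimum vertex cover $X$ of $G$ contains $Z\cup Y'$. Because $Z\subseteq X$, part (i) applies and $X\setminus Z$ is an optimum vertex cover of $G-Z$ containing $Y'$, contradicting minimality (in particular, blockedness) of $Y'$ in $G-Z$. So $Z\cup Y'$ is a blocking set of $G$; pick a minimal blocking set $W\subseteq Z\cup Y'$ of $G$. By part (ii) applied to $W$, the set $W\setminus Z$ is a blocking set of $G-Z$; and $W\setminus Z \subseteq Y'$ since $W\subseteq Z\cup Y'$. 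By minimality of $Y'$ in $G-Z$, this forces $W\setminus Z = Y'$, i.e.\ $Y'\subseteq W$. Setting $Z' := W \cap Z \subseteq Z$ gives $W = Z' \cup Y'$ as desired, and the size bound $\bsg{G-Z}\leq \bsg{G}$ follows since $|Y'|\leq |W|\leq \bsg{G}$.

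The main obstacle I anticipate is part (iii), specifically the argument that the minimal blocking set $W$ extracted from inside $Z\cup Y'$ must actually contain all of $Y'$ rather than only a proper subset of it; this is the nontrivial use of minimality of $Y'$ and relies crucially on part (ii) to transport $W$ down to $G-Z$. The rest of the lemma is essentially bookkeeping around part (i).
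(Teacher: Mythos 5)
Your proof is correct and follows essentially the same approach as the paper: the same two-sided inequality for (i), the same contradiction via lifting an optimum cover of $G-Z$ by $Z$ for (ii), and the same two-step argument for (iii) — first showing $Z \cup Y'$ is a blocking set of $G$, then using (ii) and minimality of $Y'$ to force any minimal blocking set inside $Z \cup Y'$ to contain $Y'$.
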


\begin{proof}
    Obviously, item (\ref{enum::delete_non_bs_eq}) holds, because every optimum vertex cover of $G-Z$ together with $Z$ is a vertex cover of $G$, and because there exists an optimum vertex cover of $G$ that contains $Z$.

    Next, we prove item (\ref{enum::delete_non_bs_dZ}) by contradiction. Let $Y$ be a blocking set of $G$ and assume that $Y\setminus Z$ is not a blocking set of $G-Z$. This implies that there exists an optimum vertex cover $X$ of $G-Z$ that contains the set $Y \setminus Z$. However, $X \cup Z$ is an optimum vertex cover of $G$ that contains $Y$ because $\OPT(G-Z) +|Z| = \OPT(G)$. This contradicts the assumption that $Y$ is a blocking set of $G$ and shows that $Y\setminus Z$ is a blocking set of $G-Z$.

    Finally, we prove item (\ref{enum::delete_non_bs_aZ}). Let $Y'$ be a minimal blocking set of $G-Z$. First, we show that $Y' \cup Z$ is a blocking set of $G$. Assume for contradiction that $Y' \cup Z$ is not a blocking set of $G$. Hence, there exists an optimum vertex cover $X$ of $G$ that contains $Y' \cup Z$. Since $\OPT(G-Z) +|Z| = \OPT(G)$ it holds that $X \setminus Z$ is an optimum vertex cover of $G-Z$ that contains $Y'$. This contradicts the fact that $Y'$ is a blocking set of $G-Z$, and proves that $Y' \cup Z$ is a blocking set of $G$.
    To conclude the proof we will show that every minimal blocking set $Y \subseteq Y' \cup Z$ of $G$ contains the set $Y'$. Let $Y \subseteq Y' \cup Z$ be a minimal blocking set of $G$. It follows from item (\ref{enum::delete_non_bs_dZ}) that $Y \setminus Z \subseteq Y'$ is a blocking set of $G-Z$. Since $Y'$ is a minimal blocking set of $G-Z$ it must hold that $Y \setminus Z = Y'$; thus, it holds that $Y' \subseteq Y$.
\end{proof}

\begin{lemma} \label{lemma::structure_bs}
    Let $G$ be a graph that contains a vertex $r$ that is contained in at least one, but not all, optimum vertex covers of $G$, and is not adjacent to all other vertices of $G$, i.e., $N[r] \subsetneq V(G)$. Let $Y$ be a minimal blocking set of $G$ with $r \notin Y$.\footnote{Such a minimal blocking set exists because $V(G)\setminus \{r\}$ is a blocking set of $G$; otherwise every optimum solution contains all except one vertex of $G$ which implies that $V(G)$ is the only (minimal) blocking set of $G$. But the only graphs that have the entire set of vertices as a minimal blocking set are cliques.}
    \begin{enumerate}[(i)]
        \item The set $Y$ is also a blocking set of $G-r$ and the set $Y\setminus N[r]$ is a blocking set of $G-N[r]$. \label{enum::structure_bs_delete}
        \item Let $Y'$ be a minimal blocking set of $G-r$, and let $\widetilde{Y}$ be a minimal blocking set of $G-N[r]$. The set $Y' \cup \widetilde{Y}$ is a blocking set of $G$. \label{enum::structure_bs_union}
        \item For all vertices $y \in Y$ it holds that $Y \setminus \{y\}$ is not a blocking set of $G-r$ or that $Y \setminus (\{y\} \cup N[r])$ is not a blocking set of $G-N[r]$. \label{enum::structure_bs_necess}
    \end{enumerate}
\end{lemma}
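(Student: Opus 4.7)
The plan is to prove all three items by direct contradiction, relying on two size identities. First, $\OPT(G-r)=\OPT(G)-1$, which follows from Lemma~\ref{lemma::delete_non_bs}(\ref{enum::delete_non_bs_eq}) with $Z=\{r\}$ since $r$ lies in some optimum vertex cover of $G$. Second, $\OPT(G-N[r]) = \OPT(G) - |N(r)|$: any optimum vertex cover of $G$ that avoids $r$ (such a cover exists by hypothesis) must contain $N(r)$, so restricting it to $V(G)\setminus N[r]$ yields a cover of $G-N[r]$ of size $\OPT(G)-|N(r)|$; conversely, extending any cover of $G-N[r]$ by $N(r)$ gives a cover of $G$, so neither quantity can be smaller.

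For item~(\ref{enum::structure_bs_delete}), if an optimum cover $X$ of $G-r$ contained $Y$, then $X\cup\{r\}$ would be an optimum cover of $G$ (by the first identity) containing $Y$, contradicting that $Y$ blocks $G$. If an optimum cover $X$ of $G-N[r]$ contained $Y\setminus N[r]$, then $X\cup N(r)$ would be an optimum cover of $G$ (by the second identity) containing $Y$: here we use that $r\notin Y$ to conclude $Y\cap N[r]\subseteq N(r)$. Both yield a contradiction.

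For item~(\ref{enum::structure_bs_union}), suppose an optimum cover $S$ of $G$ contained $Y'\cup\widetilde{Y}$. If $r\in S$ then $S\setminus\{r\}$ is an optimum cover of $G-r$ containing $Y'$ (since $Y'\subseteq V(G)\setminus\{r\}$), contradicting that $Y'$ blocks $G-r$. If $r\notin S$ then $N(r)\subseteq S$, and $S\setminus N[r]$ is an optimum cover of $G-N[r]$ containing $\widetilde{Y}$ (which is disjoint from $N[r]$), contradicting that $\widetilde{Y}$ blocks $G-N[r]$.

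For item~(\ref{enum::structure_bs_necess}), fix $y\in Y$ and suppose for contradiction that both $Y\setminus\{y\}$ blocks $G-r$ and $Y\setminus(\{y\}\cup N[r])$ blocks $G-N[r]$. I will show $Y\setminus\{y\}$ blocks $G$, which contradicts the minimality of $Y$. So suppose some optimum cover $S$ of $G$ contained $Y\setminus\{y\}$. If $r\in S$ then $S\setminus\{r\}$ is an optimum cover of $G-r$ containing $Y\setminus\{y\}$, violating the first assumption. If $r\notin S$ then $N(r)\subseteq S$ and $S\setminus N[r]$ is an optimum cover of $G-N[r]$ containing $(Y\setminus\{y\})\setminus N[r]=Y\setminus(\{y\}\cup N[r])$, violating the second. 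Either way we reach a contradiction, finishing the proof. I do not expect a real obstacle here: the argument is a routine case split on whether a hypothetical optimum cover of $G$ contains $r$ (and hence reduces via the first identity) or avoids $r$ and therefore swallows $N(r)$ (and reduces via the second); the hypotheses that $r$ lies in some but not all optimum covers, together with $r\notin Y$, are exactly what make both reductions size-preserving.
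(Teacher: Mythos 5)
Your proof is correct and follows essentially the same approach as the paper's: a case split on whether a hypothetical optimum cover of $G$ contains $r$ (hence restricts to an optimum cover of $G-r$) or avoids $r$ and so contains $N(r)$ (hence restricts to an optimum cover of $G-N[r]$). The only cosmetic differences are that for item~(\ref{enum::structure_bs_delete}) the paper cites Lemma~\ref{lemma::delete_non_bs}(\ref{enum::delete_non_bs_dZ}) where you reargue directly, and for item~(\ref{enum::structure_bs_necess}) the paper passes minimal blocking subsets of $Y\setminus\{y\}$ through item~(\ref{enum::structure_bs_union}) whereas you inline the identical case analysis.
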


\begin{proof}
    Since $G$ contains an optimum vertex cover that contains $r$ and an optimum vertex cover that contains $N(r)$ it follows from Lemma \ref{lemma::delete_non_bs} item (\ref{enum::delete_non_bs_dZ}) that $Y\setminus \{r\}=Y$ is a blocking set of $G-r$ and that $Y\setminus N(r)=Y\setminus N[r]$ is a blocking set of $G-N(r)$. Note that $Y \setminus N[r]$ is also a blocking set of $G-N[r]$ because vertex $r$ is isolated in $G-N(r)$ and not contained in $Y \setminus N(r)$. This proves item (\ref{enum::structure_bs_delete}).

    Next, we prove item (\ref{enum::structure_bs_union}). Let $Y'$ be a minimal blocking set of $G-r$ and $\widetilde{Y}$ be a minimal blocking set of $G-N[r]$. Assume for contradiction that $\overline{Y}=Y' \cup \widetilde{Y}$ is not a blocking set of $G$. Thus, there exists an optimum vertex cover $X$ of $G$ that contains $\overline{Y}$. If vertex $r$ is contained in $X$, then $X \setminus \{r\}$ is an optimum vertex cover of $G-r$ that contains $\overline{Y}$ and therefore also $Y'$. This contradicts the fact that $Y'$ is a blocking set of $G-r$. If vertex $r$ is not contained in $X$, then $X$ contains the vertex set $N(r)$ and $X \setminus N(r)$ is an optimum vertex cover of $G-N[r]$ that contains $\widetilde{Y} \subseteq \overline{Y}\setminus N(r)$. But this is a contradiction to the assumption that $\widetilde{Y}$ is a blocking set of $G-N[r]$. Overall, this proves that $Y' \cup \widetilde{Y}$ is a blocking set of $G$.

    Finally, we show item (\ref{enum::structure_bs_necess}). Assume for contradiction that there exists a vertex $y \in Y$ such that $Y \setminus \{y\}$ is a blocking set in $G-r$ and such that $Y \setminus (\{y\} \cup N[r])$ is a blocking set in $G-N[r]$. Let $Y' \subseteq Y \setminus \{y\}$ be a minimal blocking set in $G-r$ and let $\widetilde{Y} \subseteq Y \setminus (\{y\} \cup N[r])$ be a minimal blocking set in $G-N[r]$. It follows from item (\ref{enum::structure_bs_union}) that $Y' \cup \widetilde{Y} \subseteq Y \setminus \{y\} \subsetneq Y$ is a blocking set of $G$. This contradicts the assumption that $Y$ is a minimal blocking set of $G$ and concludes the prove of item (\ref{enum::structure_bs_necess}).
\end{proof}

In the remainder of this section, we show upper and lower bounds on the blocking set sizes of graphs with elimination distance $d$ to a graph class \C. For hereditary graph classes, these bounds turn out to be tight. Theorem~\ref{thm:intro:precise-bounds} will follow directly from the lower bound presented in Theorem~\ref{theorem::lb}, combined with the upper bound in Theorem~\ref{theorem::ub_hereditary}.
\subsection{Lower bound}

In this subsection we give a lower bound for $\bsd{\C}{d}$ for all robust graph classes. Our result improves the lower bound for the size of largest minimal blocking sets of graphs with treedepth $d$ given by Bougeret and Sau \cite{BougeretS17}, because a graph that has treedepth $d$ has elimination distance $d-1$ to the graph class where every graph is an independent set. We will show how we can glue two graphs together to obtain larger minimal blocking sets.

\begin{theorem} \label{theorem::lb}
    Let $\C$ be a robust graph class where \bsc{\C} is bounded.
    For every integer $d\geq 1$ it holds:
    \begin{align*}
        \bsd{\C}{d} \geq \begin{cases}
                            2^{d-1}+1 & \text{, if } \bsc{\C} = 1 \\
                            (\bsc{\C}-1) 2^d+1 & \text{, if } \bsc{\C} \geq 2
                        \end{cases}
    \end{align*}
\end{theorem}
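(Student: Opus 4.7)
The proof strategy is an inductive construction of graphs $G_d$ of elimination distance at most $d$ to $\C$ together with a minimal blocking set of size $s_d$ matching the claimed lower bound. An arithmetic check shows that in both regimes the target sizes obey the recurrence $s_{d+1} = 2 s_d - 1$, starting from $s_0 = \bsc{\C}$ when $\bsc{\C} \geq 2$ and from $s_1 = 2$ (realized by $P_3$) when $\bsc{\C} = 1$. I would therefore prove the theorem by induction on $d$, maintaining as additional structural invariants a designated minimal blocking set $Y_d \subseteq V(G_d)$ of size $s_d$ and a distinguished pivot vertex $p_d$ that lies in some minimum vertex cover of $G_d$ (such a pivot exists by Proposition~\ref{proposition::bs_basics}(i), which guarantees that not every element of $Y_d$ can be in every minimum vertex cover).

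For the base case the witness is direct: if $\bsc{\C} \geq 2$, any graph in $\C$ attaining $\bsc{\C}$ works at $d = 0$; if $\bsc{\C} = 1$, the path $P_3$, whose two leaves form a minimal blocking set of size $2$, has elimination distance $1$ to any robust $\C$ containing a single-vertex graph.

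The inductive step $d \to d+1$ builds $G_{d+1}$ from two disjoint copies $G_d^{(1)}, G_d^{(2)}$ by introducing a new ``apex'' vertex $r$ carefully wired to both copies (possibly together with a small auxiliary pendant gadget, such as a leaf $u$ attached only to $r$), so that three things hold: (i) $G_{d+1} - r$ is the disjoint union of the two copies plus the auxiliary piece, hence by robustness of $\C$ it has elimination distance at most $d$ and $\ed{\C}{G_{d+1}} \leq d+1$; (ii) the arithmetic identity
\[1 + \OPT(G_{d+1} - r) \;=\; |N(r)| + \OPT(G_{d+1} - N[r])\]
holds, which forces $r$ to lie in some but not all minimum vertex covers of $G_{d+1}$ and activates Lemma~\ref{lemma::structure_bs}; and (iii) the combined blocking set $Y_{d+1}$, built as a union of an inherited minimal blocking set of $G_{d+1}-r$ (taken from one component, as allowed by Proposition~\ref{proposition::bs_basics}(iii)) and an inherited minimal blocking set of $G_{d+1} - N[r]$ via Lemma~\ref{lemma::structure_bs}(ii), has size exactly $2 s_d - 1$. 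The balance condition (ii) is engineered by putting the pivots $p_d^{(i)}$ into $N(r)$, which converts each term $\OPT(G_d^{(i)} - p_d^{(i)})$ into $\OPT(G_d^{(i)}) - 1$ through Lemma~\ref{lemma::delete_non_bs}(i), and by choosing the size of the pendant gadget to absorb the remaining additive constant.

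The main obstacle is ensuring that $Y_{d+1}$ is indeed \emph{minimal}, not merely blocking: since dropping an element of $Y_{d+1}$ that sits entirely inside one copy would still leave a blocking set if that copy's blocking set survives on its own in $G_{d+1}$, the wiring of $r$ must be chosen so that neither $Y_d^{(1)}$ nor $Y_d^{(2)}$ alone remains blocking in $G_{d+1}$; this typically forces the pivots to interact nontrivially with the inherited blocking sets. The minimality check is then element-wise, using Lemma~\ref{lemma::structure_bs}(iii) contrapositively together with the minimality of $Y_d$ inside each copy (and the transfer, via Lemma~\ref{lemma::delete_non_bs} applied with $Z = \{p_d\}$, of minimality to $G_d - p_d$, which can be derived directly: any proper subset of $Y_d \setminus \{p_d\}$ blocking in $G_d - p_d$ would, when unioned with $\{p_d\}$, contradict minimality of $Y_d$ in $G_d$). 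The secondary technical burden is propagating the invariant, namely exhibiting a new pivot $p_{d+1}$ with the required optimality property so that the induction can continue, and verifying that the auxiliary pendant does not introduce a smaller minimal blocking set that would invalidate the size claim.
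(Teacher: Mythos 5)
Your skeleton matches the paper's: induction on $d$ with the recurrence $s_{d+1} = 2 s_d - 1$, a single new apex vertex $r$ whose deletion leaves two disjoint copies of the previous-stage graph (so elimination distance rises by exactly one), and a new blocking set assembled from a minimal blocking set of $G_{d+1} - r$ together with one of $G_{d+1} - N[r]$. The base cases are essentially right.

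The concrete wiring you commit to, however, does not work, and that is the crux. You place only the pivots $p_d^{(1)}, p_d^{(2)}$ (plus a pendant leaf $u$) into $N(r)$. With that wiring $\OPT(G_{d+1}) = 2\OPT(G_d)+1$, but any vertex cover of $G_{d+1}$ containing $Y_d^{(1)}$ must spend at least $\OPT(G_d)+1$ on copy~1, at least $\OPT(G_d)$ on copy~2, and one more vertex on the edge $\{r,u\}$ (neither endpoint lies in a copy), for a total of at least $2\OPT(G_d)+2$. So $Y_d^{(1)}$ is already a blocking set of $G_{d+1}$, which means the set you propose ($Y_d^{(1)}$ together with roughly $s_d-1$ further vertices from copy~2) cannot be a \emph{minimal} blocking set, and the bound you obtain degrades to $s_d$ rather than $2s_d - 1$. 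You flag this failure mode yourself but offer no wiring that avoids it; dropping the pendant instead destroys the balance condition (then $r$ lies in no minimum vertex cover at all). The missing idea, which is exactly what Lemmas~\ref{lemma::lb_add_vertex}, \ref{lemma::lb_union}, and~\ref{lemma::lb_combine} supply, is to make the apex adjacent to the \emph{entire} blocking set $Y_d^{(1)}$ of one copy and to a single vertex $y_2 \in Y_d^{(2)}$ of the other, so $|N(r)| = s_d+1$. Then a minimum vertex cover of copy~1 containing $Y_d^{(1)}$ plus a minimum vertex cover of copy~2 containing $y_2$ already covers all edges incident to $r$ without using $r$, so $Y_d^{(1)}$ is no longer blocking in $G_{d+1}$ and $Y_d^{(1)} \cup (Y_d^{(2)}\setminus\{y_2\})$ becomes a viable minimal blocking set. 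Even with the right wiring, the minimality proof you call the ``main obstacle'' is the real technical content --- the paper handles it by direct exchange arguments on vertex covers inside Lemmas~\ref{lemma::lb_add_vertex} and~\ref{lemma::lb_union}, not by a contrapositive appeal to Lemma~\ref{lemma::structure_bs}(iii) --- so that work still remains to be done in your proposal.
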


To show Theorem \ref{theorem::lb} we construct for each graph class $\C$ where $\bsc{\C}$ is bounded and each integer $d\geq1$ a graph $G$ with $\ed{\C}{G}=d$ that contains a minimal blocking set of size at least $2^{d-1}+1$ when $\bsc{\C} = 1$, and of size at least $(\bsc{\C}-1) 2^d+1$ when $\bsc{\C} \geq 2$. Since $\bsd{\C}{d} = \max\{\bsg{G} \mid \ed{\C}{G}\leq d\}$ this will prove Theorem \ref{theorem::lb}. We use the following two constructions to obtain such a graph $G$.

\begin{lemma} \label{lemma::lb_add_vertex}
    Let $H=(V,E)$ be a graph and let $Y$ be a minimal blocking set of $H$.
    Let $H'=(V \cup \{v\}, E \cup \{ \{v,y\} \mid y \in Y\})$ be the graph that results from $H$ by adding a new vertex $v$ to $H$ and by connecting it to all vertices in $Y$.
    It holds that $\OPT(H') = \OPT(H)+1$, and that $Y'=Y \cup \{v\}$ is a minimal blocking set of $H'$.
\end{lemma}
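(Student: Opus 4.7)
The plan is to first nail down the cardinality equality $\OPT(H') = \OPT(H)+1$, and then use it to verify in turn that $Y' = Y \cup \{v\}$ is a blocking set and that it is minimal. The upper bound $\OPT(H') \le \OPT(H)+1$ is immediate: if $S$ is a minimum vertex cover of $H$ then $S \cup \{v\}$ covers every edge of $H'$ (the only new edges are incident to $v$). For the matching lower bound, take any minimum vertex cover $S'$ of $H'$. If $v \in S'$, then $S'\setminus \{v\}$ is a vertex cover of $H$, so $|S'| \ge \OPT(H)+1$. Otherwise $N_{H'}(v) = Y \subseteq S'$, so $S' \cap V(H) = S'$ is a vertex cover of $H$ that contains the blocking set $Y$; by the definition of a blocking set this forces $|S'| \ge \OPT(H)+1$.

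Given $\OPT(H') = \OPT(H)+1$, the blocking property of $Y'$ follows by a one-line contrapositive. If some minimum vertex cover $S'$ of $H'$ contained $Y'$, then in particular $v \in S'$, so $S' \setminus \{v\}$ would be a vertex cover of $H$ of size $\OPT(H')-1 = \OPT(H)$ containing $Y$, contradicting the fact that $Y$ is a blocking set of $H$.

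For minimality I will treat the two types of elements of $Y'$ separately. To show that $Y'\setminus\{v\} = Y$ is not a blocking set of $H'$, I will invoke Proposition~\ref{proposition::bs_basics}(\ref{enum::bs_basics_sol_size}), which gives $\OPT(H-Y)+|Y| = \OPT(H)+1$. Extending an optimum vertex cover of $H-Y$ by all of $Y$ yields a vertex cover of $H$ of size $\OPT(H)+1 = \OPT(H')$ that contains $Y$; since it already contains every neighbor of $v$ in $H'$, it is also a vertex cover of $H'$, hence an optimum vertex cover of $H'$ containing $Y$. For any $y \in Y$, the minimality of $Y$ in $H$ gives an optimum vertex cover $X$ of $H$ with $Y\setminus\{y\} \subseteq X$; then $X \cup \{v\}$ is a vertex cover of $H'$ of size $\OPT(H)+1 = \OPT(H')$ containing $Y' \setminus \{y\} = (Y\setminus\{y\}) \cup \{v\}$, so $Y'\setminus\{y\}$ is not a blocking set of $H'$ either.

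I do not expect any real obstacle: everything reduces to the cardinality identity $\OPT(H') = \OPT(H)+1$ together with Proposition~\ref{proposition::bs_basics}(\ref{enum::bs_basics_sol_size}) and the minimality of $Y$ in $H$. The only place where one has to be a little careful is in the minimality argument, keeping the case $y' = v$ (which needs the $\OPT(H-Y)+|Y| = \OPT(H)+1$ identity) cleanly separated from the case $y' \in Y$ (which only needs the set-inclusion minimality of $Y$ inside $H$).
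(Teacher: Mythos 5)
Your proof is correct and takes essentially the same approach as the paper: establish $\OPT(H')=\OPT(H)+1$, use it to see $Y'$ is blocking, and then handle minimality via the fact that $\OPT(H-Y)+|Y|=\OPT(H)+1$ (Proposition~\ref{proposition::bs_basics}(iv), used explicitly by you and implicitly by the paper). The only cosmetic difference is that you verify directly that $Y'\setminus\{y'\}$ is non-blocking for each $y'\in Y'$, whereas the paper assumes a proper blocking subset $Y''\subsetneq Y'$ and first argues $v\in Y''$ before deriving a contradiction; these are interchangeable because the blocking property is upward monotone.
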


\begin{proof}
    First, we show that $\OPT(H') = \OPT(H) +1$. It is clear that $\OPT(H') \leq \OPT(H)+1$ because every (optimum) vertex cover of $H$ together with the newly added vertex $v$ is a vertex cover of $H'$. Now, assume for contradiction that $\OPT(H')=\OPT(H)$. This implies that every optimum vertex cover $X$ of $H'$ does not contain the vertex $v$, and therefore the neighborhood $Y$ of $v$ in $H'$. This contradicts the fact that $Y$ is a blocking set in $H$, because $X$ is an optimum vertex cover of $H$ with $Y \subseteq X$. Thus, $\OPT(H') = \OPT(H) +1$.

    Second, we will show that $Y \cup \{v\}$ is a minimal blocking set of $H'$. The set $Y \cup \{v\}$ is a blocking set of $H'$; otherwise there would exist an optimum vertex cover $X'$ of $H'$ with $Y \cup \{v\} \subseteq X'$. But this would imply that $X=X' \setminus \{v\}$ is an optimum vertex cover of $H$ with $Y \subseteq X$ which contradicts the assumption that $Y$ is a (minimal) blocking set of $H$.
    Now, assume that $Y \cup \{v\}$ is not a minimal blocking set of $H'$. Hence, there exists a set $Y' \subsetneq Y \cup \{v\}$ that is a minimal blocking set in $H'$.
    Observe that there is an optimum vertex cover of $H'$ containing $v$ (take a vertex cover in $H$ and add $v$) and an optimum vertex cover containing $N(v)$, by taking a size-$(\OPT(H)+1)$ vertex cover in $H$ containing $Y$. Thereby,  it holds that neither the set $\{v\}$ nor the set $N(v)=Y$ is a blocking set of $H'$; hence, $Y' \setminus \{v\} \neq \emptyset$ and $v \in Y'$.

    The set $Y' \setminus \{v\} \subsetneq Y$ is not a blocking set of $H$ because $Y$ is a minimal blocking set of $H$. Thus, there exists an optimum vertex cover $X$ of $H$ that contains the set $Y'\setminus \{v\}$. But, $X \cup \{v\}$ is an optimum vertex cover of $H'$ that contains $Y'$ which implies that $Y'$ is not a blocking set of $H'$ and contradicts the choice of $Y'$. This proves that $Y \cup \{v\}$ is a minimal blocking set of $H'$ and concludes the proof.
\end{proof}

\begin{observation} \label{observation::bs_monoton}
    It follows from Lemma \ref{lemma::lb_add_vertex} that $\bsd{\C}{d-1} < \bsd{\C}{d}$.
\end{observation}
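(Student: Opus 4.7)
The plan is to derive the observation by promoting an extremal graph for $\bsd{\C}{d-1}$ to a slightly larger graph at elimination distance $d$ whose minimal blocking set contains one more vertex, using Lemma~\ref{lemma::lb_add_vertex}. The case $\bsd{\C}{d-1}=0$ is vacuous in the regime of interest (if $\C$ contains any graph with a nonempty minimal blocking set, all values $\bsd{\C}{d}$ are at least $1$), so I would concentrate on the generic case $\bsd{\C}{d-1}\geq 1$.

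First I would pick a graph $H$ with $\ed{\C}{H}\leq d-1$ containing a minimal blocking set $Y$ of size exactly $\bsd{\C}{d-1}$. By Proposition~\ref{proposition::bs_basics}(\ref{enum::bs_basics_one_cc}), the set $Y$ is fully contained in a single connected component $H_0$ of $H$. Standard reasoning then shows that $Y$ is also a minimal blocking set of $H_0$ alone: extending any optimum vertex cover of $H_0$ containing a subset $Y'\subsetneq Y$ by optimum vertex covers of the other components gives an optimum vertex cover of $H$ containing $Y'$, so minimality in $H$ forces minimality in $H_0$. Since the elimination distance of $H$ is taken as the maximum over its components, $\ed{\C}{H_0}\leq\ed{\C}{H}\leq d-1$, and I may replace $H$ by $H_0$ and assume $H$ is connected.

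Next I would apply Lemma~\ref{lemma::lb_add_vertex} to $H$ and $Y$, producing a graph $H'$ in which $Y\cup\{v\}$ is a minimal blocking set of size $\bsd{\C}{d-1}+1$. Because $H$ is connected and $v$ is joined to the nonempty set $Y\subseteq V(H)$, the graph $H'$ is connected as well. If $H'\in\C$ then $\ed{\C}{H'}=0\leq d$; otherwise the connected-graph clause of Definition~\ref{def:elimination-dist-to-C} gives
\[
\ed{\C}{H'} \;\leq\; \ed{\C}{H'-v}+1 \;=\; \ed{\C}{H}+1 \;\leq\; d.
\]
Hence $H'$ witnesses $\bsd{\C}{d}\geq |Y\cup\{v\}|=\bsd{\C}{d-1}+1$, which is strictly larger than $\bsd{\C}{d-1}$.

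The only mildly delicate step is the reduction to connected $H$; after that the observation is an essentially direct combination of Lemma~\ref{lemma::lb_add_vertex} with the recursive definition of $\ed{\C}{\cdot}$ for connected graphs. I do not expect any genuine obstacle here, which is consistent with the authors labelling it as an observation rather than a lemma.
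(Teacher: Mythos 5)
Your proof is correct and follows exactly the intended route (the paper offers no explicit proof, stating only that the bound ``follows from Lemma~\ref{lemma::lb_add_vertex}''): take a witness for $\bsd{\C}{d-1}$, apply Lemma~\ref{lemma::lb_add_vertex}, and observe the new graph has elimination distance at most $d$. Your reduction to a connected $H_0$ is a tidy way to justify the elimination-distance bound, though it is not strictly needed since the max-over-components clause of Definition~\ref{def:elimination-dist-to-C} already handles disconnected $H$ directly.
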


\begin{lemma} \label{lemma::lb_union}
    Let $H_1=(V_1,E_1)$ and $H_2=(V_2,E_2)$ be two graphs with $\bsg{H_1}\geq 2$ and $\bsg{H_2}\geq 1$.
    Let $Y_1$ be a minimal blocking set of $H_1$ of size at least two, let $Y_2$ be a minimal blocking set of $H_2$, and let $y_1 \in Y_1$ and $y_2 \in Y_2$.
    Let $H=(V_1 \cup V_2, E_1 \cup E_2 \cup \{ \{y_1,y_2\} \})$ be the graph that results from the union of the graphs $H_1$ and $H_2$ by additionally connecting the vertices $y_1$ and $y_2$.
    It holds that $\OPT(H) = \OPT(H_1)+\OPT(H_2)$, and that $Y=(Y_1 \cup Y_2) \setminus \{y_1,y_2\}$ is a minimal blocking set of $H$.
\end{lemma}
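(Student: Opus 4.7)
The plan is to proceed in three stages: first establish that $\OPT(H) = \OPT(H_1) + \OPT(H_2)$, then show that $Y := (Y_1 \cup Y_2) \setminus \{y_1, y_2\}$ is a blocking set of $H$, and finally verify minimality by constructing, for every $y \in Y$, an optimum vertex cover of $H$ that contains $Y \setminus \{y\}$.

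For the cost computation, the inequality $\OPT(H) \geq \OPT(H_1) + \OPT(H_2)$ is immediate since any vertex cover of $H$ restricts to vertex covers of $H_1$ and $H_2$ on the disjoint vertex sets $V_1$ and $V_2$. For the reverse inequality I would use the hypothesis $|Y_1| \geq 2$: picking any $y' \in Y_1 \setminus \{y_1\}$, minimality of $Y_1$ guarantees an optimum vertex cover $S_1$ of $H_1$ containing $Y_1 \setminus \{y'\}$, and in particular $y_1 \in S_1$. Combining $S_1$ with any optimum vertex cover $S_2$ of $H_2$ yields a vertex cover of $H$ of size $\OPT(H_1) + \OPT(H_2)$, because $y_1 \in S_1$ already covers the bridging edge $\{y_1, y_2\}$.

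To see that $Y$ is a blocking set, suppose for contradiction that some optimum vertex cover $S$ of $H$ contains $Y$. The cost equality forces $S \cap V_1$ and $S \cap V_2$ to be optimum vertex covers of $H_1$ and $H_2$ respectively. Since $\{y_1,y_2\} \in E(H)$ must be covered, either $y_1 \in S$ or $y_2 \in S$; in the first case $S \cap V_1 \supseteq (Y_1 \setminus \{y_1\}) \cup \{y_1\} = Y_1$, contradicting that $Y_1$ is a blocking set of $H_1$, and the second case is symmetric against $Y_2$.

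For minimality, fix $y \in Y$ and split on whether $y \in Y_1 \setminus \{y_1\}$ or $y \in Y_2 \setminus \{y_2\}$. In the first case, minimality of $Y_1$ gives an optimum vertex cover $S_1$ of $H_1$ containing $Y_1 \setminus \{y\}$, so in particular $y_1 \in S_1$, while minimality of $Y_2$ gives an optimum vertex cover $S_2$ of $H_2$ containing $Y_2 \setminus \{y_2\}$ (possibly empty if $|Y_2|=1$); the union $S_1 \cup S_2$ covers the bridging edge via $y_1$ and contains $Y \setminus \{y\}$. In the second case, take $S_2$ containing $Y_2 \setminus \{y\}$ (so $y_2 \in S_2$) and, using $|Y_1| \geq 2$ together with minimality of $Y_1$, take $S_1$ containing $Y_1 \setminus \{y_1\}$; the edge $\{y_1,y_2\}$ is now covered by $y_2$, and $S_1 \cup S_2$ again contains $Y \setminus \{y\}$. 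The main obstacle is the bookkeeping around the bridging edge: in every construction of a combined vertex cover one must verify that some endpoint of $\{y_1,y_2\}$ is included, and this is precisely where the asymmetric hypothesis $|Y_1|\geq 2$ is used, both to realize $y_1$ inside an optimum vertex cover of $H_1$ for the cost bound and to handle the subcase $y \in Y_2 \setminus \{y_2\}$ of minimality.
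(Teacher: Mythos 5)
Your proof is correct and follows essentially the same route as the paper's: lower bound on $\OPT(H)$ is trivial, upper bound realized by an optimum cover of $H_1$ containing $y_1$ (existence of which is exactly where $|Y_1|\geq 2$ plus minimality of $Y_1$ enters), the blocking-set claim argued by showing any optimum cover containing $Y$ would restrict to an optimum cover of $H_1$ or $H_2$ containing $Y_1$ or $Y_2$, and minimality by explicitly building an optimum cover of $H$ containing $Y\setminus\{y\}$ for each $y\in Y$ while checking the bridging edge. The only cosmetic difference is that the paper phrases minimality by taking a minimal blocking set $Y'\subseteq Y$ and showing $Y'=Y$, whereas you argue directly that every $Y\setminus\{y\}$ is non-blocking; and your remark that $|Y_1|\geq 2$ is used ``to handle the subcase $y\in Y_2\setminus\{y_2\}$'' is slightly imprecise (minimality of $Y_1$ alone yields a cover containing $Y_1\setminus\{y_1\}$; the hypothesis $|Y_1|\geq 2$ is only needed for the cost equality, which all cases rely on), but this does not affect correctness.
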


\begin{proof}
    Since every (optimum) vertex cover of $H$ contains a vertex cover of $H_1$ and $H_2$ it holds that $\OPT(H) \geq \OPT(H_1)+\OPT(H_2)$.
    The inequality $\OPT(H) \leq \OPT(H_1)+\OPT(H_2)$ follows from the fact that there exists an optimum vertex cover $X_1$ of $H_1$ that contains $y_1$: This holds because $Y_1$ is a minimal blocking set of $H_1$ of size at least two that contains the vertex $y_1$. The optimum vertex cover $X_1$ of $H_1$ that contains $y_1$ together with any optimum vertex cover of $H_2$ is a vertex cover of $H$. Thus, $\OPT(H) \leq \OPT(H_1)+\OPT(H_2)$ which concludes the proof that $\OPT(H) = \OPT(H_1)+\OPT(H_2)$.

    Next, we show that $Y=(Y_1 \cup Y_2) \setminus \{y_1,y_2\}$ is a minimal blocking set of $H$. First, we show that $Y$ is a blocking set of $H$. Assume for contradiction that $Y$ is not a blocking set of $H$. This implies that there exists an optimum vertex cover $X$ of $H$ with $Y \subseteq X$. Since $\{y_1,y_2\}$ is an edge in $H$ it holds that $y_1 \in X$ or $y_2 \in X$; thus, $Y_1 \subseteq X$ or $Y_2 \subseteq X$.
    But, this contradicts the assumption that $Y_1$ is a blocking set of $H_1$ or the assumption that $Y_2$ is a blocking set of $H_2$ respectively because $X \cap V(H_1)$ and $X \cap V(H_2)$ are optimum vertex covers of $H_1$ resp.\ $H_2$, and because at least one of them contains the vertex set $Y_1$ resp.\ $Y_2$. This proves that $Y$ is a blocking set of $H$.

    Second, we show that $Y$ is also a minimal blocking set of $H$.
    Let $Y' \subseteq Y$ be a minimal blocking set of $H$. If $Y' = Y$ we are done, so suppose there exists $y \in Y \setminus Y'$. Suppose $y \in Y_i$ for $i \in \{1,2\}$. Let $X_i$ be an optimal vertex cover in $H_i$ containing $Y_i \setminus \{y\}$. Let $j \in \{1,2\} \setminus \{i\}$ and let $X_{j}$ be an optimal vertex cover in $H_j$ containing $Y_j \setminus \{y_j\}$. Both these vertex covers exist, since $Y_i$ respectively $Y_j$ are minimal blocking sets in $H_i$ respectively $H_j$. Observe that $X_i \cup X_j$ is a vertex cover of $H$, since $X_i$ is a vertex cover in $H_i$, $X_j$ is a vertex cover in $H_j$ and the edge $\{y_1,y_2\}$ is covered by $y_i$. Furthermore, it has minimum size since $\OPT(H) = \OPT(H_1)+\OPT(H_2)$. However, $Y'\subseteq (Y_1 \cup Y_2) \setminus \{y,y_j\} \subseteq X_i \cup X_j$, contradicting that $Y'$ is a blocking set.
\end{proof}

The next lemma shows how we can combine these two constructions to obtain larger minimal blocking sets for a graph of elimination distance $d$ to a graph class \C when given a graph of elimination distance $d-1$ to a graph class \C.

\begin{lemma} \label{lemma::lb_combine}
    Let $H=(V,E)$ be a graph, and let $Y$ be a minimal blocking set of $H$. There exists a graph $H'$ that fulfills the following properties:
    \begin{enumerate}[(i)]
        \item The graph $H'$ has a minimal blocking set of size $2|Y|-1$. \label{enum::lb_bss}
        \item For every graph class $\C$ it holds that $\ed{\C}{H'} \leq \ed{\C}{H}+1$. \label{enum::lb_ed}
    \end{enumerate}
\end{lemma}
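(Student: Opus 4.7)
The plan is to build $H'$ by joining two disjoint copies of $H$ through a new root vertex whose neighborhood consists of the entire minimal blocking set of one copy together with exactly one vertex of the other. Concretely, let $H_1, H_2$ be two disjoint copies of $H$, carrying corresponding copies $Y_1, Y_2$ of $Y$; fix any $y_2 \in Y_2$; and form $H'$ from $H_1 \cup H_2$ by adding a new vertex $r$ with $N_{H'}(r) = Y_1 \cup \{y_2\}$. Property~(ii) is essentially immediate: $H' - r = H_1 \cup H_2$ has elimination distance $\max(\ed{\C}{H_1}, \ed{\C}{H_2}) = \ed{\C}{H}$ by the disconnected clause of Definition~\ref{def:elimination-dist-to-C}, so if $H' \notin \C$ rooting an elimination forest at $r$ yields $\ed{\C}{H'} \leq \ed{\C}{H}+1$, while if $H' \in \C$ the claim is trivial.

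For property~(i), the target minimal blocking set is $Y^* := Y_1 \cup (Y_2 \setminus \{y_2\})$, which has the required size $|Y_1|+|Y_2|-1 = 2|Y|-1$. The first step is to compute $\OPT(H') = 2\OPT(H)+1$ by splitting on whether an optimum cover $S$ contains $r$: when $r \in S$ each $S \cap V(H_i)$ may be any optimum cover of $H_i$ (total $1 + 2\OPT(H)$); when $r \notin S$ the neighborhood $Y_1 \cup \{y_2\} \subseteq S$, so by Proposition~\ref{proposition::bs_basics}(\ref{enum::bs_basics_sol_size}) we have $|S \cap V(H_1)| = \OPT(H)+1$, while minimality of $Y_2$ (for $|Y| \geq 2$) allows $S \cap V(H_2)$ to be an optimum cover of $H_2$ containing $y_2$, again totaling $2\OPT(H)+1$. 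I then show $Y^*$ is a blocking set of $H'$: in the first case $S \cap V(H_1)$ is an optimum cover of $H_1$ and therefore cannot contain the blocking set $Y_1$; in the second case if additionally $Y_2 \setminus \{y_2\} \subseteq S$ then $Y_2 \subseteq S \cap V(H_2)$, contradicting that $Y_2$ blocks $H_2$.

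To establish minimality of $Y^*$, I verify that every $Y^* \setminus \{y\}$ sits inside some optimum cover of $H'$: when $y \in Y_1$, I use a first-type cover built from optimum covers of $H_1$ and $H_2$ that contain $Y_1 \setminus \{y\}$ and $Y_2 \setminus \{y_2\}$ respectively, both of which exist by the minimality of $Y_1$ and $Y_2$; when $y \in Y_2 \setminus \{y_2\}$, I use a second-type cover whose restriction to $H_2$ is an optimum cover containing $Y_2 \setminus \{y\}$, which automatically contains $y_2$ because $y \neq y_2$. The main obstacle is keeping the case analysis organized and checking in each subcase that the chosen sub-covers of $H_1$ and $H_2$ actually combine into a legitimate optimum cover of $H'$; each step, however, reduces to the defining property of minimal blocking sets recorded in Proposition~\ref{proposition::bs_basics}, namely that for every $y \in Y_i$ there is an optimum cover of $H_i$ containing $Y_i \setminus \{y\}$.
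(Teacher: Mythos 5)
Your construction is identical to the paper's: the paper first applies Lemma~\ref{lemma::lb_add_vertex} to obtain $H_1$, namely $H$ plus a new vertex $y_1$ adjacent to $Y$, and then applies Lemma~\ref{lemma::lb_union} to join $H_1$ to a fresh copy $H_2$ of $H$ via the edge $\{y_1,y_2\}$ for an arbitrary $y_2 \in Y_2$ --- which yields exactly your graph $H'$ (two disjoint copies of $H$ plus a vertex $r=y_1$ with $N(r)=Y_1\cup\{y_2\}$) and the same target minimal blocking set $Y_1 \cup (Y_2\setminus\{y_2\})$. You verify the blocking-set properties directly rather than factoring the argument through those two intermediate lemmas, but the approach is essentially the same.
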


To prove Lemma \ref{lemma::lb_combine} we combine the two constructions of Lemma \ref{lemma::lb_add_vertex} and Lemma \ref{lemma::lb_union}.

\begin{proof}
    To construct graph $H'$, we first apply Lemma \ref{lemma::lb_add_vertex} to graph $H$ and the minimal blocking set $Y$. Let $H_1$ be the resulting graph, and let $y_1$ be the vertex that we added to $H$ during the construction of Lemma \ref{lemma::lb_add_vertex} to obtain graph $H_1$. It follows from Lemma \ref{lemma::lb_add_vertex} that $Y_1=Y \cup \{y_1\}$ is a minimal blocking set of $H_1$.

    Second, we apply Lemma \ref{lemma::lb_union} to the two graphs $H_1$ and $H_2=H$, the minimal blocking set $Y_1=Y \cup \{y_1\}$ of $H_1$ of size at least two, the minimal blocking set $Y_2=Y$ of $H_2$, and the vertex $y_1 \in Y_1$ as well as an arbitrary vertex $y_2 \in Y_2$. Let $H'$ be the resulting graph.
    The set $Y' = (Y_1 \cup Y_2)\setminus \{y_1,y_2\}$ is a minimal blocking set of $H'$ (Lemma \ref{lemma::lb_union}). The size of $Y'$ is $|Y_1|+|Y_2|-2=2|Y|-1$. This proves item (\ref{enum::lb_bss}).
    Item (\ref{enum::lb_ed}) holds, because $H'-y_1$ consists of two copies of $H$ that are not connected by any edge.

    \begin{figure}
    \centering
    \includegraphics{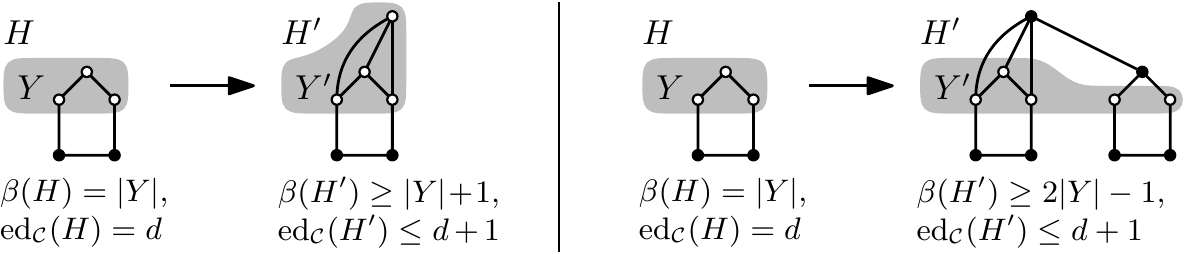}
    \caption{This figure depicts the construction used to prove Lemma~\ref{lemma::lb_add_vertex} (left) and Lemma~\ref{lemma::lb_combine} (right). White vertices form minimal blocking sets.
    }
    \label{fig:blocking-sets-LB}
    \end{figure}

    The combination of these two constructions is depicted in Figure~\ref{fig:blocking-sets-LB}.
\end{proof}

Finally, we are able to prove Theorem \ref{theorem::lb} by induction using the construction of Lemma~\ref{lemma::lb_add_vertex} as well as Lemma~\ref{lemma::lb_combine}. We need the construction of Lemma \ref{lemma::lb_add_vertex} only for the base case of the induction because for a graph with minimal blocking set size one, this construction leads to a graph with minimal blocking set size at least two whereas Lemma \ref{lemma::lb_combine} only leads to a graph of minimal blocking set size one.

\begin{proof}[Proof of Theorem \ref{theorem::lb}]
    We prove Theorem \ref{theorem::lb} by induction over the elimination distance $d$ to graph class \C.
    In the base case of the induction we construct a graph $G$ that has elimination distance at most $d=1$ to graph class \C. Let $H$ be a graph of class \C with $\bsg{H}=\bsc{\C}$, and let $Y$ be a minimal blocking set of $H$ of size $\bsc{\C}$.
    If $\bsc{\C}=1$ then we apply Lemma \ref{lemma::lb_add_vertex} to graph $H$ and the minimal blocking set $Y$ of $H$ of size \bsc{\C} to obtain graph $G$. Let $y$ be the vertex that we add to graph $H$ during the construction of Lemma \ref{lemma::lb_add_vertex} to obtain graph $G$. Obviously, it holds that $\ed{\C}{H_1} \leq 1$. Furthermore, $Y'=Y \cup \{y\}$ is a minimal blocking set of $G$ which implies that $\bsg{G} \geq |Y'| = |Y|+1 = \bsg{H}+1=\bsc{\C}+1 = 2 = 2^{1-1}+1$. Hence, $\bsd{\C}{1} \geq \bsg{G} \geq 2^{1-1}+1$ for $d=1$ and $\bsc{\C}=1$ which is the desired bound for Theorem \ref{theorem::lb}.

    In the case that $\bsc{\C}\geq 2$ we apply Lemma \ref{lemma::lb_combine} to the graph $H$ and the minimal blocking set $Y$ of $H$ of size \bsc{\C} to obtain graph $G$. It follows from Lemma \ref{lemma::lb_combine} item (\ref{enum::lb_ed}) that $\ed{\C}{G} \leq \ed{\C}{H}+1 = 1$ and from item (\ref{enum::lb_bss}) that $\bsd{\C}{d} \geq \bsg{G} \geq 2 |Y| - 1 = 2 \bsc{\C} - 1 = (\bsc{\C}-1) \cdot 2^1 + 1$. This concludes the proof of Theorem \ref{theorem::lb} for $d=1$.

    For the induction step, we assume that the statement is true for all integers less than $d$, and all graph classes \C where $\bsc{\C}$ is bounded.

    By induction there exists a graph $H$ with $\ed{\C}{H}\leq d-1$ such that
    \begin{align} \label{align::lb}
        \bsg{H} \geq \begin{cases}
                        2^{(d-1)-1}+1 & \text{, if } \bsc{\C}=1 \\
                        (\bsc{\C}-1)2^{(d-1)}+1 & \text{, if } \bsc{\C}\geq 2
                    \end{cases}
    \end{align}
    because $\bsd{\C}{d-1}$ has at least this size.
    Let $Y$ be a minimal blocking set of $H$ of size \bsg{H}. Again, we apply Lemma \ref{lemma::lb_combine} to the graph $H$ and the minimal blocking set $Y$. It follows from Lemma \ref{lemma::lb_combine} item (\ref{enum::lb_ed}) that $\ed{\C}{G} \leq \ed{\C}{H}+1 = d$ and from item (\ref{enum::lb_bss}) that $\bsg{G} \geq 2 |Y| - 1 = 2 \bsg{H} - 1$. Using the induction hypothesis (inequality \ref{align::lb}) we obtain:
    \begin{align*}
        \bsg{G} \geq 2 \bsg{H} - 1 \overset{(\ref{align::lb})}{\geq}
            \begin{cases}
                2 \cdot \left(2^{(d-1)-1}+1 \right) -1 = 2^{d-1} + 1 & \text{, if } \bsc{\C}=1 \\
                2 \cdot \left( (\bsc{\C}-1)2^{(d-1)}+1 \right) -1 = (\bsc{\C}-1)2^d+1  & \text{, if } \bsc{\C}\geq 2. \\
            \end{cases}
    \end{align*}
    This concludes the proof of Theorem \ref{theorem::lb} because $\bsd{\C}{d} = \max\{\bsg{G} \mid \ed{\C}{G} \leq d\}$.
\end{proof}

\subsection{Upper bound}

We will show that our lower bound for $\bsd{\C}{d}$, which we proved in Theorem \ref{theorem::lb}, is tight when the graph class \C is hereditary. As mentioned above, this gives us also a tight bound for the size of a largest minimal blocking sets for graphs with treedepth $d$. Thus, our result improves the upper and lower bound for the largest minimal blocking set size of graphs of treedepth (at most) $d$ \cite{BougeretS17}.
Afterwards, we will show an upper bound for $\bsd{\C}{d}$ for some non-hereditary graph classes, including the class of graphs where the optimum solution is equal to the optimum \LP solution. This bound is weaker than the bound for $\bsd{\C}{d}$ when $\C$ is hereditary.

\begin{theorem} \label{theorem::ub_hereditary}
    Let $\C$ be a robust hereditary graph class where \bsc{\C} is bounded, and let $d\geq 0$.
    \begin{align*}
        \bsd{\C}{d} \leq \begin{cases}
                            1 & \text{, if } \bsc{\C} = 1 \text{ and } d=0 \\
                            2^{d-1}+1 & \text{, if } \bsc{\C} = 1 \text{ and } d\geq1 \\
                            (\bsc{\C}-1) 2^d+1 & \text{, if } \bsc{\C} \geq 2
                        \end{cases}
    \end{align*}
\end{theorem}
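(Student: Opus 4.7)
My plan is to prove the bound by induction on $d$, decomposing a minimal blocking set of $G$ around a vertex $r$ supplied by the elimination forest and reducing to minimal blocking sets of $G-r$ and $G-N[r]$. Because $\C$ is hereditary, the class of graphs with $\ed{\C}{\cdot}\leq d-1$ is also closed under taking induced subgraphs, so $\ed{\C}{G-r}\leq d-1$ entails $\ed{\C}{G-N[r]}\leq d-1$, and the induction hypothesis applies to both. The base case $d=0$ is immediate from $G\in\C$ and the definition of $\bsc{\C}$, matching each of the three stated bounds.

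For the inductive step, fix a minimal blocking set $Y$ of $G$. By Proposition \ref{proposition::bs_basics}(iii) and robustness of $\C$ I may assume $G$ is connected, and by the base case assume $G\notin\C$, so some $r\in V(G)$ attains $\ed{\C}{G-r}\leq d-1$. I first dispose of three easy subcases via Propositions about blocking sets and Lemma \ref{lemma::delete_non_bs}. If $r$ lies in every optimum vertex cover of $G$, then $r\notin Y$ by Proposition \ref{proposition::bs_basics}(i) and the bijection between optimum vertex covers of $G$ containing $r$ and optimum vertex covers of $G-r$ shows $Y$ is a minimal blocking set of $G-r$, giving $|Y|\leq\bsd{\C}{d-1}$. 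If $r$ lies in no optimum vertex cover then either $|Y|=1$ by Proposition \ref{proposition::bs_basics}(ii) or $Y$ is (analogously) a minimal blocking set of $G-N[r]$, yielding $|Y|\leq\bsd{\C}{d-1}$. If $r$ lies in some but not all optimum vertex covers with $r\in Y$, then minimality of $Y$ produces an optimum vertex cover $X$ with $Y\setminus\{r\}\subseteq X$ and $r\notin X$; so $Y\setminus\{r\}$ is a blocking set of $G-r$, and any minimal sub-blocking-set $Y^{\#}$ of $Y\setminus\{r\}$ satisfies that $\{r\}\cup Y^{\#}$ is a blocking set of $G$ contained in $Y$, hence equals $Y$, giving $|Y|\leq\bsd{\C}{d-1}+1$.

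The remaining main subcase is $r\notin Y$ with $r$ in some but not all optimum vertex covers. If also $N[r]=V(G)$, a short argument forces $G$ to be a clique; I handle cliques separately, verifying by direct computation that $\bsg{K_n}=n$ satisfies the stated bounds using that $\bsc{\C}$ dominates the size of the largest clique contained in $\C$ and hence $\ed{\C}{K_n}$ is small enough to make $(\bsc{\C}-1)2^{\ed{\C}{K_n}}+1\geq n$ (and analogously when $\bsc{\C}=1$). In the genuinely main case Lemma \ref{lemma::structure_bs} applies: I pick a minimal blocking set $Y'\subseteq Y$ of $G-r$ and a minimal blocking set $Y''\subseteq Y\setminus N[r]$ of $G-N[r]$; part (ii) together with minimality of $Y$ gives $Y'\cup Y''=Y$, so by inclusion--exclusion and the induction hypothesis $|Y|=|Y'|+|Y''|-|Y'\cap Y''|\leq 2\bsd{\C}{d-1}-|Y'\cap Y''|$.

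The main obstacle is extracting the final ``$-1$'' to obtain $|Y|\leq 2\bsd{\C}{d-1}-1$, which is what produces the closed-form bounds after solving the recurrence. My plan is to use part (iii) of Lemma \ref{lemma::structure_bs} to classify each $y\in Y$ into two sets $Y_1,Y_2$ depending on which of the two ``not-a-blocking-set'' conditions certifies the removability of $y$, noting that $Y_1\cup Y_2=Y$ and $Y\cap N(r)\subseteq Y_1\setminus Y_2$. If $Y_1\cap Y_2\neq\emptyset$, a common vertex can be forced into both $Y'$ and $Y''$ (since $Y_1\subseteq Y'$ and $Y_2\subseteq Y''$ for suitable choices), immediately saving one. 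If $Y_1\cap Y_2=\emptyset$ so that $Y=Y_1\sqcup Y_2$, I intend to show that assuming simultaneously $|Y'|=|Y''|=\bsd{\C}{d-1}$ and $Y'\cap Y''=\emptyset$ yields, via Lemma \ref{lemma::structure_bs}(ii) applied to a smaller minimal blocking set of $G-r$ or $G-N[r]$ obtained by dropping one vertex, a proper blocking subset of $Y$, contradicting the minimality of $Y$; hence one of the two induction bounds must be strict. Combining all subcases yields the recurrence $\bsd{\C}{d}\leq 2\bsd{\C}{d-1}-1$ whenever $\bsd{\C}{d-1}\geq 2$; the lone remaining case $\bsc{\C}=1$, $d=1$ is handled directly by $|Y|\leq|Y_1|+|Y_2|\leq 2$, and solving the recurrence with base value $\bsd{\C}{0}=\bsc{\C}$ recovers the three expressions stated in the theorem.
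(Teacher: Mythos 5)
Your overall induction, the base case, and the first four subcases (root $r$ in/out of $Y$, $r$ in all/none/some optimum covers) match the paper's Cases~1--4 almost verbatim, differing only in harmless reformulations (e.g.\ in the ``$r$ in no optimum cover, $r\notin Y$'' case you pass to $G-N[r]$ whereas the paper passes to $G-r$ via Lemma~\ref{lemma::ub_r_not_bs_OPT}; both yield $|Y|\leq\bsd{\C}{d-1}$). The sub-case $N[r]=V(G)$ in your main case is handled more cumbersomely than necessary (the paper observes directly that $V(G)\setminus\{r\}$ is then an optimum cover containing $Y$, so $Y$ could not be a blocking set), but this is not a correctness problem.

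The genuine gap is in the crucial sub-case of Case~5 where $Y$ is not already a minimal blocking set of $G-r$ and $Y\setminus N[r]$ is not already one of $G-N[r]$; that is precisely where the ``$-1$'' in the recursion $\bsd{\C}{d}\leq 2\bsd{\C}{d-1}-1$ has to be earned. Your plan is to cover $Y=Y'\cup Y''$ with minimal blocking sets of $G-r$ and $G-N[r]$, classify $y\in Y$ into $Y_1,Y_2$ via Lemma~\ref{lemma::structure_bs}(iii), and argue that $|Y'\cap Y''|\geq 1$ (or one size is strictly smaller). The case $Y_1\cap Y_2\neq\emptyset$ is fine: since $Y_1\subseteq Y'$ and $Y_2\subseteq Y''$ for \emph{every} choice, you do get $|Y'\cap Y''|\geq 1$. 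But in the complementary case $Y_1\cap Y_2=\emptyset$ with $Y'=Y_1$, $Y''=Y_2$, your claim that dropping a vertex from $Y'$ or $Y''$ and reapplying Lemma~\ref{lemma::structure_bs}(ii) yields a proper blocking subset of $Y$ does not go through: $Y'$ and $Y''$ are \emph{minimal} blocking sets of $G-r$ resp.\ $G-N[r]$, so removing a vertex destroys the blocking-set property in the corresponding graph, and Lemma~\ref{lemma::structure_bs}(ii) cannot then be applied to the shrunken set. You have not produced any smaller blocking set to feed into (ii), and it is not clear how one could, because $Y_1\sqcup Y_2$ with both pieces minimal and disjoint is not obviously contradictory on its own.

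The paper closes this gap by a different decomposition, namely Lemma~\ref{lemma::ub_split_bs}(3): it splits $Y=Y'\cup\hat{Y}$ with $\hat{Y}=Y\setminus Y'$, shows $\hat{Y}$ is a minimal blocking set of $G-Y'$, then proves $\OPT(G-Y')=\OPT(G-Y'-r)$ and --- by choosing a vertex $y\in Y'\setminus\widetilde{Y}$ and invoking Lemma~\ref{lemma::structure_bs}(iii) --- constructs an optimum cover of $G-Y'-r$ containing $\hat{Y}$, so $\hat{Y}$ is \emph{not} a blocking set of $G-Y'-r$ and requires a nonempty $Z\subseteq N(r)\setminus Y'$ to be added. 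Only this gives $|\hat{Y}|\leq\bsg{G-Y'-r}-1\leq\bsd{\C}{d-1}-1$ (using heredity so that $G-Y'-r$ still has elimination distance at most $d-1$), hence the required $|Y|\leq 2\bsd{\C}{d-1}-1$. Your sketch needs to be replaced by (or shown equivalent to) an argument of this kind; as written, the extraction of the ``$-1$'' is not established.
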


We consider different cases in our proof. But, only one case needs the requirement that the class \C is hereditary. We will see later that we can bound the blocking set size also for some non-hereditary class \C for the remaining case.

\begin{lemma} \label{lemma::ub_r_in_bs}
    Let $G$ be a graph, let $Y$ be a minimal blocking set in $G$, and let $r \in Y$ be an arbitrary vertex that is contained in the blocking set $Y$.
    \begin{enumerate}[(i)]
        \item If $\OPT(G)=\OPT(G-r)$ then $Y=\{r\}$. \label{enum::ub_r_in_bs_same}
        \item If $\OPT(G)=\OPT(G-r)+1$ then $Y\setminus \{r\}$ is a minimal blocking set of $G-r$. \label{enum::ub_r_in_bs_plus}
    \end{enumerate}
\end{lemma}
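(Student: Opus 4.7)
\medskip

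My plan is to handle the two parts separately, in both cases by a short argument that starts from a hypothetical minimum vertex cover and manipulates it by adding or removing $r$.

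For item (\ref{enum::ub_r_in_bs_same}), I would first show that under $\OPT(G)=\OPT(G-r)$ the vertex $r$ cannot lie in \emph{any} minimum vertex cover of $G$: if $X$ were such a minimum vertex cover with $r\in X$, then $X\setminus\{r\}$ would be a vertex cover of $G-r$ of size $\OPT(G)-1=\OPT(G-r)-1$, contradicting the optimality of $\OPT(G-r)$. Hence $\{r\}$ is already a blocking set of $G$. Since $Y$ is a minimal blocking set containing $r$, Proposition~\ref{proposition::bs_basics}(\ref{enum::bs_basics_none}) (which says any minimal blocking set containing a vertex not in any optimum cover has size one) immediately gives $Y=\{r\}$.

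For item (\ref{enum::ub_r_in_bs_plus}), I would first verify that $Y\setminus\{r\}$ is a blocking set of $G-r$. Suppose otherwise: take a minimum vertex cover $X$ of $G-r$ with $Y\setminus\{r\}\subseteq X$. Then $X\cup\{r\}$ is a vertex cover of $G$ of size $\OPT(G-r)+1=\OPT(G)$, hence a minimum vertex cover of $G$ containing all of $Y$, contradicting that $Y$ is a blocking set of $G$.

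It then remains to prove minimality. Suppose for contradiction there were a proper subset $Y'\subsetneq Y\setminus\{r\}$ that is already a blocking set of $G-r$. I would claim that $Y'\cup\{r\}$ is then a blocking set of $G$: if some minimum vertex cover $X$ of $G$ contained $Y'\cup\{r\}$, then $X\setminus\{r\}$ would be a vertex cover of $G-r$ of size $\OPT(G)-1=\OPT(G-r)$, i.e.\ a minimum vertex cover of $G-r$ containing $Y'$, contradicting that $Y'$ is a blocking set of $G-r$. Since $Y'\cup\{r\}\subsetneq Y$ (because $Y'\subsetneq Y\setminus\{r\}$ and $r\in Y$), this contradicts minimality of $Y$ as a blocking set of $G$, completing the proof. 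The only slightly delicate point is distinguishing the two cases $r\in X$ and $r\notin X$ when translating between vertex covers of $G$ and of $G-r$; the key observation is that the equation $\OPT(G)=\OPT(G-r)+1$ makes the ``add $r$'' and ``remove $r$'' operations exact inverses on the level of optimum-size covers, which is what drives both directions of the argument.
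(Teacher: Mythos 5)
Your proof is correct and proceeds along essentially the same lines as the paper's. The only superficial difference is that for item~(\ref{enum::ub_r_in_bs_plus}) the paper deduces both the blocking-set property and minimality by applying its general Lemma~\ref{lemma::delete_non_bs} with $Z=\{r\}$, whereas you inline the identical ``add or remove $r$ from an optimum cover'' argument directly.
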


\begin{proof}
    If $\OPT(G)=\OPT(G-r)$ then no optimum vertex cover of $G$ contains $r$. Thus, $\{r\}$ is a (minimal) blocking set of $G$ which implies that $Y=\{r\}$ and proves item (i).

    Now, assume that $\OPT(G)=\OPT(G-r)+1$. Consequently, there exists an optimum vertex cover of $G$ that contains vertex $r$ which implies that $Y \neq \{r\}$. Let $Y'=Y \setminus \{r\} \neq \emptyset$.
    It follows from Lemma \ref{lemma::delete_non_bs} item (\ref{enum::delete_non_bs_dZ}) that $Y'$ is a blocking set of $G-r$. Furthermore, it follows from Lemma \ref{lemma::delete_non_bs} item (\ref{enum::delete_non_bs_aZ}) that $Y'$ is a minimal blocking set of $G-r$, because every minimal blocking set $\hat{Y} \subseteq Y'$ of $G-r$ can be extended to the blocking set $\hat{Y} \cup \{r\} \subseteq Y$ of $G$. Since $Y$ is a minimal blocking set of $G$, the set $Y'$ must be a minimal blocking set of $G-r$.
\end{proof}

\begin{lemma} \label{lemma::ub_r_not_bs_OPT}
    Let $G$ be a graph that contains a vertex $r$ that is not contained in any optimum vertex cover of $G$; hence $\OPT(G-r)=\OPT(G)$. Let $Y$ be a minimal blocking set of $G$ with $r \notin Y$.\footnote{Such a blocking set exists because $V(G) \setminus \{r\}$ is a blocking set of $G$; otherwise there would exists an optimum vertex cover of $G$ that contains $r$ because every optimum vertex cover of $G$ would contain all except one vertex.}
    There exists a (possibly empty) set $Z \subseteq N(r)$ such that $Y \cup Z$ is a minimal blocking set of $G-r$.
\end{lemma}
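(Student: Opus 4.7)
The plan is to build the required minimal blocking set by first producing a (possibly non-minimal) blocking set of $G-r$ lying inside $Y \cup N(r)$, then shrinking it while proving that no vertex of $Y$ can be discarded. The key structural input is that, since $r$ lies in no optimum vertex cover of $G$, every optimum vertex cover of $G$ must contain all of $N(r)$, and in particular $\OPT(G)=\OPT(G-r)$.

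First, I would show that $Y \cup N(r)$ is a blocking set of $G-r$. Suppose $X'$ is an optimum vertex cover of $G-r$ containing $Y \cup N(r)$. Since $X'$ already covers every edge incident to $r$ (via $N(r)$), $X'$ is a vertex cover of $G$ of size $\OPT(G-r)=\OPT(G)$, hence an optimum vertex cover of $G$ containing $Y$, contradicting that $Y$ is a blocking set of $G$.

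Next, let $Y'$ be any minimal blocking set of $G-r$ with $Y' \subseteq Y \cup N(r)$, which exists by taking an inclusion-minimal blocking subset. The core claim is that $Y \subseteq Y'$; once this is established, setting $Z := Y' \setminus Y \subseteq N(r)$ yields $Y' = Y \cup Z$ as required. To prove the claim, assume for contradiction that some $y \in Y \setminus Y'$. By minimality of $Y$ as a blocking set of $G$, there exists an optimum vertex cover $X$ of $G$ with $Y \setminus \{y\} \subseteq X$. By hypothesis on $r$, $r \notin X$, so $N(r) \subseteq X$. Thus $X \subseteq V(G-r)$, $|X| = \OPT(G) = \OPT(G-r)$, and $X \supseteq (Y \setminus \{y\}) \cup N(r) \supseteq Y'$, contradicting that $Y'$ is a blocking set of $G-r$.

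The only step requiring real care is the contradiction in the third paragraph; the rest is bookkeeping using Lemma \ref{lemma::delete_non_bs} style arguments. I expect the main obstacle to be making precise the interaction between $Y$ and $N(r)$ (which may overlap), but since $Y$ is verified inside $Y'$ entirely (regardless of whether its elements lie in $N(r)$), the argument does not split on cases and goes through cleanly.
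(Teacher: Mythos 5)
Your proposal is correct and follows essentially the same strategy as the paper: show $Y \cup N(r)$ is a blocking set of $G-r$, take a minimal blocking subset $Y'$, prove $Y \subseteq Y'$, and set $Z := Y' \setminus Y$. The only cosmetic difference is in the last step: the paper shows $Y' \setminus N(r)$ is a blocking set of $G$ and invokes minimality of $Y$, whereas you directly suppose $y \in Y \setminus Y'$ and exhibit an optimum cover of $G-r$ containing $Y'$ — two sides of the same argument.
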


\begin{proof}
    Observe that $Y$ may not be a blocking set of $G-r$ because $r$ forces the vertices in $N(r)$ to be in every vertex cover. However, the set $Y \cup N(r)$ is a blocking set of $G-r$; otherwise there exists an optimum vertex cover $X$ of $G-r$ that contains $Y \cup N(r)$. But, the set $X$ is also an optimum vertex cover of $G$ that contains $Y$ which contradicts the assumption that $Y$ is a blocking set of $G$.

    Let $Y' \subseteq Y \cup N(r)$ be a minimal blocking set of $G-r$. Note that $Y' \setminus N(r) \neq \emptyset$ because every optimum vertex cover of $G$ contains $N(r)$ which implies that there exists an optimum vertex cover of $G-r$ that contains $N(r)$. We will show that the minimal blocking set $Y'$ must contain the set $Y$, by proving that the non empty set $Y'\setminus N(r)$ is a blocking set of $G$. If the set $Y' \setminus N(r)$ is not a blocking set of $G$ then there exists an optimum vertex cover $X$ of $G$ that contains the set $Y' \setminus N(r)$. However, the set $X$ contains also the vertex set $N(r)$ because every optimum vertex cover of $G$ does not contain $r$. Thus, $Y' \subseteq X$ and $X$ is an optimum vertex cover of $G-r$ which contradicts the fact that $Y'$ is a blocking set of $G-r$. Now, the set $Z= Y' \cap N(r)$ fulfills the desired properties.
\end{proof}

\begin{lemma} \label{lemma::ub_r_not_bs_OPT+}
    Let $G$ be a graph that contains a vertex $r$ that is contained in every optimum vertex cover of $G$; hence $\OPT(G-r)+1=\OPT(G)$. Let $Y$ be a minimal blocking set of $G$.
    The set $Y$ is also a minimal blocking set of $G-r$.
\end{lemma}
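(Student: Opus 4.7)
The plan is to derive both parts (blocking set and minimality) of the conclusion from the characterization of $r$, noting that $r \notin Y$ by Proposition~\ref{proposition::bs_basics}~(\ref{enum::bs_basics_all}), since every vertex contained in all optimum vertex covers cannot appear in a minimal blocking set.

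First, I would show that $Y$ is a blocking set in $G-r$. Since every optimum vertex cover of $G$ contains $r$, the singleton $\{r\}$ is contained in some (in fact every) optimum vertex cover of $G$, so Lemma~\ref{lemma::delete_non_bs}~(\ref{enum::delete_non_bs_dZ}) applies with $Z=\{r\}$, giving that $Y\setminus\{r\}=Y$ is a blocking set of $G-r$ (using $r\notin Y$). Alternatively, this can be seen directly: if $X'$ were an optimum vertex cover of $G-r$ containing $Y$, then $X'\cup\{r\}$ would be a vertex cover of $G$ of size $\OPT(G-r)+1=\OPT(G)$ containing $Y$, contradicting that $Y$ blocks $G$.

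Second, I would argue minimality by contradiction. Suppose some $Y'\subsetneq Y$ is a blocking set of $G-r$. I claim $Y'$ is also a blocking set of $G$, which contradicts minimality of $Y$ in $G$. Indeed, assume $Y'\subseteq X$ for some optimum vertex cover $X$ of $G$; by hypothesis on $r$ we have $r\in X$, so $X\setminus\{r\}$ is a vertex cover of $G-r$ of size $\OPT(G)-1=\OPT(G-r)$, hence optimum. Since $r\notin Y\supsetneq Y'$, we have $Y'\subseteq X\setminus\{r\}$, contradicting that $Y'$ blocks $G-r$. Therefore no such $Y'$ exists and $Y$ is minimal in $G-r$.

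There is no real obstacle here: the lemma follows immediately from the observation that $r$ is ``forced'' into every optimum vertex cover, so adding or removing $r$ merely shifts vertex cover sizes by one without affecting which subsets disjoint from $r$ are contained in optimum covers. The whole proof is two short paragraphs, and its only subtle point is remembering to invoke Proposition~\ref{proposition::bs_basics}~(\ref{enum::bs_basics_all}) to guarantee $r\notin Y$, which is what makes the two directions go through cleanly.
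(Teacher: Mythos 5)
Your proof is correct and follows essentially the same route as the paper's: invoke Proposition~\ref{proposition::bs_basics}~(\ref{enum::bs_basics_all}) to get $r\notin Y$, then apply Lemma~\ref{lemma::delete_non_bs}~(\ref{enum::delete_non_bs_dZ}) with $Z=\{r\}$ for the blocking-set direction. For minimality the paper instead cites Lemma~\ref{lemma::delete_non_bs}~(\ref{enum::delete_non_bs_aZ}) combined with the fact that no minimal blocking set of $G$ contains $r$, whereas you inline the short direct argument; these are logically the same observation and your version is equally sound.
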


\begin{proof}
    Since $r$ is contained in every optimum vertex cover of $G$ it follows that $r$ is not contained in any minimal blocking set of $G$, and therefore $r \notin Y$ (Proposition \ref{proposition::bs_basics} item (\ref{enum::bs_basics_all})). Furthermore, it follows from Lemma \ref{lemma::delete_non_bs} item (\ref{enum::delete_non_bs_dZ}) that $Y$ is also a blocking set of $G-r$. Item (\ref{enum::delete_non_bs_aZ}) of Lemma \ref{lemma::delete_non_bs} implies that $Y$ is also a minimal blocking set of $G-r$ because no minimal blocking set of $G$ contains vertex $r$.
\end{proof}

We can apply the following Lemma in our proof of Theorem \ref{theorem::ub_hereditary} only when our class \C has some extra properties, because we need to bound the size of minimal blocking sets of graphs where we delete more than the root of the elimination tree; these vertices could be part of the base components of the elimination tree.

\begin{lemma} \label{lemma::ub_split_bs}
    Let $G$ be a graph that contains a vertex $r$ that is contained in at least one, but not all, optimum vertex covers of $G$, and that is not adjacent to all other vertices of $G$; hence $\OPT(G-r)+1=\OPT(G)$ and $N[r] \subsetneq V(G)$. Let $Y$ be a minimal blocking set of $G$ with $r \notin Y$.\footnote{The existence follows from Lemma \ref{lemma::structure_bs}.}
    At least one of the following three cases holds:
    \begin{enumerate}
        \item The set $Y$ is also a minimal blocking set of $G-r$. \label{enum::ub_split_bs_r}
        \item The set $Y=Y\setminus N(r)$ is also a minimal blocking set of $G-N[r]$. \label{enum::ub_split_bs_Nr}
        \item There exists a minimal blocking set $Y' \subsetneq Y$ of $G-r$, and the set $\hat{Y} = Y \setminus Y'$ is a minimal blocking set of $G-Y'$.
        Furthermore, $\hat{Y}$ is not a minimal blocking set of $G-Y'-r$, however, there exists a nonempty set $Z \subseteq N(r)\setminus Y'$ such that $\hat{Y} \cup Z$ is a minimal blocking set of $G-Y'-r$. \label{enum::ub_split_bs_two}
    \end{enumerate}
\end{lemma}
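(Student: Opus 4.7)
The plan is to prove the lemma by a case analysis driven by Lemma~\ref{lemma::structure_bs}(iii): for every $y \in Y$, either (a) $Y \setminus \{y\}$ is not blocking in $G - r$, or (b) $Y \setminus (\{y\} \cup N[r])$ is not blocking in $G - N[r]$. Let $A, B \subseteq Y$ denote the sets of $y$ for which (a), respectively (b), holds; so $A \cup B = Y$. I will match the three cases of the lemma to the sub-cases $A = Y$, $B = Y$, and $A \neq Y$.

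First, if $A = Y$ then for every $y \in Y$ the set $Y \setminus \{y\}$ is non-blocking in $G - r$; combined with $Y$ being blocking in $G - r$ by Lemma~\ref{lemma::structure_bs}(i), this shows $Y$ is a minimal blocking set of $G - r$, which is case~1. Second, if $B = Y$, I first argue $Y \cap N(r) = \emptyset$: any $y \in Y \cap N(r)$ would satisfy $Y \setminus (\{y\} \cup N[r]) = Y \setminus N[r]$, which is blocking in $G - N[r]$ by Lemma~\ref{lemma::structure_bs}(i), contradicting $y \in B$. Hence $Y = Y \setminus N[r]$, and for every $y \in Y$ we have $Y \setminus \{y\} = Y \setminus (\{y\} \cup N[r])$ non-blocking in $G - N[r]$, while $Y$ itself is blocking in $G - N[r]$; thus $Y$ is minimal blocking in $G - N[r]$, giving case~2.

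Otherwise $A \neq Y$. Pick $y \in Y \setminus A$, so $Y \setminus \{y\}$ is blocking in $G - r$, and let $Y' \subseteq Y \setminus \{y\}$ be a minimal blocking set of $G - r$; then $Y' \subsetneq Y$ and $y \in \hat{Y} := Y \setminus Y'$. I will verify the ingredients of case~3 as follows. (i) $\hat{Y}$ is a minimal blocking set of $G - Y'$: it is blocking because an opt $X$ of $G - Y'$ containing $\hat{Y}$ would, together with $Y'$, form an opt of $G$ containing $Y$, using Lemma~\ref{lemma::delete_non_bs}(i) since $Y'$ is non-blocking in $G$ by minimality of $Y$; and it is minimal because any strictly smaller minimal blocking set of $G - Y'$ would, by Lemma~\ref{lemma::delete_non_bs}(iii), lift to a minimal blocking set of $G$ strictly inside $Y$, contradicting $Y$'s minimality in $G$. (ii) The vertex $r$ lies in no optimum vertex cover of $G - Y'$: both $\OPT(G - Y')$ and $\OPT(G - Y' - r)$ equal $\OPT(G) - |Y'|$ (the first by Lemma~\ref{lemma::delete_non_bs}(i), the second by Proposition~\ref{proposition::bs_basics}(iv) applied to the minimal blocking $Y'$ of $G - r$). (iii) Lemma~\ref{lemma::ub_r_not_bs_OPT} applied to $G - Y'$, $r$, $\hat{Y}$ produces a (possibly empty) $Z \subseteq N(r) \setminus Y'$ with $\hat{Y} \cup Z$ minimal blocking in $G - Y' - r$; if $Z \neq \emptyset$ then $\hat{Y}$ itself is not blocking in $G - Y' - r$ (else $\hat{Y} \cup Z$ would not be minimal), and case~3 follows.

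The remaining step---and, I expect, the main technical obstacle---is showing $Z \neq \emptyset$. For this I plan to exploit the (b) property of the chosen $y$: there exists an opt $X^\ast$ of $G - N[r]$ containing $(Y \setminus \{y\}) \setminus N[r] \supseteq \hat{Y} \setminus \{y\}$, and necessarily $y \notin X^\ast$ (otherwise $X^\ast \supseteq Y \setminus N[r]$, a blocking set of $G - N[r]$ by Lemma~\ref{lemma::structure_bs}(i), a contradiction). A size computation shows that $(X^\ast \setminus Y') \cup (N(r) \setminus (Y' \cup \{v\})) \cup \{y\}$ has cardinality exactly $\OPT(G - Y' - r)$ for any $v \in N(r) \setminus Y'$; choosing $v$ so that deleting it from this candidate preserves the vertex-cover property---using that $y$ must take over covering $v$'s remaining incidences---would exhibit an optimum vertex cover of $G - Y' - r$ containing $\hat{Y}$, ruling out $\hat{Y}$ being blocking and hence forcing $Z \neq \emptyset$. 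Executing this exchange argument, specifically showing that a suitable vertex $v$ always exists (likely via a careful examination of the edges from $N(r) \setminus Y'$ into $V(G) \setminus N[r]$ and their coverage by $X^\ast$), is the delicate part of the proof.
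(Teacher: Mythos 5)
Your treatment of Cases~1 and~2, and your setup for Case~3 (existence of $Y'\subsetneq Y$, the fact that $\hat{Y}=Y\setminus Y'$ is a minimal blocking set of $G-Y'$ by Lemma~\ref{lemma::delete_non_bs}, the equality $\OPT(G-Y')=\OPT(G-Y'-r)$ putting you in the hypotheses of Lemma~\ref{lemma::ub_r_not_bs_OPT}) are all correct and match the paper's structure. The one genuine gap is exactly where you flagged it: the argument that $Z\neq\emptyset$, i.e.\ that $\hat{Y}$ is not a blocking set of $G-Y'-r$.

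The difficulty is caused by your choice of $y$. You pick $y\notin A$, so $Y'\subseteq Y\setminus\{y\}$ and consequently $y\in\hat{Y}$. The property you then hold is $y\in B$: some optimum vertex cover $X^\ast$ of $G-N[r]$ contains $Y\setminus(\{y\}\cup N[r])$ and omits $y$. Since $y\in\hat{Y}$, this $X^\ast$ (and hence $(X^\ast\cup N(r))\setminus Y'$) misses a vertex of $\hat{Y}$, so you are forced to exchange some $v\in N(r)\setminus Y'$ for $y$. But there is no reason that any $v$ works: $v$ may have neighbours in $V(G)\setminus(N[r]\cup Y')$ that are neither in $X^\ast$ nor equal to $y$, in which case removing $v$ destroys the cover and your size argument does not help. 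Your proposal gives no mechanism for finding such a $v$, and I see no reason it always exists. (Also, a minor slip: $(Y\setminus\{y\})\setminus N[r]\supseteq\hat{Y}\setminus\{y\}$ is false when $\hat{Y}\setminus\{y\}$ meets $N(r)$; this part of $\hat{Y}$ must instead be picked up by $N(r)\setminus Y'$.)

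The paper avoids the exchange entirely by choosing $y\in Y'$ rather than $y\in\hat{Y}$, using the second ingredient that you discarded. Since Case~2 fails, $Y\setminus N(r)$ has a \emph{proper} minimal blocking subset $\widetilde{Y}\subsetneq Y\setminus N(r)$ in $G-N[r]$, and Lemma~\ref{lemma::structure_bs}(\ref{enum::structure_bs_union}) forces $Y'\cup\widetilde{Y}=Y$, so $Y'\setminus\widetilde{Y}\neq\emptyset$. Picking $y\in Y'\setminus\widetilde{Y}$ puts $\widetilde{Y}\subseteq Y\setminus(\{y\}\cup N[r])$, so that set is blocking in $G-N[r]$; Lemma~\ref{lemma::structure_bs}(\ref{enum::structure_bs_necess}) then says $Y\setminus\{y\}$ is \emph{not} blocking in $G-r$, i.e.\ $y\in A$. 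Because $y\in Y'$, the resulting optimum vertex cover $X'$ of $G-r$ with $Y\setminus\{y\}\subseteq X'$ already contains all of $\hat{Y}$; adding $r$, removing $Y'$, and then removing $r$ again gives an optimum vertex cover of $G-Y'-r$ containing $\hat{Y}$ outright, with no swap. In short: to build a cover of $G-Y'-r$ containing $\hat{Y}$, you need the ``non-blocking'' vertex $y$ to lie in $Y'$, not in $\hat{Y}$; your case split has it the wrong way around, and the $\widetilde{Y}$ from the failure of Case~2 is the tool that lets you find such a $y$.
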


\begin{proof}
    It follows from Lemma \ref{lemma::structure_bs} item (\ref{enum::structure_bs_delete}) that $Y$ is a blocking set of $G-r$ and that $Y \setminus N[r]$ is a blocking set of $G-N[r]$. If $Y$ is also a minimal blocking set of $G-r$ then Case \ref{enum::ub_split_bs_r} holds, and if $Y \setminus N[r] =Y$ is a minimal block set of $G-N[r]$ then Case \ref{enum::ub_split_bs_Nr} holds. Thus, in the following we assume that neither Case \ref{enum::ub_split_bs_r} nor Case \ref{enum::ub_split_bs_Nr} holds.

    Let $Y' \subsetneq Y$ be a minimal blocking set of $G-r$ and let $\widetilde{Y} \subsetneq Y \setminus N(r)$ be a minimal blocking set of $G-N[r]$. Consider the graph $G-Y'$ and the set $\hat{Y}=Y \setminus Y'$. Since $Y' \subsetneq Y$ and since $Y$ is a minimal blocking set of $G$, there exists an optimum vertex cover $X$ of $G$ that contains $Y'$. Therefore, it follows from Lemma \ref{lemma::delete_non_bs} that $\OPT(G-Y')+|Y'| = \OPT(G)$ (item~(\ref{enum::delete_non_bs_eq})) and that $\hat{Y}$ is a minimal blocking set of $G-Y'$ (item~(\ref{enum::delete_non_bs_dZ}) and item~(\ref{enum::delete_non_bs_aZ})).

    It remains to show the second part, namely, that $\hat{Y}$ is not a minimal blocking set of $G-r-Y'$. First, we show that $\OPT(G-Y')= \OPT(G-r-Y')$. Let $\hat{X}$ be an optimum vertex cover of $G-r-Y'$. It holds that $\hat{X} \cup Y'$ is a vertex cover of $G-r$. Since $\hat{X} \cup Y'$ is a vertex cover of $G-r$ that contains $Y'$, and $Y'$ is a blocking set of $G-r$ it holds that $|\hat{X}|+|Y'| \geq \OPT(G-r)+1$. Thus, the optimum vertex $\hat{X}$ of $G-r-Y'$ has size at least $\OPT(G-r)+1-|Y'| = \OPT(G-Y')$. This concludes the proof that $\OPT(G-Y')= \OPT(G-r-Y')$ because it always holds that $\OPT(G-Y') \geq \OPT(G-r-Y')$.

    Now, we are able to prove that $\hat{Y}$ is not a blocking set of $G-r-Y'$. Let $y \in Y' \setminus \widetilde{Y}$. Note that $Y' \setminus \widetilde{Y}$ is not empty because $Y' \cup \widetilde{Y}=Y$ (Lemma \ref{lemma::structure_bs} item (\ref{enum::structure_bs_union})) and because $\widetilde{Y} \subsetneq Y$. Observe that the set $Y \setminus\{y\}$ is not a blocking set of $G-r$ since $Y \setminus\{y\} \supseteq \widetilde{Y}$ is a blocking set of $G-N[r]$ and since Lemma \ref{lemma::structure_bs} item (\ref{enum::structure_bs_necess}) states that $Y \setminus\{y\}$ cannot be a blocking set in $G-r$ and $G-N[r]$.
    Thus, there exists an optimum vertex cover $X'$ of $G-r$ with $\hat{Y} \subseteq Y \setminus \{y\}\subseteq X'$. The set $X=X' \cup \{r\}$ is an optimum vertex cover of $G$ because $\OPT(G)=\OPT(G-r)+1$. Furthermore, the set $X$ contains the vertex set $\hat{Y}$ as well as the vertex $r$, but not the vertex $y$ (because $Y$ is a blocking set of $G$ and $Y \setminus \{y\} \subseteq X$).
    Let $\hat{X} = X \setminus Y'$. It holds that $\hat{X}$ is a vertex cover of $G-Y'$ that contains set $\hat{Y}$ and vertex $r$. Since $Y' \setminus \{y\} \subseteq X$, the set $\hat{X}$ has size $\OPT(G)-|Y'|+1=\OPT(G-Y')+1$. Thus, the set $\hat{X} \setminus \{r\}$ is an optimum vertex cover of $G-r-Y'$ because $\OPT(G-Y')=\OPT(G-r-Y')$, and the set $\hat{X} \setminus \{r\}$ contains the set $\hat{Y}$. Hence, $\hat{Y}$ is not a blocking set of $G-r-Y'$.

    Observe that the graph $G-Y'$ together with the set $\hat{Y}$ fulfills the properties of Lemma \ref{lemma::ub_r_not_bs_OPT}. Hence, there exists a set $Z \subseteq N_{G-Y'}(r)=N(r) \setminus Y'$ such that $\hat{Y} \cup Z$ is a minimal blocking set of $G-Y'-r$. Since, $\hat{Y}$ is not a blocking set in $G-Y'-r$ it holds that $Z \neq \emptyset$. This concludes the proof.
\end{proof}

Combining Lemma \ref{lemma::ub_r_in_bs} to Lemma \ref{lemma::ub_split_bs} we can now prove Theorem \ref{theorem::ub_hereditary}.

\begin{proof}[Proof of Theorem \ref{theorem::ub_hereditary}]
    We prove Theorem \ref{theorem::ub_hereditary} by induction over the integer $d$.
    In the base case assume that $d=0$. Since every graph $G$ with $\ed{\C}{G}=0$ is a union of graphs of the class \C, and every minimal blocking set of $G$ is contained in at most one connected component of $G$ (Proposition \ref{proposition::bs_basics} item (\ref{enum::bs_basics_one_cc})) it follows that $\bsd{\C}{0} = \max\{\bsg{G} \mid \ed{\C}{G}=0\} \leq \bsc{\C}$. Hence, the base case holds for $\bsc{\C}=1$. For $\bsc{\C} \geq 2$ it holds that $(\bsc{\C}-1) \cdot 2^d +1 = \bsc{\C} \geq \bsg{G}$; hence the base case holds for all graph classes where \bsc{\C} is bounded.

    For the induction step, we assume that the statement is true for all integers less than $d$, and all graph classes \C where $\bsc{\C}$ is bounded.

    Let $\C$ be a graph class where $\bsc{\C}$ is bounded, let $G$ be any graph with $\ed{\C}{G} = d$, and let $Y$ be a minimal blocking set of $G$. We can assume that graph $G$ is connected because $Y$ is contained in at most one connected component of $G$ (Proposition \ref{proposition::bs_basics} item (\ref{enum::bs_basics_one_cc})) which implies that $\bsg{G}=\max\{\bsg{G'} \mid G' \text{ connected component of } G\}$. Thus, to bound $\bsd{\C}{d}$ it is enough to bound $\bsg{G}$ for every connected graph $G$ with $\ed{\C}{G}\leq d$.
    Let $r$ be the root of the elimination tree of $G$ to graph class \C. (There exists exactly one root because $G$ is connected.) We will show that $Y$ has the requested size by distinguishing between five cases:

    \begin{description}
        \item[Case 1:] Assume that $\OPT(G)=\OPT(G-r)$ and that $r \in Y$.

        It follows from Lemma \ref{lemma::ub_r_in_bs} item (\ref{enum::ub_r_in_bs_same}) that $Y=\{r\}$. Since $2^{d-1}+1 \geq 1$ and $(\bsc{\C}-1) \cdot 2^d +1 \geq 1$ for $\bsc{\C}\geq 2$, the set $Y$ has the required size.

        \item[Case 2:] Assume that $\OPT(G)=\OPT(G-r)+1$ and that $r \in Y$.

        Observe that the graph $G$ and the set $Y$ fulfill the requirements of Lemma \ref{lemma::ub_r_in_bs} item (\ref{enum::ub_r_in_bs_plus}). Thus, the set $Y\setminus \{r\}$ is a minimal blocking set of $G-r$. Since $r$ was the root of the elimination tree of $G$ it holds that $\ed{\C}{G-r} = d-1$. Hence, we can bound the size of $Y$ by $\bsg{G-r} +1 \leq \bsd{\C}{d-1} + 1$.

        \item[Case 3:] Assume that $\OPT(G)=\OPT(G-r)$ and that $r \notin Y$.

        The graph $G$ together with the set $Y$ and the vertex $r$ fulfills the conditions of Lemma \ref{lemma::ub_r_not_bs_OPT}. This implies that there exists a (possibly empty) set $Z \subseteq N(r)$ such that $Y \cup Z$ is a minimal blocking set of $G-r$. Since $G-r$ has elimination distance $d-1$ to class \C we can bound the size of $Y$ by $\bsg{G-r} \leq \bsd{\C}{d-1}$.

        \item[Case 4:] Assume that $\OPT(G)=\OPT(G-r)+1$, that vertex $r$ is contained in every optimum vertex cover of $G$, and that $r \notin Y$.

        Note that graph $G$, vertex $r$ and the minimal blocking set $Y$ fulfill the requirements of Lemma \ref{lemma::ub_r_not_bs_OPT+}. Therefore, the set $Y$ is also a minimal blocking set of $G-r$. This implies that $|Y|\leq \bsd{\C}{d-1}$ because $G-r$ has elimination distance $d-1$ to class \C.

        \item[Case 5:] Assume that $\OPT(G)=\OPT(G-r)+1$, that vertex $r$ is contained in at least one, but not all, optimum vertex covers of $G$, and that $r \notin Y$.

        Observe that $N[r] \subsetneq V(G)$ because there exists a solution that does not contain $r$ which implies that this solution contains $V(G) \setminus \{r\}$. This in turn would imply that $Y \subseteq V(G) \setminus \{r\}$ is not a blocking set of $G$ which contradicts the assumption.

        This is the only case where we need that the class \C is hereditary, because we have to bound the size of a minimal blocking set in a subgraph of $G-r$.
        The graph $G$, the vertex $r$ and the set $Y$ fulfill the conditions of Lemma \ref{lemma::ub_split_bs}. If Case \ref{enum::ub_split_bs_r} of Lemma \ref{lemma::ub_split_bs} holds then $|Y| \leq \bsg{G-r} \leq \bsd{\C}{d-1}$.

        Next, assume that Case \ref{enum::ub_split_bs_Nr} of Lemma \ref{lemma::ub_split_bs} holds. Since graph class \C is hereditary and deleting vertices of the elimination tree of $G$ can only decrease the elimination distance, it follows that $\ed{\C}{G-N[r]} \leq d-1$. Thus, we can bound the size of the set $Y$, which is a minimal blocking set of $G-N[r]$ by $\bsg{G-N[r]} \leq \bsd{\C}{d-1}$ (Observation \ref{observation::bs_monoton}).

        Now, assume that Case \ref{enum::ub_split_bs_two} of Lemma \ref{lemma::ub_split_bs} holds. Recall that $Y=Y' \dot\cup \hat{Y}$ where $Y'$ is a minimal blocking set in $G-r$ and $\hat{Y}=Y \setminus Y'$ is a minimal blocking set in $G-Y'$. We can bound the size of $Y'$ by $\bsg{G-r} \leq \bsd{\C}{d-1}$. The size of $\hat{Y}$ is at most $\bsg{G-Y'-r}-1$ because there exists a nonempty set $Z \subseteq N(r) \setminus Y'$ such that $\hat{Y} \cup Z$ is a minimal blocking set of $G-Y'-r$. Since $G-Y'-r$ is a subgraph of $G-r$ and the graph class \C is hereditary it holds that $\ed{\C}{G'-Y'-r} \leq d-1$ which implies that $|\hat{Y}|\leq \bsg{G-Y'-r}-1 \leq \bsd{\C}{d-1}-1$ (Observation \ref{observation::bs_monoton}). Combining the bound for $Y'$ and $\hat{Y}$ we obtain that $|Y| \leq 2 \bsd{\C}{d-1} -1$
    \end{description}

    Overall, we showed that $|Y| \leq \max\{1,\bsd{\C}{d-1}+1,2\bsd{\C}{d-1}-1\}$.
    For $d=1$ and $\bsc{\C}=1$ this implies that $|Y|\leq 2 = 2^{1-1}+1$ because $\bsd{\C}{0}=\bsc{\C}=1$. Thus, $\bsd{\C}{1} \leq 2^{1-1}+1$ because we picked an arbitrary graph $G$ with $\ed{\C}{G}=1$, and an arbitrary minimal blocking set of $G$.

    If we assume that $d\geq 2$ or that $\bsc{\C}\geq 2$, then $\max\{1,\bsd{\C}{d-1}+1,2\bsd{\C}{d-1}-1\} = 2\bsd{\C}{d-1}-1$ because $\bsd{\C}{d-1} \geq 2$.
    Hence, we can bound $Y$ by $2\bsd{\C}{d-1}-1$ when $d\geq2$ or $\bsc{\C}\geq 2$. Furthermore, we can bound $\bsd{\C}{d}$ by $2\bsd{\C}{d-1}-1$ when $d\geq2$ or $\bsc{\C}\geq 2$ because we picked an arbitrary graph $G$ with $\ed{\C}{G}=d$ as well as an arbitrary minimal blocking set of $G$.

    For $\bsc{\C}=1$ and $d \geq 2$ this leads to the following upper bound for \bsd{\C}{d}:
    \[
        \bsd{\C}{d} \leq 2 \cdot \bsd{\C}{d-1} \leq 2 \cdot \left(2^{(d-1)-1}+1\right) - 1 = 2^{d-1}+1
    \]
    The second inequality follows from the induction hypothesis.
    Finally, we show the upper bound for \bsd{\C}{d} for the case that $\bsc{\C}\geq 2$ (and $d \geq 1$):
    \[
        \bsd{\C}{d} \leq 2 \cdot \bsd{\C}{d-1} \leq 2 \cdot \left( (\bsc{\C}-1) 2^{(d-1)}+1\right) - 1 = (\bsc{\C}-1) 2^d+1
    \]
    Again, the second inequality follows from the induction hypothesis.
\end{proof}

It follows from Theorem \ref{theorem::lb} and Theorem \ref{theorem::ub_hereditary} that the bound for $\bsd{\C}{d}$ is tight for all hereditary graph classes $\C$, proving Theorem~\ref{thm:intro:precise-bounds}.

\begin{thm:theorem4}
\theoremfour
\end{thm:theorem4}

In the remaining part of this section, we will show that we can obtain an upper bound for \bsd{\C}{d} even if \C is not hereditary, but fulfills some other additional properties. Nevertheless, this leads to a weaker upper bound.

\begin{definition}
    We say that a graph class \C is \emph{$f$-robust}, if $\bsc{\C+c} \leq f(c) = f(\bsc{\C},c)$ for a computable function~$f$.
\end{definition}

\begin{observation} \label{observation::bound_bs_bc}
    It holds that $\bsc{\C} < \bsc{\C+1}$ for all classes $\C$ (Observation \ref{observation::bs_monoton}). This implies that $\bsc{\C+c} \geq 2$ for all $c \geq 2$.
\end{observation}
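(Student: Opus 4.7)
The plan is to first reduce the second assertion to the first. Observe that $\C+c = (\C+(c-1))+1$ for every $c \geq 1$: any \C-modulator $X$ of $G$ of size at most $c$ (with $c \geq 1$) can be written as $\{v\} \cup X'$ for some $v \in X$, where $X'$ is a \C-modulator of $G-v$ of size at most $c-1$, and conversely any vertex $v$ together with a \C-modulator of $G-v$ of size at most $c-1$ yields a \C-modulator of $G$ of size at most $c$. Assuming the stronger bound $\bsc{\C+1} \geq \bsc{\C}+1$, an easy induction on $c$ then gives $\bsc{\C+c} \geq \bsc{\C}+c \geq c$, so $\bsc{\C+c} \geq 2$ whenever $c \geq 2$.

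For the strict inequality itself, I would reuse the argument underlying the cited Observation~\ref{observation::bs_monoton}, namely a direct application of Lemma~\ref{lemma::lb_add_vertex}. If $\bsc{\C} = \infty$ the claim is trivial, since $\C \subseteq \C+1$ forces $\bsc{\C+1} = \infty$ as well. Otherwise the supremum in the definition of $\bsc{\C}$ is attained, so we may pick $H \in \C$ together with a minimal blocking set $Y$ of $H$ of size exactly $\bsc{\C}$. Lemma~\ref{lemma::lb_add_vertex} then produces a graph $H'$ obtained from $H$ by adding a single new vertex $v$ adjacent precisely to the vertices of $Y$, and guarantees that $Y \cup \{v\}$ is a minimal blocking set of $H'$ of size $\bsc{\C}+1$. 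Since $H'-v = H \in \C$, the singleton $\{v\}$ is a \C-modulator of $H'$, so $H' \in \C+1$; hence $\bsc{\C+1} \geq \bsg{H'} \geq \bsc{\C}+1$.

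There is no real obstacle: both parts follow immediately from the cited lemma together with the additive composition of modulators. The only point worth flagging is that one must check that the supremum defining $\bsc{\C}$ is attained in the finite case (so that a witness graph $H$ and a witness blocking set $Y$ of exact size $\bsc{\C}$ actually exist), which is clear from the definition, and that the $\bsc{\C} = \infty$ case is handled separately by monotonicity.
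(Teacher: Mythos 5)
Your proof is correct and matches the paper's intended justification: the paper points to Observation~\ref{observation::bs_monoton}, which itself rests on Lemma~\ref{lemma::lb_add_vertex}, and you invoke that lemma directly (picking $H\in\C$ with a size-$\bsc{\C}$ minimal blocking set, appending the new vertex $v$, and observing $H'-v=H\in\C$ so $H'\in\C+1$), together with the identity $\C+c=(\C+(c-1))+1$ to run the induction. One small imprecision worth noting: when $\bsc{\C}=\infty$ the strict inequality $\bsc{\C}<\bsc{\C+1}$ cannot literally hold since both sides are $\infty$; like the surrounding results in that section, the observation tacitly presumes $\bsc{\C}$ is bounded, and your reduction to ``the claim is trivial'' should really say that only the second assertion is trivially satisfied there.
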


\begin{theorem} \label{theorem::ub_nonhereditary}
    Let $\C$ be an $f$-robust and robust graph class where \bsc{\C} is bounded, and let $d \geq 0$. It holds that
    \[
        \bsd{\C+c}{d} \leq \bigg(\sum_{i=0}^d \binom{d}{i} f(c+i)\bigg) - 2^d+1.
    \]
\end{theorem}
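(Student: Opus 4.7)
The plan is to induct on $d$, paralleling the proof of Theorem \ref{theorem::ub_hereditary} but replacing appeals to hereditariness with appeals to $f$-robustness. The base case $d=0$ is immediate: by Proposition \ref{proposition::bs_basics}(iii) a minimal blocking set lies in one component of $G$, so $\bsd{\C+c}{0} = \bsc{\C+c} \leq f(c)$, matching the claimed formula since $\binom{0}{0}f(c) - 2^0 + 1 = f(c)$.

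For the inductive step I take a graph $G$ with $\ed{\C+c}{G} \leq d$ and a minimal blocking set $Y$. By Proposition \ref{proposition::bs_basics}(iii) I may assume $G$ is connected, so its elimination tree has a unique root $r$. I split into the same five cases as in Theorem \ref{theorem::ub_hereditary}, based on whether $r \in Y$ and how $r$ relates to the optimum vertex covers of $G$. Cases 1--4 transfer unchanged because they only use that $G-r$ has elimination distance at most $d-1$ to $\C+c$; they yield $|Y| \leq \bsd{\C+c}{d-1}+1$, which fits within the target bound (using Observation \ref{observation::bs_monoton} to see the $+1$ is absorbed).

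The main obstacle is Case 5, where $r$ lies in some but not all optimum vertex covers and $r \notin Y$. The hereditary proof invoked Lemma \ref{lemma::ub_split_bs} and obtained $2\bsd{\C}{d-1}-1$ via sub-case 3, using that $G-N[r]$ and $G-Y'-r$ remain at elimination distance at most $d-1$ to $\C$ (precisely where hereditariness entered). For $f$-robust $\C$ this step fails, and I plan to show instead that these perturbed subgraphs lie at elimination distance at most $d-1$ to $\C+(c+1)$, by taking $G$'s elimination tree, deleting the extra vertices, and absorbing any disruption to base components into augmented modulators of size at most $c+1$. A crucial structural input is Proposition \ref{proposition::bs_basics}(iii), which confines $Y'$ to a single subtree of $G-r$, localizing the perturbation. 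Applying the inductive hypothesis to $\C+(c+1)$ at depth $d-1$ then gives the recurrence
\[
\bsd{\C+c}{d} \leq \bsd{\C+c}{d-1} + \bsd{\C+(c+1)}{d-1} - 1,
\]
which together with the base case unrolls via Pascal's identity to the stated bound $\sum_{i=0}^d \binom{d}{i}f(c+i) - 2^d + 1$.

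The hardest step is justifying the additive $+1$ in the modulator budget inside Case 5, namely that a single extra modulator vertex per step is enough to re-enter the class $\C+(c+1)$ after removing $r$ together with either $N(r)$ (sub-case 2) or $Y'$ (sub-case 3), despite $\C$ not being hereditary. This is the one place where the non-hereditary analysis is strictly weaker than the hereditary one, and it is precisely why the resulting bound depends on $f(c), f(c+1), \ldots, f(c+d)$ instead of being a pure doubling of $f(c)$.
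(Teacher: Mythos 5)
Your high-level plan matches the paper's: same induction on $d$, same base case, same five-case split from the hereditary proof, same target recurrence
\[
\bsd{\C+c}{d} \;\leq\; \bsd{\C+c}{d-1} + \bsd{\C+c+1}{d-1} - 1,
\]
and the same unrolling via Pascal's identity. But the step you flag as ``the hardest step'' is exactly where your plan breaks down, and the paper resolves it in a way you are not anticipating.

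Concretely, you propose to show that $G-N[r]$ (sub-case 2) and $G-Y'-r$ (sub-case 3) themselves have elimination distance at most $d-1$ to $\C+(c+1)$, ``by taking $G$'s elimination tree, deleting the extra vertices, and absorbing any disruption to base components into augmented modulators of size at most $c+1$.'' This does not work for non-hereditary $\C$. A base component $H$ satisfies $H\in\C+c$, so some $M\subseteq V(H)$ with $|M|\leq c$ has $H-M\in\C$. If the set $Z$ you remove ($N(r)$ or $Y'$) intersects $V(H)\setminus M$, i.e.\ the $\C$-part, then $H-(Z\cap V(H))$ is obtained by deleting vertices from a $\C$-graph, and since $\C$ is not hereditary the result need not lie in $\C$, nor in $\C+c'$ for any $c'$. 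No amount of enlarging the modulator by one fixes this; the deleted vertices can take the $\C$-part arbitrarily far from $\C$. Also, Proposition~\ref{proposition::bs_basics}(\ref{enum::bs_basics_one_cc}) does not localize the perturbation the way you suggest: $N(r)$ is not a blocking set and is not confined to one component of $G-r$, and even $Y'$ (which is confined to one component of $G-r$) can still touch several base components within that subtree.

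The paper sidesteps this obstacle with Lemma~\ref{lemma::nonhereditary_delete}: rather than reasoning about the elimination distance of the deleted subgraph $G-Z-r$, it constructs an auxiliary graph $\hat{G}$ in which the base components are kept intact and, for each base component $H$ meeting $Z$, one new gadget vertex $v_H$ adjacent to $Z'\cap V(H)$ is added (where $Z'$ is the portion of $Z$ lying in base components); vertices of $Z$ lying in the interior of the elimination tree are deleted. Each base component thus gains at most one vertex, so $\ed{\C+c+1}{\hat{G}}\leq d-1$ by design, not by a deletion argument. The lemma then establishes $\OPT(\hat{G})=\OPT(G-Z-r)+|Z'|$ and shows that every minimal blocking set of $G-Z-r$ extends (by a subset of $Z'$) to a minimal blocking set of $\hat{G}$, yielding $\bsg{G-Z-r}\leq\bsg{\hat{G}}\leq\bsd{\C+c+1}{d-1}$. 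This auxiliary-graph-with-gadget-vertices construction is the missing idea; without it, the inductive step in Case~5 does not go through for non-hereditary $\C$.
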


To prove Theorem \ref{theorem::ub_nonhereditary} we need an additional lemma that allows us to bound the largest minimal blocking set size of subgraphs of $G-r$, where $G$ is a connected graph with $\ed{\C}{G}=d$ and where $r$ is the root of the elimination tree of $G$ to graph class $\C$.

\begin{lemma} \label{lemma::nonhereditary_delete}
    Let $G$ be a connected graph with $\ed{\C+c}{G}=d$, let $r$ be the root of the elimination tree of $G$ to graph class $\C+c$ with $c \geq 0$, and let $Z \subseteq V(G)$ such that $\OPT(G-Z)+|Z|=\OPT(G)$.
    It holds that $\bsg{G-Z-r} \leq \bsd{\C+c+1}{d-1}$.
\end{lemma}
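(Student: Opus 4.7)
My plan is to show $\ed{\C+c+1}{G-Z-r} \leq d-1$; once that is in hand, the definition of $\bsd{\C+c+1}{d-1}$ as the maximum of $\bsg{H}$ over graphs $H$ with $\ed{\C+c+1}{H} \leq d-1$ immediately yields $\bsg{G-Z-r} \leq \bsd{\C+c+1}{d-1}$.

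To bound the elimination distance, I would start from a height-$d$ elimination tree $T$ of $G$ to $\C+c$ rooted at $r$, and modify it into a rooted forest for $G-Z-r$ of height at most $d-1$ whose base components lie in $\C+c+1$. First remove $r$, yielding an elimination forest of $G-r$ to $\C+c$ of height at most $d-1$. Then process each $z \in Z$ in turn: if $z$ sits at an internal (singleton) bag, delete it and reattach its children to its parent — this is legitimate because the elimination tree property of $T$ forbids $G$-edges between vertices lying in different subtrees of $z$, so no edge of $G-Z-r$ crosses incomparable nodes in the modified tree. If $z$ sits inside a leaf bag, simply drop it from that bag. Removing internal nodes cannot increase the height, so what remains is a rooted forest of height at most $d-1$ whose base components are of the form $B' = B - (Z \cap V(B))$ for original base components $B \in \C+c$ of $T$.

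The main obstacle is to verify that each such $B'$ still lies in $\C+c+1$, which is not automatic because $\C$ is not assumed hereditary: deleting vertices from a graph in $\C$ can leave $\C$ entirely. Here I would exploit the hypothesis $\OPT(G-Z) + |Z| = \OPT(G)$, which forces $Z$ (and in particular $Z \cap V(B)$) to lie in some optimum vertex cover $X$ of $G$. Restricting $X$ to the induced subgraph $B$ gives a vertex cover of $B$ containing $Z \cap V(B)$, and combining this with Lemma~\ref{lemma::delete_non_bs}\,(\ref{enum::delete_non_bs_eq})--(\ref{enum::delete_non_bs_aZ}) applied inside $B$, I would argue that $Z \cap V(B)$ can be absorbed into the $\C$-modulator structure of $B$ at the cost of a single additional vertex; concretely, the original $\C$-modulator $X_B$ of $B$ (of size at most $c$) can be replaced by a set $X_{B'} \subseteq V(B')$ of size at most $c+1$ so that $B' - X_{B'} \in \C$. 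Having $B' \in \C+c+1$ for every new base component, the constructed forest witnesses $\ed{\C+c+1}{G-Z-r} \leq d-1$, and the lemma follows.
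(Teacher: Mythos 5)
Your plan hinges on showing $\ed{\C+c+1}{G-Z-r} \leq d-1$, and the load-bearing (but unproven) step is the claim that each new base component $B' = B - (Z \cap V(B))$ lies in $\C+c+1$. That claim is false in general, and the hypothesis $\OPT(G-Z)+|Z|=\OPT(G)$ cannot rescue it: that hypothesis is a statement about vertex covers, not about membership in a non-hereditary graph class. For a concrete counterexample, let $\C$ be the robust class of disjoint unions of even cycles (here $\bsc{\C}=2$, so the class satisfies the standing assumptions), take $c=0$ and $B=C_{2n}$, and delete a single vertex $z$ of $B$ that lies in an optimum vertex cover of $G$. Then $B'=B-z$ is a path, which belongs to no $\C+k$, because removing vertices from a path can never create a cycle. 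So $G-Z-r$ need not have small elimination distance to $\C+c+1$, and Lemma~\ref{lemma::delete_non_bs} applied ``inside $B$'' gives you information about blocking sets of $B$, not about modulators to $\C$.

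The paper avoids this by not trying to bound $\ed{\C+c+1}{G-Z-r}$ at all. Instead it constructs an \emph{auxiliary} graph $\hat{G}$: it deletes only the $Z$-vertices sitting at internal nodes of the elimination tree, \emph{keeps} the $Z$-vertices $Z'$ lying in base components, and for each base component $H$ meeting $Z'$ \emph{adds} a single fresh vertex $v_H$ adjacent to $Z'\cap V(H)$ (then deletes $r$). Since adding one vertex to $H\in\C+c$ yields a graph in $\C+c+1$ by definition of the $+c$ operator, the resulting graph trivially has $\ed{\C+c+1}{\hat{G}}=d-1$, sidestepping non-hereditarity entirely. It then proves $\OPT(\hat{G})=\OPT(G-Z-r)+|Z'|$ and shows that every minimal blocking set $Y$ of $G-Z-r$ extends (by some subset of $Z'$) to a minimal blocking set of $\hat{G}$, giving $\bsg{G-Z-r}\leq\bsg{\hat{G}}\leq\bsd{\C+c+1}{d-1}$. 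The essential trick you are missing is to \emph{add} a vertex that forces $Z'$ into the cover rather than to \emph{delete} $Z'$, precisely because deletion is what non-hereditary classes do not tolerate.
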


\begin{proof}
    We construct a graph $\hat{G}$ with $\ed{\C+c+1}{\hat{G}} = d-1$ that has the property that for every minimal blocking set $Y$ of $G-Z-r$ there exists a (possibly empty) set $\hat{Z} \subseteq Z$ such that $Y \cup \hat{Z}$ is a minimal blocking set of $\hat{G}$.

    To construct graph $\hat{G}$ from graph $G$ we first delete all vertices from $Z$ that are part of the elimination tree of $G$ to graph class $\C+c$. Let $Z' \subseteq Z$ be the vertices of $Z$ that are contained in the base components of the elimination tree of $G$ to graph class $\C+c$. Observe that $\ed{\C+c}{G-Z\setminus Z'}\leq d$ because we only delete vertices of the elimination tree. Second, let $\mathcal{H}$ be the set of base components of the elimination tree of $G$ to graph class $\C+c$.
    For every graph $H \in \mathcal{H}$ with $V(H) \cap Z' \neq \emptyset$ we add a vertex $v_H$ to graph $H$ and connect it to all vertices in $Z' \cap V(H)$. The resulting graph is $\widetilde{G}$.
    Note that we add at most one vertex to each graph in $\mathcal{H}$. Thus, every graph in $\mathcal{H}$ belongs now to the class $\C+c+1$ which implies that $\ed{\C+c+1}{\widetilde{G}} = d$ (because we can use the same elimination tree as for $G-Z\setminus Z'$). Finally, we delete vertex $r$ form $\widetilde{G}$ to obtain $\hat{G}$. It holds that $\ed{\C+c+1}{\hat{G}} = d-1$, because $r$ is the root of the elimination tree of $\widetilde{G}$ to graph class $\C+c+1$.

    First, we show that $\OPT(\hat{G})=\OPT(G-Z-r)+|Z'|$. Let $X$ be an optimum vertex cover of $G-Z-r$. The set $X \cup Z'$ is a vertex cover of $\hat{G}$ because $G-(Z\setminus Z')-r$ is a subgraph of $\hat{G}$ and because every newly added vertex is only adjacent to vertices in $Z'$. Hence, $\OPT(\hat{G}) \leq \OPT(G-Z-r)+|Z'|$.

    For the other direction, let $\hat{X}$ be an optimum vertex cover of $\hat{G}$.
    If $Z' \subseteq \hat{X}$ then $\hat{X} \setminus Z'$ is a vertex cover of $G-Z-r$ of size $\OPT(\hat{G})-|Z'|$. Note that $\hat{X}$ does not contain any newly added vertex in this case because these vertices are only adjacent to vertices in $Z'$.
    If $\hat{X}$ contains any newly added vertex then we add $r$ as well as $Z \setminus Z'$ to $\hat{X}$ and delete all newly added vertices from $\hat{X}$. The resulting set is a vertex cover of $G$ of size at most $\OPT(\hat{G})+|Z\setminus Z'|$ because $G-(Z\setminus Z')-r$ is a subgraph of $\hat{G}$ and because $\hat{X}$ contains at least one newly added vertex. Together with the assumption that $\OPT(G)=\OPT(G-Z)+|Z| \geq \OPT(G-Z-r) +|Z|$ it follows that $\OPT(G-Z-r) \leq \OPT(G) - |Z| \leq \OPT(\hat{G}) + |Z \setminus Z'| - |Z| = \OPT(\hat{G}) - |Z'|$. Overall, we showed that $\OPT(\hat{G})=\OPT(G-Z-r)+|Z'|$.

    Let $Y$ be a minimal blocking set of $G-Z-r$. We will show that $Y \cup Z'$ is a blocking set of $\hat{G}$. Assume for contradiction that $Y \cup Z'$ is not a blocking set of $\hat{G}$, and let $\hat{X}$ be an optimum vertex cover of $\hat{G}$ that contains $Y \cup Z'$. This implies that $\hat{X} \setminus Z'$ is an optimum vertex cover of $G-Z-r$ that contains $Y$ because $G-Z-r$ is a subgraph of $\hat{G}$, because $Z' \subseteq \hat{X}$, and because $\OPT(\hat{G})=\OPT(G-Z-r)+|Z'|$. This contradicts the assumption that $Y$ is a (minimal) blocking set of $G-Z-r$ and proves that $Y \cup Z'$ is a blocking set of $\hat{G}$.

    Now, let $\hat{Y} \subseteq Y \cup Z'$ be a minimal blocking set of $\hat{G}$. We will show that $Y \subseteq \hat{Y}$. This directly implies that $\bsg{G-Z-r}\leq \bsg{\hat{G}} \leq \bsd{\C+c+1}{d-1}$. Observe that $Z'$ is not a blocking set of $\hat{G}$ because there exists an optimum vertex cover of $\hat{G}$ that contains $Z'$, namely every optimum vertex cover of $G-Z-r$ together with the set $Z'$; hence $\hat{Y}\setminus Z' \neq \emptyset$.
    If $\hat{Y} \setminus Z' \subsetneq Y$ then $\hat{Y} \setminus Z'$ is not a blocking set of $G-Z-r$ because $Y$ is a minimal blocking set of $G-Z-r$. Thus, there exists an optimum vertex cover $X$ of $G-Z-r$ which contains the set $\hat{Y} \setminus Z' \subsetneq Y$. But, $\hat{X} = X \cup Z'$ is a vertex cover of $\hat{G}$, because $G-(Z\setminus Z')-r$ is a subgraph of $\hat{G}$ and every newly added vertex is only adjacent to vertices in $Z'$. Furthermore, $\hat{X}$ is an optimum vertex cover of $\hat{G}$ because $|\hat{X}|=|X| + |Z'| = \OPT(G-Z-r)+|Z'| = \OPT(\hat{G})$, and it holds that $\hat{Y} \subseteq \hat{X}$. This contradicts the assumption that $\hat{Y}$ is a (minimal) blocking set of $\hat{G}$, and concludes the proof.
    \end{proof}

\begin{proof}[Proof of Theorem \ref{theorem::ub_nonhereditary}]
    The proof of Theorem \ref{theorem::ub_nonhereditary} is very similar to the proof of Theorem \ref{theorem::ub_hereditary}. We prove Theorem \ref{theorem::ub_nonhereditary} also by induction over the integer $d$, and for every integer $d$ we show that for all graphs $G$ with $\ed{\C+c}{G}=d$, and every minimal blocking set $Y$ of $G$, we can bound the size of $Y$ by the claimed upper bound. Furthermore, we use the same definition by cases as in the proof of Theorem \ref{theorem::ub_nonhereditary}.

    In the base case assume that $d=0$. Every graph $G$ with $\ed{\C+c}{G} = 0$ is a graph of the class $\C+c$. Therefore, it holds that $\bsg{G}\leq \bsc{\C+c} = f(c) = \sum_{i=0}^0 \binom{0}{i} f(c+i) - 2^0+1$.

    For the inductive step we assume that the statement holds for all integers less than $d$, and for all $f$-robust graph classes $\C+c$ with $c \geq 0$.

    Let \C be any $f$-robust graph class where $\bsc{\C}$ is bounded, let $G$ be any graph with $\ed{\C+c}{G}=d$, and let $Y$ be a minimal blocking set of $G$. As in the proof of Theorem \ref{theorem::ub_hereditary} we can assume that graph $G$ is connected. Let $r$ be the root of the elimination tree of $G$ to graph class $\C+c$. We show that $Y$ has the desired size by distinguishing between the same five cases as in Theorem \ref{theorem::ub_hereditary}:

    Observe, Case 1 to 4 of the Proof of Theorem \ref{theorem::ub_hereditary} never used that the graph class is hereditary. We only used that we can bound $\bsg{G-r}$ using our induction hypothesis. Thus, even when graph class $\C+c$ is not hereditary, we obtain the same bound for the size of a minimal blocking set $Y$ in these cases, because $\ed{\C+c}{G-r}=d-1$. Recall that Case 2 leads to the worst upper bound for the size of $Y$, namely $|Y|\leq \bsg{G-r}+1 \leq \bsd{\C+c}{d-1}+1$.

    Now, we consider the remaining case where we assumed that the class \C is hereditary.
    \begin{description}
        \item[Case 5:] Assume that $\OPT(G)=\OPT(G-r)+1$, that vertex $r$ is contained in at least one, but not all, optimum vertex covers of $G$, and that $r \notin Y$.

        The graph $G$, the vertex $r$, and the set $Y$ fulfill the requirements of Lemma \ref{lemma::ub_split_bs}.\footnote{It holds that $N[r] \subsetneq V(G)$; otherwise $Y$ would not be a blocking set of $G$ (see proof of Theorem \ref{theorem::ub_hereditary} Case~5.} As in the proof of Theorem \ref{theorem::ub_hereditary} we distinguish between the three Cases of Lemma \ref{lemma::ub_split_bs}.
        If Case \ref{enum::ub_split_bs_r} of Lemma \ref{lemma::ub_split_bs} holds then $Y$ is also a minimal blocking set of $G-r$ which implies that $|Y|\leq \bsg{G-r} \leq \bsd{\C+c}{d-1}$.
        Next, assume that Case \ref{enum::ub_split_bs_Nr} of Lemma \ref{lemma::ub_split_bs} holds. Thus, the set $Y=Y \setminus N[r]$ is also a minimal blocking set of $G-N[r]$. Since there exists an optimum vertex cover of $G$ that contains the vertex set $N(r)$ it follows from Lemma \ref{lemma::nonhereditary_delete} that $|Y|\leq \bsg{G-N[r]} \leq \bsd{\C+c+1}{d-1}$.

        Finally, assume that Case \ref{enum::ub_split_bs_two} (and neither Case \ref{enum::ub_split_bs_r} or Case \ref{enum::ub_split_bs_Nr}) of Lemma \ref{lemma::ub_split_bs} holds. As in the proof of Case 5 of Theorem \ref{theorem::ub_nonhereditary} we bound the size of the minimal blocking set $Y' \subsetneq Y$ of $G-r$ and the minimal blocking set $\hat{Y} = Y \setminus Y'$ of $G-Y'$. Obviously, the size of $Y'$ is at most $\bsd{\C+c}{d-1}$ because $\ed{\C+c}{G-r} = d-1$.
        Recall, the size of $\hat{Y}$ is at most $\bsg{G-Y'-r}-1$ because $\hat{Y}$ is not a blocking set of $G-Y'-r$ whereas $\hat{Y} \cup Z \subseteq \hat{Y} \cup (N(r)\setminus Y')$ is a minimal blocking set of $G-Y'-r$ for a set $Z' \subseteq N(r)\setminus Y'$. Since $Y' \subsetneq Y$ and since $Y$ is a minimal blocking set of $G$, there exists an optimum vertex cover of $G$ that contains the set $Y'$. Thus, we can use Lemma \ref{lemma::nonhereditary_delete} to bound the size of $\bsg{G-Y'-r}$ by $\bsd{\C+c+1}{d-1}$; hence $|\hat{Y}|\leq \bsg{G-Y'-r}-1 \leq \bsd{\C+c+1}{d-1}-1$. We obtain that $|Y|\leq \bsd{\C}{d-1}+\bsd{\C+c+1}{d-1}-1$.
    \end{description}
    Overall, we can bound the size of $Y$ by $\bsd{\C}{d-1}+\bsd{\C+c+1}{d-1}-1$, because $\bsd{\C+c'}{d} \geq 2$ when $c' \geq 1$ (see Observation \ref{observation::bound_bs_bc}). Together with the induction hypothesis we can bound the size of $Y$ as follows:
    \begin{align*}
        |Y| &\leq \bsd{\C+c}{d-1}+\bsd{\C+c+1}{d-1}-1 \\
            &\hspace{-0.5em}\overset{(IH)}{\leq} \hspace{-0.5em} \left(\sum_{i=0}^{d-1} \binom{d-1}{i} f(c+i) - 2^{d-1}+1 \right)
            + \left(\sum_{i=0}^{d-1} \binom{d-1}{i} f(c+1+i) - 2^{d-1}+1 \right) -1 \\
            &= \left(\sum_{i=0}^{d-1} \binom{d-1}{i} f(c+i)\right) + \left(\sum_{i=1}^{d} \binom{d-1}{i-1} f(c+i)\right) - 2 \cdot 2^{d-1}+2 -1 \\
            &= \binom{d-1}{0} f(c) + \sum_{i=1}^{d-1} \left(\binom{d-1}{i}+\binom{d-1}{i-1} \right) f(c+i) + \binom{d-1}{d-1} f(c+d) -2^d +1 \\
            &= \binom{d}{0} f(c) + \sum_{i=1}^{d-1} \binom{d}{i} f(c+i) + \binom{d}{d} f(c+d) -2^{d} +1 \\
            &= \bigg(\sum_{i=0}^{d} \binom{d}{i} f(c+i)\bigg) -2^{d} +1\qedhere
    \end{align*}
\end{proof}

It was shown by Hols and Kratsch that we can bound $\bsg{G}$ for a graph $G$ when $\OPT(G)-\LP(G)=c$ by $2c+2$ \cite[Theorem 14]{HolsK17}.
Let $\C_{\LP} = \{ H \text{ graph} \mid \OPT(H)=\LP(H)\}$ be the set of graphs where the size of an optimum vertex cover equals the value of an optimum \LP solution. Observe that for every graph $G \in \C_{\LP}+c$ it holds that $\OPT(G)-\LP(G) \leq c$: Let $X \subseteq V(G)$ of size at most $c$ such that $G-X \in \C_{\LP}$ (exists by definition of $\C+c$). It holds that $\OPT(G-X)+|X| \geq \OPT(G)$ and that $\LP(G-X)\leq \LP(G)$ which implies that $\OPT(G)-\LP(G) \leq |X|\leq c$.
Thus, $\C_{\LP}$ is a non-hereditary graph class that is $f$-robust with $f(c)=f(\bsc{\C},c)=2c+\bsc{\C_\LP}=2c+2$ .

\begin{corollary} \label{corollary::bound_C_LP}
    It holds that $\bsd{\C_\LP}{d} \leq (\bsc{\C_{\LP}}+d-1)\cdot 2^d +1 = (d+1) \cdot 2^d +1$.
\end{corollary}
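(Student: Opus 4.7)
The proof is essentially a direct application of Theorem~\ref{theorem::ub_nonhereditary} to the graph class $\C_{\LP}$, so the plan is mostly bookkeeping. First I would record the $f$-robustness assertion for $\C_{\LP}$ that the paragraph immediately preceding the corollary establishes: by the cited result \cite[Theorem 14]{HolsK17}, every graph $G$ with $\OPT(G) - \LP(G) = c$ satisfies $\bsg{G} \leq 2c + 2$, and since every $G \in \C_{\LP} + c$ has $\OPT(G) - \LP(G) \leq c$, we indeed have $\bsc{\C_{\LP} + c} \leq 2c + 2$. Thus $\C_{\LP}$ is $f$-robust with $f(c) = 2c + 2 = 2c + \bsc{\C_{\LP}}$.

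Next I would instantiate Theorem~\ref{theorem::ub_nonhereditary} with $\C = \C_{\LP}$ and $c = 0$, which gives
\[
  \bsd{\C_{\LP}}{d} \;\leq\; \sum_{i=0}^{d} \binom{d}{i} f(i) \;-\; 2^d + 1 \;=\; \sum_{i=0}^{d} \binom{d}{i} (2i + 2) \;-\; 2^d + 1.
\]
The remaining step is to evaluate this sum via the standard binomial identities $\sum_{i=0}^{d} \binom{d}{i} = 2^d$ and $\sum_{i=0}^{d} i \binom{d}{i} = d \cdot 2^{d-1}$, which yield $\sum_{i=0}^{d} \binom{d}{i}(2i+2) = 2 \cdot d \cdot 2^{d-1} + 2 \cdot 2^d = d \cdot 2^d + 2^{d+1}$. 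Plugging this back in gives
\[
  \bsd{\C_{\LP}}{d} \;\leq\; d \cdot 2^d + 2^{d+1} - 2^d + 1 \;=\; (d+1) \cdot 2^d + 1,
\]
which matches the stated bound, and one can simultaneously observe that $(d+1) \cdot 2^d + 1 = (\bsc{\C_{\LP}} + d - 1) \cdot 2^d + 1$ since $\bsc{\C_{\LP}} = 2$.

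There is no real obstacle here: all of the conceptual work is absorbed into Theorem~\ref{theorem::ub_nonhereditary} and the Hols--Kratsch bound. The only thing to be careful about is justifying why $f(c) = 2c + 2$ is the correct choice (so that both the upper bound $\bsc{\C_{\LP} + c} \leq 2c + 2$ and the normalization $f(0) = \bsc{\C_{\LP}} = 2$ are captured), and then carrying out the two-line binomial calculation above.
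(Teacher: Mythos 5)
Your proposal is correct and matches the paper's own proof exactly: apply Theorem~\ref{theorem::ub_nonhereditary} with $\C = \C_{\LP}$, $c = 0$, and $f(c) = 2c + 2$ (as established in the preceding paragraph via the Hols--Kratsch bound), then evaluate $\sum_{i=0}^{d}\binom{d}{i}(2i+2) - 2^d + 1 = (d+1)\cdot 2^d + 1$. The only difference is that you spell out the binomial identities, which the paper leaves implicit.
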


\begin{proof}
    This follows directly from Theorem \ref{theorem::ub_nonhereditary}, because $\C_{\LP}$ is a $(2c+\bsc{\C_\LP})$-robust graph class, where $\bsc{\C}=2$ (see above). This implies that
    \begin{align*}
        \bsd{\C_{\LP}}{d} &\leq \sum_{i=0}^d \binom{d}{i} f(\bsc{\C_{\LP}},i) - 2^d+1 \\
                &= \sum_{i=0}^d \binom{d}{i} (2 \cdot i+2) - 2^d+1 \\
                &= (d+1) \cdot 2^d + 1\qedhere
    \end{align*}
\end{proof}

\section{Kernelization results}
In this section, we will combine the results from Sections~\ref{subsec:reducing-num-components} and~\ref{section:bounded-blocking-sets} to obtain polynomial kernelizations for \VC parameterized by a \C-modulator or a $(\C,d)$-modulator.

In Section~\ref{subsubsec:poly-time-rrule-application} we have seen necessary assumptions on a graph class \C such that the number of connected components outside the \C-modulator could be efficiently reduced. We start by extending these results to using $(\C,d)$-modulators.

\begin{lemma} \label{lemma::ed_hereditary_red_rule}
    Let \C be a hereditary graph class on which \VC is polynomial-time solvable.
    In polynomial time we can compute an optimum \VC of a graph $G$ with $\ed{\C}{G}\leq d$, when $d$ is a fixed constant.
\end{lemma}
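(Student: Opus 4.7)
The plan is to solve the problem by a depth-bounded branching algorithm of the same spirit as the case analysis used in Theorem~\ref{theorem::ub_hereditary}. Since the minimal blocking set is contained in one component (Proposition~\ref{proposition::bs_basics}(\ref{enum::bs_basics_one_cc})) and because a vertex cover decomposes over components, I first reduce to the connected case by solving each connected component independently. For a connected graph $G$ with $\ed{\C}{G}\leq d$, let $r$ be the root of a depth-$d$ elimination forest witnessing this bound. Any optimum vertex cover $S$ either contains $r$, in which case $S\setminus\{r\}$ is an optimum vertex cover of $G-r$, or avoids $r$, in which case $N(r)\subseteq S$ and $S\setminus N(r)$ is an optimum vertex cover of $G-N[r]$. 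This motivates the branching rule: try all $v\in V(G)$ as a candidate root, and for each branch recursively compute an optimum vertex cover of $G-v$ and of $G-N[v]$ at depth $d-1$, keeping the best feasible cover found.

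The correctness rests on two facts. First, for the true root $r$, the graph $G-r$ satisfies $\ed{\C}{G-r}\leq d-1$ directly from Definition~\ref{def:elimination-dist-to-C}. Second, because $\C$ is hereditary, deleting any further vertices cannot increase the elimination distance: using the elimination forest of $G-r$ and replacing base components $H\in\C$ by $H$ with some vertices removed, the resulting components still lie in $\C$ by hereditary closure. Thus $\ed{\C}{G-N[r]}\leq\ed{\C}{G-r}\leq d-1$, so both recursive calls occur on instances that again satisfy the promise. For wrong candidate roots $v$ the recursive subinstance may have larger elimination distance, but this does not hurt us: we simply verify every returned set is a vertex cover of the current subgraph and discard non-covers, then return the minimum-size feasible cover produced.

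The base case is $d=0$, where the input graph lies in $\C$ and we invoke the polynomial-time \VC algorithm assumed for $\C$. Since at least one branching path tracks the actual elimination forest of $G$, the algorithm reaches $\C$-instances at its leaves along this path and produces an optimum vertex cover of $G$ by reassembling the optimal partial covers via $\{r\}$ or $N(r)$ at each level.

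For the running time, the recursion has depth $d$ and branching factor at most $2n$ (trying each vertex in each of the two cases), with polynomial work per node for constructing $G-v$, $G-N[v]$, and for verification of covers. This yields $T(n,d)\leq 2n\cdot T(n,d-1)+\mathrm{poly}(n)=n^{O(d)}$, which is polynomial for fixed $d$. The main conceptual obstacle is precisely the hereditary preservation used in the $G-N[r]$ branch; without hereditary closure of $\C$ the base components reached after additional deletions could drop out of $\C$ and invalidate the recursive promise, which is why the result is stated for hereditary $\C$ only.
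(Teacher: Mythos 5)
Your proposal follows the same recursive structure as the paper's proof: reduce to connected components, then branch on whether the elimination-tree root $r$ is taken into the cover (recurse on $G-r$) or excluded (take $N(r)$ and recurse on $G-N[r]$), using heredity of $\C$ to keep the recursive subinstances at elimination distance at most $d-1$, and invoking the polynomial-time $\C$-algorithm at depth~$0$. This matches the paper. The one place where you are more careful than the paper is in not assuming the elimination forest is available: the paper simply says ``let $r_i$ be the root of the elimination tree of $G_i$'' without discussing how to identify it, whereas you enumerate all vertices as candidate roots and keep the minimum-size \emph{feasible} cover returned. This is a legitimate refinement (the modulator $X$ given on input does not directly yield an elimination forest of $G-X$), and the correctness argument that wrong candidates produce valid but possibly suboptimal covers which are discarded is sound. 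It changes the time bound from the paper's implicit $2^d\cdot\mathrm{poly}(n)$ to $n^{O(d)}$, but both are polynomial for fixed~$d$, which is all the lemma claims.
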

\begin{proof}
    We prove this via induction over the elimination distance of graph $G$. Obviously, if $\ed{\C}{G}=0$ then $G$ is a graph of the class \C; thus we can compute an optimum vertex cover in polynomial time.

    For the induction step, we assume that we can solve \VC in polynomial time on graphs $G$ with $\ed{\C}{G}<d$.

    Assume that $d=\ed{\C}{G} > 0$. Let $G_1, G_2, \ldots, G_h$ be the connected components of $G$. It is enough to compute the optimum vertex cover of each connected subgraph $G_i$ of $G$ with $1\leq i \leq h$ because $\OPT(G)=\sum_{i=1}^h \OPT(G_i)$. Let $r_i$ be the root of the elimination tree of graph $G_i$. We distinguish between the case that $r_i$ is contained in the optimum vertex cover or not. Hence, $\OPT(G_i) = \min\{\OPT(G_i-r_i)+1, \OPT(G_i - N[r_i])+|N(r_i)|\}$. Observe that the graphs $G_i-r_i$ and $G_i-N[r_i]$ have elimination distance less than $d$ to $\C$: This is clear for $G_i-r_i$ by removing $r_i$ from the elimination tree of $G_i$. For $G_i-N[r_i]$ we can similarly remove $N[r_i]\ni r_i$ from the elimination tree of $G_i$ to see this, using that $\C$ is hereditary. By the inductive assumption we can compute an optimum vertex cover for both graphs in polynomial time.
    Note that while the running time is polynomial for $d$ constant, it may depend exponentially on $d$.
\end{proof}

\begin{corollary}\label{cor:rrule-poly-time-ed-to-hereditary}
    Reduction Rule \ref{rule::delete_cc} is applicable in polynomial time on graphs $G$ with a given $(\C,d)$-modulator $X$, where $\C$ is a hereditary graph class on which \VC is solvable in polynomial time, and where $\bsc{\C}$ is bounded.
\end{corollary}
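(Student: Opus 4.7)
The plan is to reduce the corollary to Lemma~\ref{lem:rrule-1-in-poly-time-hereditary} by applying that lemma not to $\C$ itself, but to the derived graph class
\[
\C^{(d)} := \{H \mid \ed{\C}{H} \leq d\},
\]
which is precisely the class in which the ``base pieces'' $G-X$ live when $X$ is a $(\C,d)$-modulator. To invoke Lemma~\ref{lem:rrule-1-in-poly-time-hereditary} on $\C^{(d)}$, I need to verify its three hypotheses for $\C^{(d)}$: it is hereditary, $\bsc{\C^{(d)}}$ is bounded, and \VC is polynomial-time solvable on it.

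First I would observe that $\C^{(d)}$ is hereditary: if $H \in \C^{(d)}$ with elimination forest $T$ witnessing $\ed{\C}{H} \leq d$, then for any $v \in V(H)$ we can delete $v$ from whichever bag of $T$ contains it; the resulting structure is still a valid elimination forest of $H-v$ to $\C$ (since base components shrink to subgraphs, which remain in $\C$ by hereditariness of $\C$), so $H-v \in \C^{(d)}$. Second, by Theorem~\ref{theorem::ub_hereditary} applied to the hereditary class $\C$ with bounded $\bsc{\C}$, we have $\bsc{\C^{(d)}} = \bsd{\C}{d} \leq (\bsc{\C}-1)2^d+1$ (or $2^{d-1}+1$ in the $\bsc{\C}=1$ case), which is a constant for fixed $d$. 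Third, Lemma~\ref{lemma::ed_hereditary_red_rule}, just proved, gives polynomial-time solvability of \VC on $\C^{(d)}$ for constant $d$.

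With these three properties in hand, Lemma~\ref{lem:rrule-1-in-poly-time-hereditary} applied to $\C^{(d)}$ immediately yields that Reduction Rule~\ref{rule::delete_cc} runs in polynomial time on any instance $(G,k,X)$ with $G-X \in \C^{(d)}$, which is exactly the setting of a $(\C,d)$-modulator. The main conceptual step is recognizing that the framework of Section~\ref{subsec:reducing-num-components} is parameter-class agnostic, so applying it to $\C^{(d)}$ rather than $\C$ is legitimate; the only real work is checking hereditariness of $\C^{(d)}$, which is standard once one unpacks the elimination-forest definition. There are no further obstacles, since the blocking-set bound from Theorem~\ref{theorem::ub_hereditary} and the polynomial-time \VC algorithm from Lemma~\ref{lemma::ed_hereditary_red_rule} are exactly the two prerequisites needed.
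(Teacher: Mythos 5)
Your proof is correct and takes essentially the same route as the paper: both combine Lemma~\ref{lemma::ed_hereditary_red_rule}, a blocking-set bound for $\bsd{\C}{d}$, and Lemma~\ref{lem:rrule-1-in-poly-time-hereditary}. You are slightly more careful than the paper in naming the intermediate class $\C^{(d)} = \{H \mid \ed{\C}{H}\leq d\}$ and explicitly verifying that it is hereditary, a step the paper's terse proof leaves implicit (and you cite Theorem~\ref{theorem::ub_hereditary} for the bound on $\bsd{\C}{d}$ where the paper cites Theorem~\ref{theorem::ub_nonhereditary}; both suffice).
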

\begin{proof}
Since \C is hereditary and \VC is solvable in polynomial time on \C, it follows from Lemma~\ref{lemma::ed_hereditary_red_rule} that we can efficiently solve \VC in graphs $G$ with $\ed{\C}{G} \leq d$. Furthermore, it follows from the fact that $\bsc{\C}$ is bounded and Theorem~\ref{theorem::ub_nonhereditary} that $\bsd{\C}{d}$ is bounded. It now follows from Lemma~\ref{lem:rrule-1-in-poly-time-hereditary} that we can apply Reduction Rule \ref{rule::delete_cc} in polynomial time.
\end{proof}
For non-hereditary graph classes \C we again need that \VC is solvable in polynomial time on graph class $\C+c$ with $c$ constant. Here $\C+1$ is not enough because in each recursive step we add a vertex to some of the base components.

\begin{lemma} \label{lemma::ed_nonhereditary_red_rule}
    Let \C be a robust graph class on which \VC is polynomial-time solvable. Furthermore, assume that \VC is polynomial-time solvable on graph class $\C+c$ for every constant $c \in \mathbb{N}$.
    In polynomial time we can compute an optimum \VC of a graph $G$ with $\ed{\C}{G}\leq d$.
\end{lemma}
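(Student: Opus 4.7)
The plan is to follow the same induction scheme on $d$ as in the proof of Lemma~\ref{lemma::ed_hereditary_red_rule}, but to carry an extra modulator allowance so that non-hereditariness does not spoil the recursion. Concretely, I will strengthen the statement and prove by induction on $d$ that, for every constant $c\geq 0$, an optimum vertex cover can be computed in polynomial time for any graph $G$ with $\ed{\C+c}{G}\leq d$. This is what the induction actually needs: it specialises to the lemma at $c=0$, and the hypothesis that \VC is polynomial-time solvable on $\C+c'$ for every constant $c'$ is preserved since $(\C+c)+c'=\C+(c+c')$.

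The base case $d=0$ is immediate because $G\in\C+c$. For the inductive step I would handle connected components separately using robustness of $\C$. For a connected graph $G$ with root $r$ of its elimination tree to $\C+c$, I plan to compute $\OPT(G)$ via the identity
\[
    \OPT(G) \;=\; \min\bigl(\OPT(G-r)+1,\; \OPT(\hat G)+|N(r)\cap T|\bigr),
\]
where $T$ denotes the set of internal (tree) vertices of the elimination tree and $\hat G$ is built exactly as in the construction of Lemma~\ref{lemma::nonhereditary_delete} with $Z=N(r)$: delete $r$ together with $N(r)\cap T$, and to each affected base component $H$ (those with $Z_H:=V(H)\cap N(r)\neq\emptyset$) attach a new vertex $v_H$ joined to $Z_H$. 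The graph $G-r$ satisfies $\ed{\C+c}{G-r}\leq d-1$, while each affected base component $H+v_H$ lies in $\C+c+1$ with $v_H$ extending the original modulator, so $\ed{\C+c+1}{\hat G}\leq d-1$. Both optima are therefore produced by the induction hypothesis.

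The heart of the argument is verifying the displayed identity, and this I expect to be the main obstacle. For the upper bound, the first term follows by adding $r$ to any optimum vertex cover of $G-r$. For the second, given an optimum vertex cover $\hat X$ of $\hat G$ I would distinguish two sub-cases: if $\hat X$ contains no $v_H$, then $\hat X\cup(N(r)\cap T)$ is already a vertex cover of $G$ because $N(r)$ is fully covered; otherwise $(\hat X\setminus\{v_H\}_H)\cup\{r\}\cup(N(r)\cap T)$ is a vertex cover of $G$, and its size is at most $|\hat X|+|N(r)\cap T|$ since the deletion of at least one $v_H$ absorbs the addition of $\{r\}$. For the matching lower bound I would take an optimum vertex cover $S$ of $G$ and branch on $r\in S$ (yielding $\OPT(G)\geq\OPT(G-r)+1$) versus $r\notin S$; in the second case $N(r)\subseteq S$, so $S\setminus(N(r)\cap T)$ is automatically a vertex cover of $\hat G$ because every new edge $\{v_H,z\}$ is covered by $z\in Z_H\subseteq N(r)\subseteq S$, which gives $\OPT(\hat G)\leq |S|-|N(r)\cap T|$.

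What remains is to argue that the recursion stays in a regime where the hypothesis applies. At each recursive call the parameter $d$ drops by one while the additive constant $c$ grows by at most one (only in the $\hat G$ branch), so after at most $d$ levels every subcall is on a graph in $\C+c'$ for some $c'\leq c+d$. Because $d$ is a fixed constant, so is $c+d$, and the hypothesis delivers a polynomial-time algorithm at every leaf of the recursion tree. Composing these, the total running time is polynomial in $|V(G)|$, completing the induction.
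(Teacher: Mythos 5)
Your proof is correct and follows essentially the same route as the paper's: strengthen the induction to cover $\ed{\C+c}{G}\leq d$ for every constant $c$, handle components separately, branch on the root $r$ of the elimination tree, and replace the subcall on $G-N[r]$ by a call on the modified graph $\hat G$ from Lemma~\ref{lemma::nonhereditary_delete}, which has elimination distance at most $d-1$ to $\C+c+1$. The only difference is presentational: the paper verifies the identity $\OPT(G)=\min\{\OPT(G-r)+1,\OPT(\hat G)-|Z'|+|N(r)|\}$ (note $|N(r)|-|Z'|=|N(r)\cap T|$) by conditioning on whether $N(r)$ is a blocking set in $G$, whereas your two-sided verification, case-splitting on whether an optimum cover of $\hat G$ uses one of the new vertices $v_H$, is a slightly more direct way to reach the same equality.
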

\begin{proof}
    Again, we use induction to prove the lemma. The construction is similar to the construction of Lemma \ref{lemma::nonhereditary_delete}. For the base case assume that $\ed{\C}{G} = 0$. This implies that $G$ is a graph of the class \C; hence we can compute an optimum vertex cover in polynomial time.

    For the induction step, we assume that we can solve \VC in polynomial time on graphs $G$ with $\ed{\C}{G}<d$ for all graph classes \C that fulfill the requirements of the lemma.

    Let $G$ be a graph with $\ed{\C}{G} = d > 0$. Let $G_1,G_2,\ldots, G_h$ be the connected components of $G$. Again, it is sufficient to compute the optimum vertex cover of each connected subgraph separately. Let $G_i$ be a connected component of $G$ and let $r_i$ be the root of the elimination tree of $G_i$. As before, we want to compute an optimum vertex cover of $G_i-r_i$ and $G_i-N[r_i]$. Since $r_i$ is the root of the elimination tree of $G_i$ it holds that $\ed{\C}{G_i-r_i} = d-1$. Hence, we can compute an optimum vertex cover of $G_i-r_i$ in polynomial time for $d$ constant. To compute an optimum vertex cover of $G_i-N[r_i]$ we construct the graph $\hat{G}_i$ as in the proof of Lemma \ref{lemma::nonhereditary_delete}. Let $Z' \subseteq N(r_i)$ the set of vertices in $N(r_i)$ that are contained in the base components of the elimination tree of $G_i$. It holds that $\ed{\C+1}{\hat{G}_i}=d-1$. Observe that $\OPT(\hat{G}_i)\leq \OPT(G_i-N[r_i])+|Z'|$ because every optimum vertex cover of $G_i-N[r_i]$ together with the set $Z'$ is a vertex cover of $\hat{G}_i$. If $N(r_i)$ is not a blocking set of $G_i$ then $\OPT(G_i-N[r_i]) = \OPT(\hat{G}_i) - |Z'|$ (see proof of Lemma \ref{lemma::nonhereditary_delete}). If $N(r_i)$ is a blocking set of $G_i$ then there exists an optimum vertex cover of $G_i$ that contains $r_i$.
    Furthermore, it holds that $\OPT(\hat{G}_i) - |Z'|+|N(r_i)| \geq \OPT(G_i)$ when $N(r_i)$ is a blocking set of $G_i$: Let $\hat{S}$ be an optimum vertex cover of $\hat{G}_i$. First, we add all vertices in $N(r_i) \setminus Z'$ to $\hat{S}$ and denote the resulting set by $S'$. Recall that $Z' \subseteq N(r_i)$ which implies that $|S'| = |\hat{S}| + |N(r_i)|-|Z'|$. If $Z' \subseteq \hat{S}$ then $S'$ is a vertex cover of $G_i$ that contains $N(r_i)$. This implies that $|S'|> \OPT(G_i)$ because $N(r_i)$ is a blocking set of $G_i$. Since $|S'|=|\hat{S}| + |N(r_i) \setminus Z'|$ we obtain that $\OPT(G_i) < \OPT(\hat{G}) - |Z'| + |N(r_i)|$. If $Z' \nsubseteq \hat{S}$ then $\hat{S}$ contains some newly added vertices. Hence, $S=S' \cap V(G_i) \cup \{r\}$ is a vertex cover of $G_i$. It holds that $\OPT(G_i) \leq |S| = |S' \cap V(G_i)| + |\{r\}| \leq |S'| = |\hat{S}| + |N(r_i) \setminus Z'| = \OPT(\hat{G}) - |Z'|+|N(r_i)|$.

    Overall, this implies that $\OPT(G_i) = \min \{\OPT(G_i-N[r_i])+|N(r_i)|, \OPT(G_i-r_i)+1\} = \min \{\OPT(\hat{G}_i)-|Z'|+|N(r_i)|, \OPT(G_i-r_i)+1\}$. As mentioned above, it holds that $\ed{\C+1}{\hat{G}_i}=d-1$. Thus, we can compute an optimum vertex cover of $\hat{G}_i$ in polynomial time, because the graph class $\C+1$ fulfills the desired properties of the lemma if \C fulfills these properties.
\end{proof}
For the next corollary, observe that in particular $\bsd{\C}{d}$ is bounded if \C is known to be either hereditary or $f$-robust, by Theorems~\ref{theorem::ub_hereditary} and~\ref{theorem::ub_nonhereditary}.

\begin{corollary}\label{cor:rrule-poly-time-ed-to-non-hereditary}
    Reduction Rule \ref{rule::delete_cc} is applicable in polynomial time on graphs $G$ with a given $(\C,d)$-modulator $X$, where $\C$ is a robust graph class with the properties that $\bsd{\C}{d}$ is bounded for any constant $d$ and that \VC is polynomial-time solvable on graph class $\C+c$ for constant~$c$.
\end{corollary}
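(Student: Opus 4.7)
The plan is to reduce the claim to Lemma~\ref{lemma::red_rule_poly_time} applied to the auxiliary graph class $\C' := \{H : \ed{\C}{H} \leq d\}$ (for the fixed constant $d$). Since the input graph $G$ satisfies $G - X \in \C'$, this is precisely the setting where Reduction Rule~\ref{rule::delete_cc} operates, so it suffices to verify the three hypotheses of Lemma~\ref{lemma::red_rule_poly_time} for $\C'$: (a) $\bsc{\C'}$ is bounded, (b) \VC is polynomial-time solvable on $\C'$, and (c) blocking sets in $\C'$ can be recognized in polynomial time.

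For (a), note that $\bsc{\C'} = \bsd{\C}{d}$ by definition, which is bounded by the hypothesis of the corollary. For (b), I would invoke Lemma~\ref{lemma::ed_nonhereditary_red_rule}: the hypothesis that \VC is solvable in polynomial time on $\C + c$ for every constant $c$ yields, applied with the fixed constant $d$, a polynomial-time algorithm for \VC on all graphs of elimination distance at most $d$ to $\C$, which is exactly the class $\C'$.

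For (c), which is the heart of the argument, I would appeal to Lemma~\ref{lemma::non_hereditary_verify_bs}. That lemma reduces blocking set recognition to solving \VC on $\C'$ and on $\C' + 1$. The class $\C'$ is already handled by step (b); for $\C' + 1$, the key observation is that any $H \in \C' + 1$ admits a vertex $v$ with $\ed{\C}{H - v} \leq d$, so hanging $v$ above an elimination forest of $H-v$ gives $\ed{\C}{H} \leq d + 1$. Thus $\C' + 1$ is contained in the class of graphs of elimination distance at most $d+1$ to $\C$, on which \VC is again polynomial-time solvable by a second application of Lemma~\ref{lemma::ed_nonhereditary_red_rule} (using the hypothesis for every constant $c$, in particular for the constant $d+1$).

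The only step I expect to require care is verifying the containment $\C' + 1 \subseteq \{H : \ed{\C}{H} \leq d+1\}$ cleanly in the disconnected case: components of $H$ not containing the extra vertex $v$ already have elimination distance at most $d$ to $\C$, while the component containing $v$ has elimination distance at most $d+1$ by the rooting argument above, and $\ed{\C}{H}$ is the maximum over components. Once this is in place, chaining Lemmas~\ref{lemma::ed_nonhereditary_red_rule},~\ref{lemma::non_hereditary_verify_bs}, and~\ref{lemma::red_rule_poly_time} concludes the proof.
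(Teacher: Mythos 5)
Your proposal is correct and follows exactly the paper's route: apply Lemma~\ref{lemma::red_rule_poly_time} to the class of graphs of elimination distance at most $d$ to $\C$, get polynomial-time \VC on that class from Lemma~\ref{lemma::ed_nonhereditary_red_rule}, and handle blocking-set recognition via Lemma~\ref{lemma::non_hereditary_verify_bs} together with the observation that adding one vertex raises elimination distance by at most one. Your extra care about the disconnected case of the containment $\C'+1\subseteq\{H:\ed{\C}{H}\le d+1\}$ is a correct elaboration of a step the paper states without detail.
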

\begin{proof}
        It follows from Lemma~\ref{lemma::ed_nonhereditary_red_rule} that we can solve \VC in polynomial time on graphs $G$ with $\ed{\C}{G}\leq d$ for any constant $d$.
        It then follows from Lemma~\ref{lemma::non_hereditary_verify_bs} that we can verify whether a given set of vertices is a blocking set in $G$  with $\ed{\C}{G}\leq d$, as adding a single vertex to $G$ increases its elimination distance to \C by at most one.
        The statement now follows from Lemma \ref{lemma::red_rule_poly_time}.
\end{proof}

\subsection{General results}

In this section, we show that \VC parameterized by the size of a $(\C,d)$-modulator has a polynomial kernel when the graph class \C fulfills some additional properties. The assumptions that $\bsd{\C}{d}$ is bounded and that \VC is polynomial-time solvable on the considered graph class are necessary, if these assumptions fail a polynomial kernel is unlikely to exist. The same holds for the assumption that \VC parameterized by a \C-modulator has a polynomial kernel. We additionally require that \C is a robust graph class that is either hereditary, or has the property that \VC is polynomial-time solvable on $\C+c$. These assumptions will ensure that our reduction rule can be applied in polynomial time.

\begin{lemma} \label{lemma::kernel_ed}
  Let $\C$ be a robust graph class for which $\bsd{\C}{d}$ is bounded and on which \VC is polynomial-time solvable, such that \C is hereditary or \VC is polynomial-time solvable on $\C + c$ for all constants $c$.

     Suppose $\VC$ parameterized by the size of a $\C$-modulator $\hat{X}$ has a (randomized) polynomial kernel with $g(|\hat{X}|)$ vertices.
    Then \VC parameterized by the size of a $(\C,d)$-modulator $X$ has a (randomized) polynomial kernel with $\Oh( g( |X|^b ))$  vertices, where $b = \prod_{i=1}^d \bsd{\C}{i}$.
\end{lemma}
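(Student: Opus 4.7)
The plan is to convert the given $(\C,d)$-modulator $X$ into an ordinary $\C$-modulator $\hat{X}$ whose size is polynomial in $|X|$, and then invoke the assumed (randomized) polynomial kernel for the $\C$-modulator parameterization as a black box. I proceed by induction on $d$. The base case $d=0$ is trivial: $X$ itself is a $\C$-modulator and the assumed kernel applies with $b$ being the empty product $1$.

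For the inductive step ($d\geq 1$), let $\C^{(d)}$ denote the class of graphs of elimination distance at most $d$ to $\C$, so that $G-X\in\C^{(d)}$ and $\bsc{\C^{(d)}}=\bsd{\C}{d}$ is bounded by assumption. I would apply Reduction Rule~\ref{rule::delete_cc} to $(G,k,X)$ \emph{treating $X$ as a $\C^{(d)}$-modulator}. The hypotheses of the lemma (either $\C$ hereditary with \VC polynomial-time solvable on $\C$, or $\C$ robust with \VC polynomial-time solvable on every $\C+c$) are exactly what is needed to invoke Corollary~\ref{cor:rrule-poly-time-ed-to-hereditary} respectively Corollary~\ref{cor:rrule-poly-time-ed-to-non-hereditary}, so this rule runs in polynomial time. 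By Theorem~\ref{theorem::bound_numb_cc} the resulting graph (abusing notation, still called $G$) has at most $|X|^{\bsd{\C}{d}}$ components in $G-X$.

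Next, for every component $H$ of $G-X$ pick the root $r_H$ of any elimination forest witnessing $\ed{\C}{H}\leq d$, and set $X':= X\cup\{r_H : H\text{ a component of }G-X\}$. Then $G-X'$ has elimination distance at most $d-1$ to $\C$ and
\[
    |X'|\;\leq\;|X|+|X|^{\bsd{\C}{d}}\;\leq\;2|X|^{\bsd{\C}{d}}.
\]
Recursing on $(G,k,X')$ with the $(\C,d-1)$-modulator $X'$ yields, by induction, a $\C$-modulator $\hat{X}$ with $|\hat{X}|=\Oh(|X|^{b})$ where $b=\prod_{i=1}^{d}\bsd{\C}{i}$ (the constant factor accumulated is bounded because the $\bsd{\C}{i}$'s are bounded constants). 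Invoking the assumed polynomial kernel on the resulting $\C$-modulator instance produces an equivalent instance with $g(|\hat{X}|)=\Oh(g(|X|^{b}))$ vertices, which proves the lemma; randomization is preserved since the only non-deterministic step is the invocation of the given kernel.

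The main obstacle is justifying polynomial-time applicability of Reduction Rule~\ref{rule::delete_cc} at each level of the recursion, because at level $i$ the ``outside'' class is $\C^{(i)}$ rather than $\C$ itself; this is precisely what Corollaries~\ref{cor:rrule-poly-time-ed-to-hereditary} and~\ref{cor:rrule-poly-time-ed-to-non-hereditary} provide. A secondary concern is identifying the root $r_H$ in polynomial time; for constant $d$ this can be done by a brute-force search over vertices at each level, using that \VC (and hence optimum-vertex-cover-based checks of blocking-set type witnesses for elimination distance) is polynomial-time solvable on $\C$ (and on $\C+c$ when needed), or by assuming the elimination forest is given together with the modulator.
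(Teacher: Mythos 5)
Your proposal is correct and follows essentially the same route as the paper's proof: apply Reduction Rule~\ref{rule::delete_cc} to bound the number of components of $G-X$ by $|X|^{\bsd{\C}{d}}$, add the roots of the elimination forests of these components to the modulator to obtain a $(\C,d-1)$-modulator, recurse until a $\C$-modulator of size $\Oh(|X|^b)$ is reached, and then invoke the assumed kernel; polynomial-time applicability of the reduction rule is justified via Corollaries~\ref{cor:rrule-poly-time-ed-to-hereditary} and~\ref{cor:rrule-poly-time-ed-to-non-hereditary}, exactly as in the paper. The only stylistic difference is that you frame the induction as producing a $\C$-modulator first and then calling the assumed kernel once, whereas the paper phrases it as a chain of parameter transformations ending at a $(\C,0)$-modulator instance, but these are the same argument.
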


Our kernelization for \VC parameterized by the size of a $(\C,d)$-modulator is similar to the kernelization for \VC parameterized by the size of a $d$-treedepth modulator (see~\cite{BougeretS17}). One difference is that we do not want to introduce hyper-edges. For completeness, we give a short proof of Lemma \ref{lemma::kernel_ed}.

\begin{proof}
    Like Bougeret and Sau \cite{BougeretS17} we reduce an instance $(G,k,X)$ of \VC parameterized by the size of a $(\C,d)$-modulator to an instance $(G',k',X')$ of \VC parameterized by the size of a $(\C,d-1)$-modulator. The bound on the number of vertices follows inductively.

    We start by observing that Reduction Rule~\ref{rule::delete_cc} can be applied in polynomial time. We do a case distinction. If \C is hereditary, it follows from Corollary~\ref{cor:rrule-poly-time-ed-to-hereditary} that Reduction Rule~\ref{rule::delete_cc} can be applied in polynomial time.

    Otherwise, \VC is polynomial-time solvable on graphs from $\C + c$ for any constant $c$ and it follows immediately from  Corollary~\ref{cor:rrule-poly-time-ed-to-non-hereditary} that Reduction Rule~\ref{rule::delete_cc} can be applied in polynomial time.

    To obtain the kernel, we first apply Reduction Rule \ref{rule::delete_cc} to instance $(G,k,X)$ of \VC parameterized by the size of a $(\C,d)$-modulator. This leads to an equivalent instance $(G',k',X)$ of \VC parameterized by the size of a $(\C,d)$-modulator where the number of connected components in $G'-X$ is at most $|X|^{\bsd{\C}{d}}$ (Theorem \ref{theorem::bound_numb_cc}). Let $X_r$ be the set of roots of the elimination forest of $G'-X$; hence $X_r$ contains at most $|X|^{\bsd{\C}{d}}$ vertices because every connected component of $G'-X$ has exactly one root.
    Now, $(G',k',X\cup X_r)$ is an instance of \VC parameterized by the size of a $(\C,d-1)$-modulator. Obviously, $(G,k,X)$ is a yes-instance if and only if $(G',k',X\cup X_r)$ is a yes-instance.

    It follows inductively that we can reduce the instance $(G',k',X\cup X_r)$ of \VC parameterized by the size of a $(\C,d-1)$-modulator to an instance $(\hat{G},\hat{k},\hat{X})$ of \VC parameterized by the size of a $(\C,0)$-elimination distance modulator with
    \[
    |\hat{X}| \in \Oh \left(|X \cup X_r|^{\prod_{i=1}^{d-1} \bsd{\C}{i}} \right) = \Oh \left(\left({|X|^{\bsd{\C}{d}}}\right)^{\prod_{i=1}^{d-1} \bsd{\C}{i}} \right) = \Oh \left(|X|^{\prod_{i=1}^{d} \bsd{\C}{i}} \right).
    \]
    Since $(\hat{G},\hat{k},\hat{X})$ is also an instance of \VC parameterized by the size of a $\C$-modulator, we can reduce instance $(\hat{G},\hat{k},\hat{X})$ to an equivalent instance $(\widetilde{G},\widetilde{k},\widetilde{X})$ of \VC parameterized by the size of a $\C$-modulator with $\Oh(g ( |X|^b ) )$ vertices where $b=\prod_{i=1}^d \bsd{\C}{i}$. This concludes the proof.
\end{proof}

Observe that in the above lemma statement, when \VC parameterized by a \C modulator allows a polynomial kernel, the fact that \VC is solvable in polynomial time on graphs from $\C+c$ is immediate: since the problem has a polynomial kernel, it must be FPT in the parameter. Since in this case the size of a $\C$-modulator is $c$, which is constant, the result follows.

In the above theorem statement, we assume that $\bsd{\C}{d}$ is bounded to obtain the kernelization. We observe that for hereditary graph classes, this assumption is not needed, it follows from our results in Theorem~\ref{theorem::ub_hereditary} that it suffices to bound $\bsc{\C}$. Furthermore, a bound on $\bsc{\C}$ often comes naturally: if \VC parameterized by a \C-modulator has a polynomial kernel, it follows from Theorem~\ref{theorem:lb:intro} that, unless \containment, there must exist a constant $d$ such that $\bsc{\C} \leq d$.

\begin{thm:theorem5}
\theoremfive
\end{thm:theorem5}
\begin{proof}
The result is immediate from Lemma~\ref{lemma::kernel_ed}, combined with the bound on $\bsd{\C}{d}$ for hereditary graph classes provided in Theorem~\ref{theorem::ub_hereditary}.
\end{proof}

Similarly, for non-hereditary  graph classes, it suffices if $\C$ is $f$-robust to obtain a polynomial kernel. The size of the kernel depends on $f$.

\begin{corollary} \label{theorem::kernel_ed_hereditary}
Let \C be a robust and $f$-robust graph class for which $\beta_C$ is bounded and for which \VC parameterized by the size of a \C modulator $\hat{X}$ has a (randomized) polynomial kernel. Then \VC parameterized by the size of a $(\C,d)$-modulator has a (randomized) polynomial kernel.
\end{corollary}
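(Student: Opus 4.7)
The plan is to derive the corollary by verifying that every hypothesis of Lemma~\ref{lemma::kernel_ed} holds under the weaker assumptions stated here, and then to invoke that lemma directly. So the task reduces to checking four conditions: (i) \C is robust; (ii) $\bsd{\C}{d}$ is bounded; (iii) \VC is polynomial-time solvable on \C; and (iv) \C is either hereditary or satisfies that \VC is polynomial-time solvable on $\C+c$ for every constant $c$.

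Condition (i) is assumed. For condition (ii), I would appeal directly to Theorem~\ref{theorem::ub_nonhereditary}: because \C is $f$-robust and $\bsc{\C}$ is bounded, that theorem gives the explicit bound
\[
\bsd{\C}{d}\le \Bigl(\sum_{i=0}^{d}\binom{d}{i}f(i)\Bigr)-2^{d}+1,
\]
which is a constant for any fixed $d$. For condition (iii), I would observe that the existence of a (randomized) polynomial kernel for \VC parameterized by the size of a \C-modulator implies that the problem is FPT in this parameter (recalled in the preliminaries), and setting the modulator equal to the empty set forces the parameter value to be $0$, making the running time polynomial on instances $(G,k,\emptyset)$ with $G\in\C$; this solves \VC on \C in polynomial time.

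Condition (iv) is the slightly more delicate point, because \C need not be hereditary. However, the same FPT argument applies with a larger modulator: given any graph $G\in \C+c$, we have a \C-modulator of size at most $c$, so running the kernelization algorithm followed by the FPT decision algorithm solves $(G,k)$ in time $f'(c)\cdot \mathrm{poly}(|G|)$, which is polynomial since $c$ is constant. Thus \VC is polynomial-time solvable on $\C+c$ for every constant $c$, giving the second alternative of condition (iv).

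With all four conditions in place, I would simply apply Lemma~\ref{lemma::kernel_ed} to conclude that \VC parameterized by the size of a $(\C,d)$-modulator admits a (randomized) polynomial kernel, of size polynomial in $|X|^{b}$ where $b=\prod_{i=1}^{d}\bsd{\C}{i}$ and the polynomial comes from the assumed kernel for the \C-modulator case. No step here is a genuine obstacle: the work was already done in Theorem~\ref{theorem::ub_nonhereditary} (to bound blocking-set sizes under $f$-robustness) and in Lemma~\ref{lemma::kernel_ed} (which packages the iterative root-peeling kernelization). The only thing to be careful about is noting that polynomial-time solvability on each $\C+c$ is automatic from the existence of a kernel, rather than an extra hypothesis one must import.
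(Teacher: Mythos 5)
Your proposal is correct and follows exactly the same route as the paper: the paper's proof of this corollary is the one-liner ``immediate from Lemma~\ref{lemma::kernel_ed}, combined with the bound on $\bsd{\C}{d}$ from Theorem~\ref{theorem::ub_nonhereditary},'' and the extra observations you make to discharge the remaining hypotheses of Lemma~\ref{lemma::kernel_ed} (that a polynomial kernel for the $\C$-modulator parameterization already yields polynomial-time solvability on $\C$ and on $\C+c$ for constant $c$) are exactly the remark the paper places just after Lemma~\ref{lemma::kernel_ed}.
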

\begin{proof}
The result is immediate from Lemma~\ref{lemma::kernel_ed}, combined with the bound on $\bsd{\C}{d}$ for $f$-robust graph classes provided in Theorem~\ref{theorem::ub_nonhereditary}.
\end{proof}

\subsection{Kernel for modulator to bounded $\C_{\LP}$ elimination distance}

In this section, we show how Theorem~\ref{thm:intro:lp-kernel} follows from the general results in the previous section, to have an explicit example for a non-hereditary base class $\C$. That is, we show how to get a randomized polynomial kernel for \VC parameterized by the size of a modulator $X$ such that $G-X$ has bounded elimination distance to the non-hereditary class $\C_{\LP}$ of graphs where integral and fractional vertex cover size coincide.
Towards proving this result, we show the following relation between the value of $\ell = \OPT(G)-\LP(G)$ and the size of a $\C_{\LP}$ modulator in $G$.

\begin{lemma} \label{lemma::bound_modulator_OPT_LP}
    Let $G$ be a graph, and let $\ell = \OPT(G)-\LP(G)$.
    There exists a vertex set $X \subseteq V(G)$ of size at most $2 \ell$ such that $\OPT(G-X)=\LP(G-X)$.
\end{lemma}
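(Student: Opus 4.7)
The plan is to prove the lemma by induction on $\ell = \OPT(G) - \LP(G)$, which by half-integrality of the vertex cover LP takes values in $\{0, \tfrac{1}{2}, 1, \tfrac{3}{2}, \ldots\}$. The base case $\ell = 0$ is immediate: $G \in \C_{\LP}$ by definition, so $X := \emptyset$ suffices. For the inductive step, assuming $\ell \geq \tfrac{1}{2}$ and that the statement holds for all smaller half-integer values, I aim to exhibit a single vertex $v \in V(G)$ such that $\OPT(G - v) - \LP(G - v) \leq \ell - \tfrac{1}{2}$. The inductive hypothesis applied to $G - v$ then yields a set $X' \subseteq V(G) - v$ of size at most $2\ell - 1$ with $(G - v) - X' \in \C_{\LP}$, so that $X := X' \cup \{v\}$ is a vertex set of size at most $2\ell$ with $G - X \in \C_{\LP}$ as required.

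To locate $v$, I would apply the canonical Nemhauser--Trotter decomposition to partition $V(G) = V_0 \cup V_{1/2} \cup V_1$, together with a half-integral LP optimum $x^*$ assigning values $0$, $\tfrac{1}{2}$, $1$ on the three parts and a minimum vertex cover $S = V_1 \cup S^*$, where $S^* \subseteq V_{1/2}$ is a minimum vertex cover of $G[V_{1/2}]$. Two standard facts will be used: first, $V_0$ has no neighbours in $V_{1/2}$, so $N_G(V_{1/2}) \subseteq V_{1/2} \cup V_1$ and therefore $x^*_u \geq \tfrac{1}{2}$ for every $u \in N_G(v)$ whenever $v \in V_{1/2}$; second, $\ell = |S^*| - |V_{1/2}|/2$, so the assumption $\ell > 0$ forces $S^*$ to be non-empty. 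I then pick $v \in S^* \subseteq V_{1/2}$; the first condition gives $\OPT(G - v) \leq \OPT(G) - 1$ since $S \setminus \{v\}$ is a vertex cover of $G - v$, and restricting $x^*$ to $V(G) - v$ gives $\LP(G - v) \leq \LP(G) - \tfrac{1}{2}$.

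The key remaining step is the matching lower bound $\LP(G - v) \geq \LP(G) - \tfrac{1}{2}$, which requires $v$ to be chosen with care. My plan is to couple the Nemhauser--Trotter choice with the half-integral fractional matching decomposition of $G$: by LP duality $\LP(G)$ equals the maximum fractional matching of $G$, which decomposes as a disjoint union of an integer matching $M_1$ and a collection of vertex-disjoint odd cycles carrying weight $\tfrac{1}{2}$. When $\ell > 0$ on the NT-reduced part $G[V_{1/2}]$, this decomposition must contain at least one such odd half-cycle; picking $v \in S^*$ also lying on such a cycle, one converts the cycle into an even path after removing $v$ (which has a perfect integer matching), producing a fractional matching of $G - v$ of value $\LP(G) - \tfrac{1}{2}$ and hence $\LP(G - v) \geq \LP(G) - \tfrac{1}{2}$ by LP duality. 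Combining this with the upper bounds yields $\OPT(G - v) - \LP(G - v) \leq (\OPT(G) - 1) - (\LP(G) - \tfrac{1}{2}) = \ell - \tfrac{1}{2}$, closing the induction.

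The main obstacle is arguing that a single vertex $v$ can be chosen to simultaneously lie in a minimum vertex cover of $G$ (so that $\OPT(G - v) = \OPT(G) - 1$) and on an odd half-cycle of a suitable maximum fractional matching decomposition of $G$ (so that $\LP(G - v) = \LP(G) - \tfrac{1}{2}$). An arbitrary choice from $S^*$ can fail already on small examples such as $K_4$ with a pendant, where deleting the vertex supporting the pendant collapses the LP by a full unit instead of half a unit. Resolving this requires combining the Nemhauser--Trotter structural information with the Gallai--Edmonds and fractional-matching structure and leveraging half-integrality of the vertex cover LP polytope throughout; this interplay is the technical heart of the proof.
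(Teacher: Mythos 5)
Your approach is genuinely different from the paper's. The paper does not induct: it fixes an optimal half-integral LP solution with Nemhauser--Trotter classes $V_0, V_{1/2}, V_1$ and a minimum vertex cover $S$ with $V_1 \subseteq S \subseteq V_1 \cup V_{1/2}$, builds the bipartite graph $H$ on $V_{1/2}\cap S$ versus $V_{1/2}\setminus S$, shows a maximum matching $M$ of $H$ saturates $V_{1/2}\setminus S$ (else the LP value could be decreased), and takes $X = (V_{1/2}\cap S)\setminus V(M)$. A direct count gives $|X| = 2\ell$, and then $G-X$ has an integer matching and a vertex cover of the same size $|V_1|+|M|$, forcing $\OPT(G-X)=\MM(G-X)=\LP(G-X)$. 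So the paper exhibits $X$ in one shot, with all the work done by the bipartite matching argument.

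Your inductive plan is correct in outline, but you have left a genuine gap exactly where you flag it: you assert that there is a vertex $v$ simultaneously in a minimum vertex cover and on an odd half-cycle of an optimal fractional matching, and you describe the existence of such a $v$ as ``the technical heart of the proof'' without actually proving it. This step does go through, but it is not automatic and you have to say why. The missing ingredient is complementary slackness between the optimal primal (vertex-cover LP) half-integral solution $x$ and an optimal dual (fractional matching) $y$: any edge with $y_e>0$ satisfies $x_u+x_v=1$, so positive-$y$ edges lie either between $V_1$ and $V_0$ or entirely inside $V_{1/2}$, and since the $V_1$--$V_0$ part is bipartite, every odd half-cycle in the support of $y$ is contained in $V_{1/2}$. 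When $\ell>0$ the value $\LP(G)$ is non-integral, so at least one such half-cycle exists, and $S^*=S\cap V_{1/2}$ is a vertex cover of $G[V_{1/2}]$ and therefore hits it; any $v$ in that intersection serves. Note also that your $K_4$-plus-pendant example does not actually witness the failure mode you worry about: the pendant-supporting vertex $a$ gets $x_a=1$, so $a\in V_1$ and in particular $a\notin S^*$, hence it was never an admissible choice. Once you supply the complementary-slackness observation, your induction closes and even reproduces the $|X|\le 2\ell$ bound, but as written the argument is incomplete.
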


\begin{proof}
    Let $x \in \{0,\frac{1}{2},1\}^{|V(G)|}$ be an optimum half-integral solution to the \VC \LP of $G$, and let $V_i = \{ v \in V(G) \mid x_v = i\}$ for $i \in \{0,\frac{1}{2},1\}$. Due to a result of Nemhauser and Trotter \cite{NemhauserT75} there exists an optimum vertex cover $S$ of $G$ with $V_1 \subseteq S \subseteq V_{\frac{1}{2}} \cup V_1$.
    Consider the bipartite graph $H$ where one part is the set $V_\frac{1}{2} \cap S$ and the other part is the set $V_\frac{1}{2} \setminus S$, and where there is an edge between $y \in V_\frac{1}{2} \cap S$ and $z \in V_\frac{1}{2} \setminus S$ if and only if $\{y,z\} \in E(G)$. Let $M$ be a maximum matching in the bipartite graph $H$.
    It holds that the matching $M$ saturates the set $V_\frac{1}{2} \setminus S$; otherwise it follows from Theorem \ref{theorem::matching} that there exists a set $Z \subseteq V_\frac{1}{2} \setminus S$ such that $|N_H(Z)|<|Z|$. But, this implies that $x$ is not an optimum half-integral solution to the \VC \LP of $G$ because $x'$ with $x_v'=1$ for all $v \in V_1 \cup N_H(Z)$, $x_v'=\frac{1}{2}$ for all $v \in V_\frac{1}{2} \setminus N_H[Z]$, and $x_v'=0$ for all $v \in V_0 \cup Z$ is also a valid $\LP$ solution to the \VC \LP of $G$ with $|x'|<|x|$. (Note that $Z$ is an independent set because it is disjoint from $S$ and that $N_G(Z)\subseteq V_{\frac{1}{2}} \cup V_1$ because $Z\subseteq V_\frac{1}{2}$.) Thus, the matching $M$ saturates $V_\frac{1}{2} \setminus S$.

    Let $X= (V_\frac{1}{2} \cap S) \setminus V(M)$ be the set of vertices in $V_\frac{1}{2} \cap S$ that are not an endpoint of a matching edge of $M$.
    We will show that $|X|$ has size $2 \ell$. It holds that $\OPT(G)=|S|=|V_\frac{1}{2} \cap S| + |V_1|$ and that $\LP(G) = |V_1| + \frac{1}{2} |V_\frac{1}{2}| = |V_1| + \frac{1}{2} |V_\frac{1}{2} \cap S| + \frac{1}{2} |V_\frac{1}{2} \setminus S|$. Since $\ell = \OPT(G)-\LP(G)$ we obtain that
    \begin{align*}
        \ell &= \OPT(G)-\LP(G) = |V_\frac{1}{2} \cap S| + |V_1| - \left( |V_1| + \frac{1}{2} |V_\frac{1}{2} \cap S| + \frac{1}{2} |V_\frac{1}{2} \setminus S| \right) \\
            &= \frac{1}{2} |V_\frac{1}{2} \cap S| - \frac{1}{2} |V_\frac{1}{2} \setminus S|
            = \frac{1}{2} |(V_\frac{1}{2} \cap S)\setminus V(M)|
            = \frac{1}{2} |X|.
    \end{align*}
    The second to last equality holds because the matching $M$ saturates $V_\frac{1}{2} \setminus S$. It follows that $|X|=2 \ell$.
    Observe that the graph $G-X$ has a matching of size $|V_1|+|M|$ and that $S \setminus X$ is a vertex cover of $G-X$ of size $|V_1|+|M|$. This implies that $\OPT(G-X)=\LP(G-X)$, and concludes the proof.
\end{proof}

Combining Corollary \ref{corollary::bound_C_LP} and Lemma \ref{lemma::kernel_ed} we can now generalize the kernelization for \VC parameterized by the size of a $d$-treedepth modulator and parameterized by the difference between an optimum vertex cover and an optimum \LP solution using the size of a $(\C_\LP,d)$-modulator as the parameter.
The following theorem subsumes Theorem~\ref{thm:intro:lp-kernel}.

\begin{theorem} \label{theorem::combine_kernels}
    An optimum $(\C_\LP,d)$-modulator of a graph $G$ has at most the size of a $d$-treedepth modulator of $G$ and at most twice the size of $\OPT(G)-\LP(G)$.
    Furthermore, \VC parameterized by the size of a $(\C_\LP,d)$-modulator admits a randomized polynomial kernel.
\end{theorem}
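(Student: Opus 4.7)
The plan is to split the theorem into two independent claims and reduce each to results established earlier in the excerpt.

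For the size comparisons, I would first observe that a $d$-treedepth modulator $X$ of $G$ satisfies $\td{G-X}\leq d$, which by the identification of treedepth with elimination distance to the empty graph gives $\ed{\C_{\LP}}{G-X}\leq d$ by monotonicity of elimination distance in the target class (the empty graph lies in $\C_{\LP}$). Hence $X$ is itself a $(\C_{\LP},d)$-modulator, so an optimum $(\C_{\LP},d)$-modulator has size at most $|X|$. For the second inequality, Lemma~\ref{lemma::bound_modulator_OPT_LP} directly provides a set $X'\subseteq V(G)$ with $|X'|\leq 2(\OPT(G)-\LP(G))$ and $G-X'\in\C_{\LP}$; this $X'$ is a $(\C_{\LP},0)$-modulator and thus a $(\C_{\LP},d)$-modulator for every $d\geq 0$.

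For the kernelization, I would invoke Lemma~\ref{lemma::kernel_ed} with $\C:=\C_{\LP}$ and verify its preconditions. First, $\C_{\LP}$ is robust because $\OPT$ and $\LP$ are additive over connected components. Second, $\bsd{\C_{\LP}}{d}$ is bounded, which is exactly Corollary~\ref{corollary::bound_C_LP}. Third, \VC is polynomial-time solvable on $\C_{\LP}$, since there $\OPT=\LP$ and the FPT algorithm parameterized by the LP-gap runs in polynomial time when the gap is~$0$. Fourth, \VC is polynomial-time solvable on $\C_{\LP}+c$ for every constant $c$, by the same algorithm together with the observation (made just before Corollary~\ref{corollary::bound_C_LP}) that every graph in $\C_{\LP}+c$ has LP-gap at most $c$.

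The remaining and main obstacle is to supply the base kernel required by Lemma~\ref{lemma::kernel_ed}: a randomized polynomial kernel for \VC parameterized by the size of a $\C_{\LP}$-modulator $\hat{X}$. Here I would again exploit $\OPT(G)-\LP(G)\leq|\hat{X}|$ whenever $G-\hat{X}\in\C_{\LP}$, so it suffices to invoke any (randomized) polynomial kernel for \VC parameterized by the LP-gap $\OPT-\LP$; such a kernel is available in the literature and underpins the blocking-set machinery that produces Corollary~\ref{corollary::bound_C_LP} via~\cite{HolsK17}. Plugging this base kernel into Lemma~\ref{lemma::kernel_ed} then yields the desired randomized polynomial kernel for \VC parameterized by the size of a $(\C_{\LP},d)$-modulator, completing the proof.
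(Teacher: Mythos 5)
Your proposal matches the paper's proof almost step for step: the same observation that the empty graph lies in $\C_{\LP}$ gives $\ed{\C_{\LP}}{G}\le\td{G}$, the same invocation of Lemma~\ref{lemma::bound_modulator_OPT_LP} for the LP-gap bound, and the same reduction of the base kernel to the known randomized polynomial kernel for \VC parameterized by $\OPT-\LP$ (which the paper attributes to~\cite{KratschW12}, not~\cite{HolsK17}) together with the LP-gap argument for polynomial-time solvability on $\C_{\LP}+c$ and Corollary~\ref{corollary::bound_C_LP} as inputs to Lemma~\ref{lemma::kernel_ed}. Your only slip is citing~\cite{HolsK17} in connection with the base kernel, whereas that reference is for the blocking-set bound and~\cite{KratschW12} is the correct source for the kernel; the mathematical content is otherwise identical.
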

\begin{proof}
    The empty graph is contained in $\C_\LP$ because both integral and fractional vertex cover size are zero. It follows directly that the elimination distance to $\C_\LP$ is upper bounded by the treedepth, i.e., $\ed{\C_\LP}{G}\leq \td{G}$, and that every $d$-treedepth modulator of $G$ is also a $(\C_\LP,d)$-modulator of $G$.

    We showed in Lemma \ref{lemma::bound_modulator_OPT_LP} that there exists a $\C_\LP$-modulator in $G$ of size at most $2 \cdot (\OPT(G)-\LP(G))$. This modulator is also a $(\C_\LP,0)$-modulator of $G$. Hence, the size of an optimum $(\C_\LP,d)$-modulator of $G$ is at most the size of a $\C_\LP$-modulator of $G$ which is at most $2 \cdot (\OPT(G)-\LP(G))$.

    Now, we will show that \VC parameterized by the size of a $(\C_\LP,d)$-modulator admits a polynomial kernel. It holds that $\bsd{\C}{d}$ is bounded (see Corollary \ref{corollary::bound_C_LP}). Furthermore, \VC parameterized by the size of a $\C_\LP$-modulator admits a polynomial kernel because the size of the modulator is at most the difference between an optimum vertex cover and an optimum \LP solution, and because \VC parameterized by the difference between an optimum vertex cover and an optimum \LP solution has a (randomized) polynomial kernel \cite{KratschW12}.

    Furthermore, we can show that \VC is solvable in polynomial time on graphs from $\C_{\LP} + c$. Let $G$ be a graph from $\C_{\LP} + c$ and let $X\subseteq V(G)$ such that $G-X\in\C_{\LP}$ and $|X|=c$. Then
    $\vcopt(G) \leq \vcopt(G-X) + |X|$ and $\LP(G) \geq \LP(G-X)$. Thereby, $\vcopt(G) - \LP(G) \leq \vcopt(G-X) + |X| - \LP(G-X) \leq c$. Since \VC parameterized by $\vcopt(G) - \LP(G)$ is FPT \cite{NarayanaswamyRRS12}, it follows that when $\vcopt(G) - \LP(G)$ is constant, the problem is solvable in polynomial time.

    Thus, it follows from  Lemma \ref{lemma::kernel_ed} that \VC parameterized by the size of a $(\C_\LP,d)$-modulator admits a (randomized) polynomial kernel.
\end{proof}

\section{Conclusion}\label{section::conclusion}

In the first part (Section~\ref{section::relation}) we have showed that bounded minimal blocking set size in \C is necessary but not sufficient to get a polynomial kernel for \VC when parameterized by the size of a modulator $X$ to a robust (or at least union-closed) class \C. We then showed that bounded minimal blocking set size suffices to efficiently reduce the number of components of $G-X$ assuming that $\C$ is robust (so deletion of components lets $G-X$ stay in \C) and that we can efficiently compute optimum vertex covers and test blocking sets in graphs of \C. The obtained bound of $\Oh(|X|^{\bsc{\C}})$ components is likely optimal because it matches the size lower bound proved earlier, which requires only components of constant size.

In the second part we first proved bounds for the minimal blocking set size relative to elimination distances to classes \C, motivated by the bounds that Bougeret and Sau~\cite{BougeretS17} obtained relative to treedepth (Section~\ref{section:bounded-blocking-sets}). We obtain the exact value for all hereditary classes \C and slightly weaker upper bounds for certain non-hereditary classes \C. This enabled new polynomial kernelization results for \VC that effectively replace (the size of) a modulator to a class \C to modulators to graphs of bounded elimination distance to \C, e.g., when $\C$ is the class of forests, bipartite graphs, or $C_{\LP}$ (where integral and fraction vertex cover size coincide).

As future work it would be great to get a similar kernelization result when parameterized by the size of a modulator to bounded elimination distance to the graph class $\C_{2\LP-\MM}$ where $\OPT=2\LP-\MM$ (i.e., minimum vertex cover size equals two times fractional cost minus size of a maximum matching, cf.~\cite{GargP16}), which relates to the randomized kernelization for the corresponding above guarantee parameterization~\cite{Kratsch18}. This would essentially subsume and generalize all currently known polynomial kernelizations for \VC (to which we came close with the result for bounded elimination distance to $\C_{\LP}$). It would also be nice to have tight bounds for the maximum size of minimal blocking sets in the non-hereditary case, and to get such bounds with fewest possible technical assumptions.



\newpage
\bibliography{lit}

\end{document}